\newtheorem{lemma}{Lemma}
\newtheorem{thm}{Theorem}
\newtheorem{cor}{Corollary}
\newtheorem{conj}{Conjecture}
\newtheorem{definition}{Definition}
\newcommand{\ip}[2]{\langle #1, #2\rangle}
\newcommand{\R}{\mathbb{R}}
\newcommand{\E}{\mathbb{E}}
\newcommand{\ones}{\bm{1}}
\newcommand{\eps}{\varepsilon}
\newcommand{\OPT}{\textsc{OPT}}
\definecolor{mygray}{gray}{0.6}
\newcommand{\remove}[1]{}
\newcommand{\e}{\eps}
\newcommand{\Ko}{\mathcal{K}_{(o)}} %Family of convex sets with 0 in the interior
\newcommand{\cG}{\mathcal{G}}
\newcommand{\conv}{conv}
\newenvironment{manualtheorem}[1]{%
  \manualtheoreminner
}{\endmanualtheoreminner}
\newcounter{mynotes}
\newcommand\subsetsim{\mathrel{\ooalign{\raise0.2ex\hbox{$\subset$}\cr\hidewidth\raise-0.8ex\hbox{\scalebox{0.9}{$\sim$}}\hidewidth\cr}}}
\DeclareMathOperator{\argmax}{argmax}
\DeclareMathOperator{\argmin}{argmin}
\renewcommand{\cite}[1]{\citep{#1}}
\renewcommand{\citet}[1]{\citep{#1}}
\tikzstyle{process} = [rectangle, minimum width=1cm, minimum height=1cm, text width=3cm, text centered, draw=black]
\tikzstyle{arrow} = [thick,->,>=stealth,double]
\tikzstyle{rlarrow} = [thick,<-,>=stealth,double]
\tikzstyle{dbarrow} = [thick,<->,>=stealth,double]
\tikzstyle{dotarrow} = [->,>=stealth, draw=gray, dashed]
\title{Curvature of Feasible Sets in Offline and Online Optimization}
\author{Marco Molinaro}
\date{}
\begin{document}

\maketitle

\begin{abstract}%
		It is known that the curvature of the feasible set in convex optimization allows for algorithms with better convergence rates, and there has been renewed interest in this topic both for offline as well as online problems.	In this paper, leveraging results on geometry and convex analysis, we further our understanding of the role of curvature in optimization:
		
		\begin{itemize}
			\item We first show the equivalence of two notions of curvature, namely strong convexity and gauge bodies, proving a conjecture of Abernethy et al. As a consequence, this show that the Frank-Wolfe-type method of Wang and Abernethy has accelerated convergence rate $O(\frac{1}{t^2})$ over strongly convex feasible sets without additional assumptions on the (convex) objective function.
			
			\item In Online Linear Optimization, we identify two main properties that help explaining \emph{why/when} Follow the Leader (FTL) has only logarithmic regret over strongly convex sets. This allows one to directly recover a recent result of Huang et al., and to show that FTL has logarithmic regret over strongly convex sets whenever the gain vectors are non-negative. 
			
			\item We provide an efficient procedure for approximating convex bodies by strongly convex ones while smoothly trading off approximation error and curvature. This allows one to extend the improved algorithms over strongly convex sets to general convex sets. As a concrete application, we extend the results of Dekel et al. on Online Linear Optimization with Hints to general convex sets. 
		\end{itemize}
\end{abstract}

%############################################################
%############################################################

	\section{Introduction}
	
	Curvature is one of the most fundamental geometric notions with fascinating connections with many different phenomena. There has been much interest in the influence of curvature on computational and statistical efficiency in optimization and machine learning, with the use of notions of curvature such as \emph{strong convexity} and \emph{gauge bodies} in convex optimization~\citep{levitinPolyak,demyanov,dunn,garberHazan,abernethyEtal}, and \emph{uniform convexity}/\emph{martingale-cotype} (and their dual notions \emph{uniform smoothness}/\emph{martingale-type}) in online and statistical leaning ~\citep{srebroSriTew,regretMartingale,liuLugosiNeuTao,fosterParamFree}. 
	
	Our goal is to better understand the relationship between different notions of curvature and their effect in optimization. We briefly discuss some of the known results in order to point out the specific limitations of current knowledge that we address in this paper. 
	
	\paragraph{Curvature and the Frank-Wolfe method.} Consider a general convex optimization problem
	\begin{align}
		\min ~&f(x)\notag\\
    \textrm{st}\, ~& x \in K, \label{eq:offline}
	\end{align}
	where $f : \R^d \rightarrow \R$ is a convex function and $K \subseteq \R^d$ a convex set. 
	An important procedure for solving such convex programs is the Frank-Wolfe method~\citep{frankWolfe}: in each iteration it solves the linearized version of the problem to obtain a ``direction'' $\tilde{x}_t := \argmin\{\ip{\nabla f(x_{t-1})}{x} : x \in K\}$, where $x_{t-1}$ is the iterate of the previous iteration, and sets the new iterate as $x_t := x_{t-1} + \eta\, (\tilde{x}_t - x_{t-1})$ for some stepsize $\eta > 0$. Because this method only requires optimization of \emph{linear} functions in each iteration, 
and in particular does not require a (non-linear) projection onto the feasible region $K$ as most other methods do, it has  gained much interest in applications to large-scale problems arising in machine learning~\citep{jaggiMatrixCompl,lacosteJulien,Harchaoui2015,robustMatrixRecovery,blockFW,FWapp2}. This method is known to have convergence rate of order $\frac{1}{t}$, i.e., after $t$ iterations it produces a feasible solution of value $+ O(\frac{1}{t})$ compared to the optimal solution, and this is tight in general~\citep{jaggi}. 
	
	However, since the seminal work of Polyak in the 60's, it is known that when the feasible set $K$ is \emph{suitably curved} much better convergence rates are possible~\citep{levitinPolyak,demyanov,dunn,garberHazan}. The common notion of curvature in this context is that of \emph{$\lambda$-strongly convex sets}: for all pairs of points $x,y$, the set needs to contain a large enough ball centered at $\frac{x+y}{2}$. We present a slightly generalized definition that can use another convex body $C$ instead of the Euclidean ball. Recall that given a convex body $C$ with the origin in its interior, its \emph{gauge} is the function $\|\cdot\|_C$ given by
		\begin{align}
		\|x\|_C := \inf \{\lambda > 0 : x \in \lambda C\}. \label{eq:gauge}
		\end{align}
%		\mnote{If only need for symmetric sets, just def gauge for symmetric sets, so can tone down our symmetry assumption later}

	\begin{definition}[Strongly Convex Set~\citep{polyak}] \label{def:SC}
		Let $C$ be a convex body with the origin in its interior. A convex body $K$ is \textbf{\emph{$\lambda$-strongly convex}} with respect to $C$ if for every $x,y \in K$ we have the containment 
		\begin{align}
		\frac{x + y}{2} + \lambda \, \|x-y\|_C^2 \cdot C\subseteq K. \label{eq:defSC}
		\end{align} 
%where $\|\cdot\|_C$ is the gauge function of $C$ (see Section \ref{sec:prelim} for definition). 
	\end{definition}

\begin{figure}[htp]
	\centering		
\includegraphics[width=0.33\textwidth]{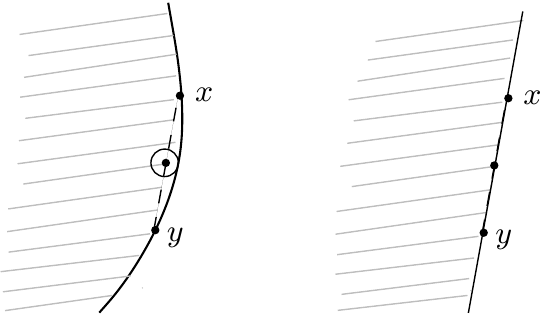}\vspace{-10pt}
	\caption{\small The image on the left shows part of a strongly convex set, with the circle representing a homothetic copy of $C$ centered at $\frac{x+y}{2}$ still contained in the set. The image on the right depicts part of a non-strongly convex set.}
	\label{fig:stronglyConvex}
\end{figure}
		
%	Intuitively, this definition is ruling out flat parts on the boundary of the body (see~\citep{curvedFTL} for a formal connection). Examples of strongly convex sets include $\ell_p$, Schatten $\ell_p$, and group $\ell_{p,s}$ balls for $p,s \in (1,2]$~\citep{garberHazan}.

	%\citet{polyak} considered optimization over strongly convex and showed that under the restrictive assumption that the gradient of the objective function is lower bounded by a positive constant everywhere in the feasible set, the Frank-Wolfe method converges at rate $\frac{1}{e^{\Theta(t)}}$. 
	
	Garber and Hazan \citet{garberHazan} recently showed that as long as the feasible set is strongly convex and the \emph{objective function is a strongly convex function} (see Definition \ref{def:SCF}), the Frank-Wolfe method has accelerated convergence rate $O(\frac{1}{t^2})$. Previous results gave better convergence rates but under additional assumptions.

	However, it seems that even the assumption of strong convexity of the objective function should not be necessary for accelerated convergence rates: the curvature of the feasible set should supersede the curvature of the objective function. In fact, \cite{abernethyAcc} introduced a class of curved convex sets called \emph{gauge sets} and showed that this is indeed the case for them. 
	
%	In a separate line of work,~\cite{abernethyWang} showed that the Frank-Wolfe method can be seen more abstractly as a solution for a convex-concave game where the primal and dual parties play using online learning algorithms, with the regret of these algorithms translating directly to the convergence rate of the Frank-Wolfe-type algorithm. Moreover, in~\citep{abernethyEtal} the authors consider this setup when the feasible set $K$ has the following special property.

	\begin{definition}[Gauge Set~\citep{abernethyEtal}]
		A convex body $K$ with the origin in its interior is a \textbf{\emph{gauge set}} of modulus $G$ with respect to a norm $\|\cdot\|$ if its gauge function squared $\|\cdot\|_K^2$ is a $G$-strongly convex function with respect to  $\|\cdot\|$. 
	\end{definition}
	
	Wang and Abernethy~\citet{abernethyAcc} showed that as long as the feasible region is a gauge set, there is a Frank-Wolfe-type algorithm with convergence rate $O(\frac{1}{t^2})$.
%	\begin{thm}[\cite{abernethyAcc}] \label{thm:abernethyAcc}
%		Consider the problem \eqref{eq:offline}. If $K$ is a \semib{gauge set} and $f$ is strongly smooth, then there is a Frank-Wolfe-type algorithm with convergence rate $O(\frac{1}{t^2})$. Each iteration still only requires minimizing a single linear function over $K$. 
%	\end{thm}
	%The fact this algorithm only uses one linear optimization  oracle call per iteration is quite surprising, since FTRL iteration requires optimizing a \textbf{strongly convex} function (due to the regularizer term $\|\cdot\|_K^2$). This happens because of the integration between this specific regularizer and the feasible set $K$. 	
	%The drawback of this result is that the definition of gauge sets seems less intuitive. 
	While on one hand this result removes the strong convexity requirement of the objective function, on the other it makes a possibly stronger assumption on the feasible set, since the class of gauge sets is contained in that of strongly convex sets~\citep{garberHazan}. However, all standard examples of strongly convex sets such as $\ell_p$, Schatten $\ell_p$, and group $\ell_{p,s}$ balls for $p,s \in (1,2]$, are also gauge sets. This has led Abernethy et al.~\cite{abernethyEtal} to make the following conjecture:
	
	\begin{conj}[\cite{abernethyEtal}] \label{conj:abernethy}
		A convex body $K$ containing the origin in its interior is a gauge set w.r.t. its gauge $\|\cdot\|_K$ if and only if it is strongly convex w.r.t. $K$ itself. 
	\end{conj}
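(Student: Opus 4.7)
The forward direction (gauge set $\Rightarrow$ strongly convex w.r.t.\ $K$) is a direct specialization of the result of Garber and Hazan \cite{garberHazan} that every gauge set is strongly convex. For the converse, assume $K$ is $\lambda$-strongly convex with respect to $K$ itself and write $h := \|\cdot\|_K$. Plugging $z := ((x+y)/2)/h((x+y)/2) \in K$ into the defining containment $(x+y)/2 + \lambda\, h(x-y)^2 \cdot K \subseteq K$ yields the midpoint inequality
$$h\!\left(\tfrac{x+y}{2}\right) + \lambda\, h(x-y)^2 \leq 1, \qquad \forall\, x, y \in K.$$

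The plan is to upgrade this first-order inequality to the strong convexity of $h^2$ via the subgradient characterization. For $x \in \partial K$ and any $\phi \in \partial h(x)$ (so $\phi \cdot x = 1$ and $\phi \cdot z \leq h(z)$ for all $z$), combining the midpoint inequality applied to $x$ and any $b \in \partial K$ with the subgradient property $\phi \cdot ((x+b)/2) \leq h((x+b)/2)$ gives
$$\phi \cdot b \;\leq\; 1 - 2\lambda\, h(x-b)^2.$$
The $G$-strong convexity of $h^2$ with respect to $h$ is equivalent (in subgradient form) to $h(x)^2 + h(y)^2 - 2 h(x)\, \phi \cdot y \geq (G/2)\, h(x-y)^2$ for all $x \neq 0$, $y$, and $\phi \in \partial h(x)$. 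Normalizing $h(x) = 1$, writing $y = s b$ with $s := h(y) \in [0, 1]$ and $h(b) = 1$, and substituting the bound on $\phi \cdot b$, this reduces to
$$(1-s)^2 + 4 s\lambda\, h(x-b)^2 \;\geq\; \tfrac{G}{2}\, h(x-sb)^2.$$

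The final step is to verify this for some $G = G(\lambda) > 0$ uniformly in $s, x, b$. Using the triangle bound $h(x-sb) \leq \min(h(x-b) + (1-s),\, 1+s)$ and treating the resulting expression as a quadratic in $u := h(x-b) \in [0, 2]$, a case analysis on the sign of the leading coefficient $4s\lambda - G/2$ (governing the convexity of the parabola) combined with endpoint evaluations at $u \in \{0, 2\}$ establishes the inequality for some $G > 0$ depending on $\lambda$.

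The main obstacle is this last step, particularly the regime where $h(x)$ and $h(y)$ differ substantially ($s$ near $0$ or $1$): the naive triangle bound on $h(x-sb)$ is loose and the resulting constant $G$ is not sharp---one loses a factor compared to the optimal $G = 2$ achieved by the Euclidean ball---but positivity of $G$ is all the conjecture requires. An alternative route bypasses this case analysis via a second-order Hessian argument at smooth boundary points of $K$: from the midpoint inequality one extracts the tangential bound $v^\top \nabla^2 h(x)\, v \geq 8\lambda\, h(v)^2$ by Taylor-expanding along a curve $\gamma(\pm t)$ on $\partial K$ with $\gamma'(0) = v$; combined with the identity $\nabla^2(h^2) = 2\,(\nabla h)(\nabla h)^\top + 2 h\, \nabla^2 h$ and the homogeneity identities $\nabla h(x) \cdot x = h(x)$ and $\nabla^2 h(x)\, x = 0$, one obtains a Hessian lower bound of the form $v^\top \nabla^2(h^2)(x)\, v \geq \min(1, 8\lambda)\, h(v)^2$, with the non-smooth case handled by approximating $K$ by smooth strongly convex bodies and passing to the limit.
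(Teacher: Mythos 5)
You correctly identify that the forward direction is Lemma~3 of Garber--Hazan (the same citation the paper uses), and the midpoint inequality $h\big(\tfrac{x+y}{2}\big) + \lambda\, h(x-y)^2 \le 1$ you derive is exactly the paper's Lemma~\ref{lemma:SCUC}: it says $\|\cdot\|_K$ is $(2,\lambda)$-convex. From there the routes diverge, and your main line of attack has a gap precisely where the paper flags the real difficulty. Upgrading the subgradient bound $\phi \cdot b \le 1 - 2\lambda\, h(x-b)^2$ to $(1-s)^2 + 4s\lambda u^2 \ge \tfrac{G}{2} h(x-sb)^2$ by a quadratic case analysis in $u := h(x-b)$ implicitly uses the bounds $h(x-sb) \le 1+s$ and $u \le 2$, both of which require $h(-b) \le h(b)$, i.e.\ central symmetry of $K$. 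For a general gauge one only has $h(-b) \lesssim R/r$, so these bounds loosen and the case analysis must be redone; as written the argument only covers norms, where the implication is already classical (Lindenstrauss--Tzafriri). This is the asymmetry obstruction the paper explicitly singles out with ``without symmetry it is not clear how to carry out the usual proofs.''

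Your alternative Hessian sketch is closer in spirit but differs structurally from what the paper does. You extract a \emph{tangential} bound $v^\top \nabla^2 h(x) v \ge 8\lambda h(v)^2$ for $\nabla h(x)\cdot v = 0$ by Taylor-expanding along a boundary curve, and then must combine it with the rank-one term by decomposing a general direction as $v = \alpha x + w$ with $w$ tangential; controlling the cross term (and the sign of $\alpha$) is again delicate for an asymmetric gauge. The paper sidesteps this by dualizing first: $(2,\lambda)$-convexity of $\|\cdot\|_K$ is equivalent (Lemma~\ref{lemma:CSduality}) to $(2,\tfrac{1}{16\lambda})$-smoothness of $\|\cdot\|_{K^\circ}$, and 2-smoothness, being a condition over all perturbation directions $y$ rather than only tangential ones, yields the full Hessian bound $y^\top \nabla^2 \|x\|_{K^\circ}\, y \le 2D\|y\|_{K^\circ}^2$ directly, with no tangential decomposition or case split. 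Combined with the chain-rule identity $\nabla^2 \|x\|^2 = 2(\nabla\|x\|)(\nabla\|x\|)^\top + 2\|x\|\nabla^2\|x\|$ this gives strong smoothness of $\|\cdot\|_{K^\circ}^2$ (Lemma~\ref{lemma:smoothSmooth}), and Fenchel conjugation (Lemmas~\ref{lemma:commSquare} and~\ref{lemma:fenchelSSSC}) transfers this back to strong convexity of $\|\cdot\|_K^2$, yielding the clean modulus $G=2\lambda$. Finally, your ``approximate by smooth strongly convex bodies and pass to the limit'' remark does flag the right issue, but the paper's mollification is nontrivial: it must be shown to preserve $(2,D)$-smoothness up to a $(1+O(\eps))$ factor, which occupies its own lemma.
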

	
	This is one of the gaps in our understanding of curved sets that we address in this paper. Before additional spoilers, we also briefly discuss the role of these sets in \emph{online} optimization.

%###########################################################
%###########################################################
	
	\paragraph{Curvature in online optimization.}	Now consider the \emph{Online Linear Optimization} problem~\citep{OCObook}: A convex set $K$ is given upfront, and objective functions $g_1,g_2,\ldots,g_T$ are revealed one-by-one in an online fashion. In each time step $t$, the algorithm needs to produce a point $x_t \in K$ using the information revealed up to this moment; only \emph{after} that, the  adversary reveals a gain vector $g_t$ from a set $\cG$, and the algorithm receives gain $\ip{g_t}{x_t}$. The goal of the algorithm is to maximize its total gain $\sum_{t = 1}^T \ip{g_t}{x_t}$. Its \emph{regret} for this instance is the missing gain compared to the best fixed action in hindsight: $$\textrm{Regret} := \max_{x \in K} \sum_{t = 1}^T \ip{g_t}{x} - \sum_{t = 1}^T \ip{g_t}{x_t}.$$ We are interested in designing algorithms with provable upper bounds on their worst-case regret. 
	
	This problem, and its generalization with convex objective functions, has a vast literature with applications to a host of areas, from online shortest paths and dynamic search trees~\citep{kalai}, to portfolio optimization~\citep{OLPS15}, to robust optimization~\citep{robustOpt}, and many others. 	
%	 (see for example~\citep{AHK12,OCObook,robustOpt,invOpt}). 
	It is known that as long as the playing set $K$ and the gain vector set $\cG$ are bounded one can obtain order $\sqrt{T}$ regret, and in general this cannot be improved~\citep{OCObook}. On the other hand, when the gain functions are curved (e.g., strongly concave or exp-concave) instead of the linear ones $\ip{g_t}{\cdot}$, it is possible to obtain a much improved order $\log T$ regret~\citep{OCObook}. 
	
	Interestingly,~\citet{curvedFTL} recently showed that one can also obtain this improved order $\log T$ regret when the \emph{playing set $K$} is curved instead; however, they require the additional ``growth condition'' on the gains that $\|g_1 + \ldots + g_t\|_2 \ge t G$ for some $G$ and all $t$. The standard 1-dimensional bad example for Online Linear Optimization shows that an assumption like this growth condition is necessary~\citep{curvedFTL}. It is less clear \textbf{why} this is the case.

	% that the groth
	
%	
%	
% 	There are two interesting features of this result. First, we know FTL in non-curved sets only achieves the trivial regret of order $T$. Moreover, 
% 	
% 	. This result indicates some connection between the role of curvature in the loss function and in the playing set $K$. 
% 	
% 	\medskip
% 	
%	\red{Remove this? On one hand, only used as application of curved, on the other shows one more example of use of curvature (but need to introduce one more notion, etc., may not pay off)}  

%	\blue{Curved sets also have applications in dynamic games (?), see Polovinkin.}
 	
%###########################################################
%###########################################################

	\subsection{Our Results}
	
	Leveraging tools from convex geometry and analysis, we further our understanding of the role of curvature in offline and online optimization. 
	
	\paragraph{Equivalence of strongly convex and gauge sets.} We first observe that Conjecture~\ref{conj:abernethy} of Abernethy et al.~\cite{abernethyEtal} on the equivalence of strongly convex and gauge sets is true.
	
	\begin{thm} \label{thm:gauge}
		Conjecture \ref{conj:abernethy} is true: if the convex body $K \subseteq \R^d$ containing the origin in its interior is $\lambda$-strongly convex with respect to itself,
%\footnote{Using the John Ellipsoid Theorem~\citep{john} on approximations of convex bodies, one can consider strong convexity with respect to other convex bodies, incurring an additional dimension-dependent factor.}
 then $K$ is a gauge set with respect to $\|\cdot\|_K$ with modulus $G = 2\lambda$. 
	\end{thm}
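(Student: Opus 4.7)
The plan is to deduce $2\lambda$-strong convexity of the squared gauge from set-theoretic strong convexity of $K$ in two stages: a core boundary inequality, then a homogeneity-based extension to arbitrary points. As a starting point, for any $u, v$ with $\|u\|_K = \|v\|_K = 1$, I apply Definition~\ref{def:SC} (with $C = K$) to the pair $(u, v) \in K \times K$, obtaining the inclusion $\frac{u+v}{2} + \lambda \|u-v\|_K^2\, K \subseteq K$. Writing $m := \frac{u+v}{2}$ and $s := \|m\|_K$, when $s > 0$ the test vector $w := m/s \in \partial K$ lies on the same ray as $m$; the inclusion evaluated at $w$ reads $(s + \lambda\|u-v\|_K^2)\,w \in K$, and positive homogeneity of $\|\cdot\|_K$ collapses this to $\|\tfrac{u+v}{2}\|_K \le 1 - \lambda\|u-v\|_K^2$. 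The degenerate case $m = 0$ (i.e., $u = -v$) is handled by direct inspection.

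To extend the inequality to arbitrary nonzero $x, y$, set $a = \|x\|_K$, $b = \|y\|_K$, $u = x/a$, $v = y/b$, and $\alpha = a/(a+b)$, so $\tfrac{x+y}{2} = \tfrac{a+b}{2}\,(\alpha u + (1-\alpha) v)$. A direct scaling by $\max(a,b)$ turns out to be too lossy, so I would first upgrade the boundary inequality to general convex combinations by noting that the inradius function
\[
g(\alpha) \;:=\; \sup\{\mu \ge 0 : \alpha u + (1-\alpha) v + \mu K \subseteq K\}
\]
is concave on $[0,1]$ (from convexity of $K$) with $g(0), g(1) \ge 0$ and $g(1/2) \ge \lambda\|u-v\|_K^2$; concavity forces $g(\alpha) \ge 2\min(\alpha, 1-\alpha)\,\lambda\|u-v\|_K^2$, and repeating the ray-alignment argument then gives $\|\alpha u + (1-\alpha)v\|_K \le 1 - 2\min(\alpha, 1-\alpha)\lambda\|u-v\|_K^2$. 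Multiplying by $\tfrac{a+b}{2}$, squaring, using the triangle inequality on the decomposition $bx - ay = b(x-y) + (b-a)y$ to bound $\|u-v\|_K$ from below in terms of $\|x-y\|_K$, and invoking the elementary identity $\tfrac12(a^2+b^2) = \bigl(\tfrac{a+b}{2}\bigr)^2 + \tfrac{(a-b)^2}{4}$ should yield the target inequality
\[
\|\tfrac{x+y}{2}\|_K^2 \;\le\; \tfrac12\bigl(\|x\|_K^2 + \|y\|_K^2\bigr) - \tfrac{\lambda}{4}\|x-y\|_K^2,
\]
which is precisely $2\lambda$-strong convexity of $\|\cdot\|_K^2$ with respect to $\|\cdot\|_K$.

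The geometric heart of the proof is the first step; the real bookkeeping lives in the extension. The main obstacle I anticipate is tracking constants through the asymmetric case $a \ne b$: the lower bound on $\|u-v\|_K$ in terms of $\|x-y\|_K$ acquires a deficit proportional to $|a-b|$, and it is not a priori clear that this deficit (together with the fourth-order term arising from squaring the bound in Step~1) is absorbed by the $\tfrac{(a-b)^2}{4}$ slack on the target's right-hand side. The cleanest verification is probably to handle the normalized case $a = b$ first (where the bound is already sharp with room to spare), and then view the general case as a perturbation of it, using the one-sided gauge triangle inequality $\|x\|_K - \|y\|_K \le \|x-y\|_K$ to keep the residual terms in check.
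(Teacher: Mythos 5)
Your proposal takes a genuinely different route from the paper, and it is exactly the route the paper explicitly avoids. The paper remarks (Section~\ref{sec:equiv}) that in the symmetric case it is known that 2-convexity of $\|\cdot\|_K$ is equivalent to strong convexity of $\|\cdot\|_K^2$, but that ``without symmetry it is not clear how to carry out the usual proofs of this last equivalence,'' and therefore it goes through the dual: strong convexity of $K$ $\Rightarrow$ 2-convexity of $\|\cdot\|_K$ $\Rightarrow$ 2-smoothness of $\|\cdot\|_{K^\circ}$ $\Rightarrow$ strong smoothness of $\|\cdot\|_{K^\circ}^2$ (via a Hessian bound, the technical heart of the paper) $\Rightarrow$ strong convexity of $\|\cdot\|_K^2$ by Fenchel duality. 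You instead attack the direct implication head-on. Your first step recovers Lemma~\ref{lemma:SCUC}; your concavity-of-the-inradius-function argument to upgrade the midpoint inequality to general convex combinations of boundary points is a nice idea that does not appear in the paper and would, if pushed through, yield an alternative, more elementary proof that bypasses Lemma~\ref{lemma:smoothSmooth} entirely.

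However, as written the proposal has a genuine gap, and you acknowledge it yourself: the entire weight of the proof rests on the final bookkeeping step, which you leave unverified (``it is not a priori clear that this deficit\ldots is absorbed by the $\tfrac{(a-b)^2}{4}$ slack''). This is precisely the place where the paper's author says the direct argument is not clear. When one does try to close it, several specific points need attention that the sketch does not supply: (1) the target constant is off — modulus $2\lambda$ at the midpoint requires a deficit of $\tfrac{\lambda}{2}\|x-y\|_K^2$, not $\tfrac{\lambda}{4}\|x-y\|_K^2$; (2) the decomposition $bx - ay = b(x-y) + (b-a)y$ produces, after squaring, a quartic term $\lambda^2\|u-v\|_K^4$ that must be absorbed, and this forces you to invoke two facts that are never stated in the sketch, namely $\lambda\|u-v\|_K^2 \le 1$ (from nonnegativity of the gauge in the boundary inequality) and $\lambda \le \tfrac{1}{8}$ (a fact the paper notes and derives from the sharp 2-convexity modulus of the Euclidean norm); (3) when $a \ne b$ the asymmetry of the gauge matters: the term $(b-a)y$ has $\|{\pm}y\|_K$ appearing, which differs from $b$ unless $K$ is symmetric, so the decomposition must be chosen to keep the residual vector positively scaled (e.g. $x-y = b(u-v) + (a-b)u$ when $a \ge b$); and (4) proving the inequality only at $\alpha = \tfrac12$ gives midpoint strong convexity, and you still need the (standard, but unstated) dyadic iteration plus continuity argument to get the full range of $\alpha$. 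Once all of this is in place the estimate does close, so your route is salvageable — but the heavy lifting that the paper delegates to Lemma~\ref{lemma:smoothSmooth} and Fenchel duality has merely been relocated to the calculation you deferred, not avoided.
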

	
	Lemma 3 of \citep{garberHazan} proves the reverse direction, i.e., if $K$ is a gauge set with respect to $\|\cdot\|_K$ with modulus $G$, then $K$ is $\frac{G}{8}$-strongly convex with respect to itself; so the above theorem indeed completes the characterization from Conjecture \ref{conj:abernethy}.

	The main idea of the proof is to use as a stepping stone another classic notion of curvature introduced by Clarkson~\cite{clarkson} in the context of geometry of Banach spaces, namely \emph{2-convexity} of norms (Definition \ref{def:twoConv}). When $\|\cdot\|_K$ is symmetric (i.e., a norm) is is easy to see that $K$ is a gauge set iff $\|\cdot\|_K$ is 2-convex, and it is known that $\|\cdot\|_K$ is 2-convex iff $\|\cdot\|_K^2$ is strongly convex with respect to $\|\cdot\|_K$~\cite{lindTza}, proving the theorem in the symmetric case. However, without symmetry it is not clear how to carry out the usual proofs of this last equivalence. Instead, we proceed though a longer route that uses the duality between 2-convexity/2-smoothness of gauges and strong convexity/strong smoothness of functions. The advantage is that can more easily prove the equivalence of 2-smoothness of $\|\cdot\|_K^*$ and the strong convexity of $(\|\cdot\|_K^*)^2$ w.r.t. $\|\cdot\|_K^*$ based on a second-order differential argument, that is, using the Hessian $\nabla^2 (\|\cdot\|_K^*)^2$. This seems to bypass the issues of asymmetry because the quadratic form $y \mapsto y^\top (\nabla^2 f(x)) y$ is always symmetric.
	
% The result then follows directly form the fact that $K$ is strongly convex with respect to itself iff its gauge function $\|\cdot\|_K$ is 2-convex with same moduli (Lemma \ref{lemma:SCUC}), and the fact that if $\|\cdot\|_K$ is 2-convex then the squared gauge $\|\cdot\|_K^2$ is strongly convex (for example, Lemma 1.e.10 of \cite{lindTza}, restated in the appendix in Lemma \ref{lemma:lindTza}).  
	
	In addition to clarifying the relationship between these two notions of curvature, it shows that the Frank-Wolfe-type algorithm of \citep{abernethyAcc} is the first to achieve accelerated rates under the standard notion of strong convexity of the feasible set \emph{without any additional assumption on the objective function} (besides convexity).
		
	\begin{cor}
		Consider the problem \eqref{eq:offline}. If $K$ is a strongly convex body, then the Frank-Wolfe-type algorithm of \citep{abernethyAcc} has convergence rate $O(\frac{1}{t^2})$.\footnote{The $O(\cdot)$ hides other parameters that influence the convergence of the algorithm, such as the modulus of strong smoothness of the objective function (which is always finite over compact sets).}
	\end{cor}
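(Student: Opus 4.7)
This corollary is essentially a one-line composition of two ingredients: Theorem~\ref{thm:gauge}, stated just above, and the main convergence result of Wang and Abernethy~\cite{abernethyAcc}. Given that $K$ is $\lambda$-strongly convex with respect to itself, Theorem~\ref{thm:gauge} certifies that $K$ is a gauge set with modulus $G = 2\lambda$, i.e.\ $\|\cdot\|_K^2$ is $(2\lambda)$-strongly convex with respect to $\|\cdot\|_K$. Wang and Abernethy proved that their Frank-Wolfe-type algorithm, run over a gauge set, achieves convergence rate $O(1/t^2)$ under only the assumptions that the objective $f$ is convex and smooth over $K$. Composing the two statements gives exactly the claim.

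The only work that remains is bookkeeping of constants and dependencies. The convergence bound of \cite{abernethyAcc} is expressed in terms of the gauge modulus $G$ and a smoothness (and diameter) parameter $L$ of $f$ over $K$; plugging in $G = 2\lambda$ yields an explicit rate of the form $O\!\left(\tfrac{L}{\lambda\, t^2}\right)$. Because $K$ is compact and $f$ is convex, $L$ is automatically finite, so the assumption hinted at in the footnote of the corollary is always satisfied. I would additionally check that the algorithm of \cite{abernethyAcc} depends on $G$ only through parameters (such as step sizes) that can be set once one knows the strong-convexity constant $\lambda$ of $K$; this is the operational place where the equivalence from Theorem~\ref{thm:gauge} actually feeds into the algorithm rather than merely into its analysis.

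The reason this short corollary is worth stating is its contrast with the result of Garber--Hazan~\cite{garberHazan}, which also gives $O(1/t^2)$ convergence over strongly convex feasible sets but crucially requires $f$ itself to be a strongly convex function. Once one has Theorem~\ref{thm:gauge} identifying strong convexity of $K$ with the gauge-set condition, one can drop every curvature assumption on $f$, since the Wang--Abernethy analysis never used any. The main obstacle is therefore concentrated entirely in proving Theorem~\ref{thm:gauge}; the corollary itself is a genuine citation chase, but one whose payoff only becomes visible after that equivalence has been established.
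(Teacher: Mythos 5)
Your proposal is correct and takes the same route the paper does: the corollary is an immediate composition of Theorem~\ref{thm:gauge} (strongly convex $\Rightarrow$ gauge set) with the $O(1/t^2)$ convergence guarantee of Wang--Abernethy for gauge sets, and the paper states it without further proof precisely because this composition is all there is. Your additional remarks on constant bookkeeping and on the footnote's smoothness assumption mirror the paper's own framing, so there is nothing to flag.
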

	
%	The main idea for the proof is to use as a stepping stone another classic notion of curvature introduced by~\cite{clarkson} in the context of geometry of Banach spaces, namely \emph{2-convexity} of norms (Definition \ref{def:twoConv}). The result then follows from observing the simple fact that $K$ is strongly convex with respect to itself iff its gauge function $\|\cdot\|_K$ (equation \eqref{eq:gauge}) is 2-convex (with same moduli), and the fact that if $\|\cdot\|_K$ is 2-convex then the squared gauge $\|\cdot\|_K^2$ is strongly convex. The latter is proved, for example, in~\cite{beauzamy} \mnote{Add Lindenstrauss Tzatzifiri Classic Banach Space II?} for centrally symmetric bodies, and the proof did not keep track of the modulus of strong convexity of $\|\cdot\|_K^2$, but these issues can be easily handled (respectively by symmetrizing only one component in the proof, and appealing to the local nature of strong convexity). %We provide these details in Appendix~\ref{app:conj}.
	
	%Despite the extensive literature on strongly convex sets (\cite{polovinkin,balashovRepovs,balashovRepovs,weber,weber2,vial,surveyStronglyConvex} and many more), this result does not seem to have been considered. 
	
%#############################################################
		
	\paragraph{Online Linear Optimization on curved sets.} Next, we identify two main properties that help explaining \textbf{why} curvature helps in online optimization.
	
	\begin{thm}[Informal principle] \label{thm:principle}
		In Online Linear Optimization, the improved regret guarantees observed in \citep{curvedFTL} for strongly convex playing sets $K$ (attained by the Followed the Leader algorithm) stems from 
		\begin{align*}
			\textrm{Partial Lipschitzness of the support function of $K$} + \textrm{no-cancellation of the gain vectors}.
		\end{align*}	
	\end{thm}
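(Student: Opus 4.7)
The plan is to make the informal principle precise through two technical properties and then combine them via a standard Follow-the-Leader analysis. Recall that FTL plays $x_t = \argmax_{x \in K} \ip{G_{t-1}}{x}$, where $G_{t-1} := \sum_{s < t} g_s$; when $K$ is strictly convex this maximizer is unique and equals $\nabla h_K(G_{t-1})$, with $h_K(g) := \max_{x \in K} \ip{g}{x}$ the support function of $K$. I would formalize the first ingredient, \emph{partial Lipschitzness} of $h_K$ on the cone $\cG$ of admissible gain directions, as a scale-invariant gradient bound of the form
\[
\|\nabla h_K(a) - \nabla h_K(b)\|_2 \;\le\; L \cdot \frac{\|a-b\|_2}{\|a\|_2} \qquad \forall a,b \in \cG;
\]
because $h_K$ is positively $1$-homogeneous and hence $\nabla h_K$ is scale-invariant, this denominator cannot be removed. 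The second ingredient, \emph{no-cancellation} of the gains, is the growth condition $\|G_t\|_2 \ge c\,t$ for every $t$.

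With these two properties in hand, the regret bound follows a textbook three-step route. First, the be-the-leader lemma reduces FTL's regret to $\sum_{t=1}^T \ip{g_t}{x_{t+1} - x_t}$. Second, Cauchy-Schwarz together with partial Lipschitzness applied at $a = G_{t-1}$, $b = G_t$ gives
\[
\ip{g_t}{x_{t+1} - x_t} \;\le\; \|g_t\|_2 \cdot \|\nabla h_K(G_t) - \nabla h_K(G_{t-1})\|_2 \;\le\; L\,\frac{\|g_t\|_2^2}{\|G_{t-1}\|_2}.
\]
Third, plugging in no-cancellation $\|G_{t-1}\|_2 \ge c(t-1)$ and summing yields a bound of order $\sum_{t \ge 1} 1/t = O(\log T)$, assuming the $\|g_t\|_2$ are uniformly bounded. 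The named applications then fall out by verifying which setting supplies which property: the growth condition of \cite{curvedFTL} gives no-cancellation directly, while in the nonnegative-gains case I would exploit coordinate-wise monotonicity of $G_t$ to obtain the appropriate growth (possibly after tailoring the ambient norm so that $\cG$ lies in a sharper cone), which rules out the one-dimensional oscillating example.

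The main obstacle, and where the geometric content of the principle really sits, is establishing partial Lipschitzness of $h_K$ from strong convexity of $K$. The natural route goes through duality: strong convexity of the body $K$ is dual to strong smoothness of a gauge associated with the polar body, and smoothness of a gauge in turn yields Lipschitzness of the gradient of its square. But $h_K$ itself is positively $1$-homogeneous, not a square, so one has to recover a gradient-Lipschitz bound for $h_K$ from the one available for $\tfrac{1}{2}(\|\cdot\|_{K^\circ})^2$, and the extra scale factor produced by differentiating the square is exactly the denominator $\|a\|_2$ appearing in the partial Lipschitzness bound. Making this go through without symmetry assumptions on $K$, so that it applies whenever $K$ is strongly convex in the sense of Definition~\ref{def:SC}, is the most delicate step, and it is precisely here that the gauge-set equivalence of Theorem~\ref{thm:gauge} earns its keep.
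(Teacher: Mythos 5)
Your high-level plan matches the paper's: make the principle precise as a Lipschitz-type bound on $\nabla \sigma_K$ away from the origin, feed it through the be-the-leader inequality (Lemma~\ref{lemma:FTLbasic}), and close with a condition that keeps the cumulative gain $s_t$ bounded away from $0$. Your scale-invariant formulation $\|\nabla \sigma_K(a) - \nabla \sigma_K(b)\| \lesssim \|a-b\|/\|a\|$ is essentially what the paper derives in the proof of Lemma~\ref{lemma:regretLip} from the cleaner ``sphere-Lipschitz'' statement (Lemma~\ref{lemma:lipSphere}), using positive homogeneity of $\nabla\sigma_K$; the two are interchangeable. One smaller divergence: you propose to establish the Lipschitz property from Theorem~\ref{thm:gauge} via the squared-gauge/strong-smoothness duality, whereas the paper just cites the known result of Balashov (Lemma 2.2 there). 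Your route is plausible but is not what the paper does, and it is not needed.

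The substantive gap is in your treatment of no-cancellation for nonnegative gains. You equate no-cancellation with the linear growth condition $\|s_t\| \ge c\,t$ and then claim that in the nonnegative case you would ``exploit coordinate-wise monotonicity of $G_t$ to obtain the appropriate growth.'' This would not go through: nonnegativity of the $g_t$ does \emph{not} imply linear growth of $\|s_t\|$ (take $g_t = e_1/t$, so $\|s_t\|$ grows only like $\log t$). The paper explicitly flags that the two hypotheses are orthogonal. Its actual argument for Theorem~\ref{thm:FTLnew} is different: it sandwiches the norm between linear functionals on $\R^d_+$, i.e., $\|x\| \le f(x) \le C\|x\|$ with $f$ linear, which turns the generic bound $\sum_t \|g_t\|^2/\|s_t\|$ into $\sum_t f(g_t)^2 / (f(g_1)+\cdots+f(g_t))$, a one-dimensional sum that is then controlled by a telescoping estimate (Lemma~\ref{lemma:logEstimate}, which needs only that the $f(g_t)$ are bounded, not that the partial sums grow linearly). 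So the correct abstraction of ``no-cancellation'' here is not linear growth of $s_t$, but the monotonicity of the partial sums of the linearized gains, which is exactly what nonnegativity (or, more generally, lying in a pointed cone) delivers.
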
 
	
	This principle is described and developed in detail in Section \ref{sec:FTL} (see Lemmas \ref{lemma:lipSphere} and \ref{lemma:regretLip} for some formal statements). But at a high level, the first property is intimately related to the stability of the Follow the Leader (FLT) algorithm, which is known to control its regret. However, this Lipschitzness only holds away from the origin. That is why the additional no-cancellation property of the gain vectors is required: it steers the iterates of FTL away from the origin.
	
	This principle gives a simple and clean proof of the $O(\log T)$ regret result of ~\citep{curvedFTL}, where this no-cancellation is achieved through the linear growth assumption on the partial sums of the gain vectors. As another illustration of this principle, we use it to obtain a new result which shows that FTL has logarithmic regret over strongly convex sets when the gain vectors are \emph{non-negative}, without any additional growth assumption (Theorem~\ref{thm:FTLnew}).

\newcommand{\thmFTLnew}{Consider the Online Linear Optimization problem with playing set $K$ and gain set $\cG$. If $K$ is $\lambda$-strongly convex with respect to a norm $\|\cdot\|$ and all vectors $\cG$ are non-negative,\footnote{That is, $\cG \subseteq \R^d_+$. We note that the proof directly generalizes to the case when $\R^d_+$ is replaced by an arbitrary pointed cone.} then FTL has regret at most 
		\begin{align*}
		\frac{C \cdot M}{\lambda} \cdot \log T,
		\end{align*}
		where $M := \max_{g \in \cG} \|g\|$ and $C$ only depends on $\|.\|$.}

	\begin{thm} \label{thm:FTLnew}
		\thmFTLnew
	\end{thm}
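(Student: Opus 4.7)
My plan is to instantiate Theorem~\ref{thm:principle} directly: partial Lipschitzness of the support function of $K$ bounds the regret of FTL by a weighted sum of iterate stabilities, while non-negativity of the gains provides the ``no-cancellation'' that keeps the cumulative gain $G_t := \sum_{s \le t} g_s$ from being small, so the resulting sum is only logarithmic in $T$.

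Concretely, FTL plays $x_t = \argmax_{x \in K}\ip{G_{t-1}}{x} = \nabla h_K(G_{t-1})$, where $h_K(v) := \max_{x\in K}\ip{v}{x}$ is the support function of $K$ (well-defined and differentiable away from the origin because $K$ is strongly convex). The standard Be-the-Leader reduction (Lemma~\ref{lemma:regretLip}) gives $\textrm{Regret} \le \sum_t \ip{g_t}{x_{t+1}-x_t}$. The partial-Lipschitz estimate of Lemma~\ref{lemma:lipSphere}---the first half of Theorem~\ref{thm:principle}, which can be obtained via duality from Theorem~\ref{thm:gauge}---then gives, whenever $G_t \ne 0$,
\[
\|x_{t+1} - x_t\| \;\le\; \frac{C_1}{\lambda}\cdot\frac{\|g_t\|}{\|G_t\|},
\]
with $C_1$ depending only on $\|\cdot\|$. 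Absorbing the single first round into an additive $O(M/\lambda)$ term via the inequality $\diam(K) \le 1/(2\lambda)$ (valid for any $\lambda$-strongly convex body, by applying Definition~\ref{def:SC} to a diametrically opposite pair), the regret is bounded by $\tfrac{C_2}{\lambda}\sum_{t : G_t \ne 0} \|g_t\|^2/\|G_t\|$.

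The use of non-negativity now enters through no-cancellation. Since each $g_s \in \R^d_+$, the coordinates of $G_t$ are monotone nondecreasing in $t$, so $\|G_t\|_1 = \sum_{s \le t}\|g_s\|_1$, and finite-dimensional norm equivalence gives $\|G_t\| \ge c_1 \Sigma_t$ with $\Sigma_t := \sum_{s \le t}\|g_s\|$ and $c_1 = c_1(\|\cdot\|) > 0$. (The extension to a pointed cone $\cG$ mentioned in the footnote is immediate: such a cone lies in an open half-space by separation, and the coordinate normal to that half-space plays the role of the ``$\ell_1$ coordinate'' above.) It remains to bound $\sum_t \|g_t\|^2/\Sigma_t$, which I would do by splitting at the first index $N$ with $\Sigma_N \ge M$: for $t \le N$ the trivial bound $\|g_t\|^2/\Sigma_t \le \|g_t\|$ contributes at most $2M$ in total (since $\Sigma_N \le M$ plus a last overshoot of size at most $M$), while for $t > N$ one uses $\|g_t\|^2/\Sigma_t \le M(\Sigma_t - \Sigma_{t-1})/\Sigma_t \le M \log(\Sigma_t/\Sigma_{t-1})$ and telescopes to $M \log(\Sigma_T/\Sigma_N) \le M \log T$. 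Combining the pieces yields the claimed $O(M\log T/\lambda)$ bound.

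The part I expect to require the most care is the warm-up phase: early on $\|G_t\|$ can be arbitrarily small, so the partial-Lipschitz estimate is vacuous, and the split at $N$ is precisely what sidesteps this at the cost of the constant $2M$. A secondary subtlety is that Lemma~\ref{lemma:lipSphere} is most naturally stated in the dual norm, so one has to check that switching between $\|\cdot\|$ and $\|\cdot\|_*$ costs only a norm-dependent constant that can be folded into the final $C$.
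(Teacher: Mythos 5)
Your proof is correct and takes essentially the same approach as the paper: reduce via Lemma~\ref{lemma:regretLip} to bounding $\sum_t \|g_t\|^2/\|s_t\|$, exploit non-negativity to replace $\|\cdot\|$ on the positive cone by an equivalent linear (hence additive) function so that the denominators telescope, and then apply a $\sum_t a_t^2/b_t = O(A\log T)$ estimate (packaged in the paper as Lemma~\ref{lemma:logEstimate}). The only cosmetic difference is the choice of linear proxy on the cone---the paper uses $f(x)=\ip{u}{x}$ with $u_i=\|e^i\|$, giving $\|x\|\le f(x)\le C\|x\|$ directly, whereas you use $\|\cdot\|_1$ together with finite-dimensional norm equivalence.
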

	
	Again this should be contrasted with the standard regret $O(\sqrt{T})$ for non-strongly convex sets, which cannot be improved even for non-negative gains (even in the special case of prediction with expert advice~\cite{OCObook}).

	 %Note that the non-negativity assumption is just another way of achieving the no-cancellation property. 

%	Finally, we note that the algorithm for OLO with hints of Nika and company also obtains $\log T$ regret for the wider class of strongly convex sets (it is wider because of Lemma \ref{xx}).
%	
%	\begin{thm}
%	\ldots
%	\end{thm}
%
%	Not only this allows us to generalize Nika's result to more general curved playing sets, but it allows to get better parameters when things are better adapted to a norm. For example as a corollary we have the following. (apply with the constant hint $c_t = (\frac{1}{\sqrt{n}},\ldots, \frac{1}{\sqrt{n}})$. 
%	
%	
%	\begin{cor}
%	\ldots.
%	\end{cor}
	
	\paragraph{Making a convex body curved.} In order to extend results obtained for curved set to general sets, we also give an efficient way of transforming an arbitrary convex body $K$ into a curved one while controlling both its curvature as well as its distance to the original set. We use $B(r)$ to denote the Euclidean ball of radius $r$ of appropriate dimension. 
		
	% 	Again the authors show that if the playing set $K$ is ``suitably curved'', one can obtain order $\log T$ regret. The notion of curvature now is the one introduced by~\cite{clarkson} in the geometry of Banach spaces. 

	\begin{thm} \label{thm:curving}
		Consider a convex body $K$ and suppose $B(r) \subseteq K \subseteq B(R)$. Then for all $t \in [0,1]$, there is a convex body $K_t$ with the following properties:
		\vspace{-4pt}
		\begin{enumerate}
			\itemsep0em
			\item (Approximation) $K_t \subseteq K \subseteq \sqrt{1 + \big(\big(\frac{R}{r}\big)^2 - 1\big)\,t^2} \cdot K_t$ \vspace{-4pt}
%			\item (Curvature) The gauge $\|\cdot\|_{K_t}$ is 2-convex with modulus $D = \frac{t^2}{8}$
			\item (Curvature) $K_t$ is $\frac{t^2}{8}$-strongly convex with respect to itself
			\item (Efficiency) Given access to a weak optimization oracle for $K$, weak optimization over $K_t$ can be performed in time that is polynomial in $r,R$, and the desired precision $\delta$ (see Definition \ref{def:weakOpt}). 
		\end{enumerate}
	\end{thm}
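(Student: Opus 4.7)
The plan is to define $K_t$ implicitly via its gauge, interpolating in a ``2-sum'' style between the gauges of $K$ and of the inscribed Euclidean ball $B(r)$:
\[
\|x\|_{K_t}^2 \;:=\; (1-t^2)\,\|x\|_K^2 \;+\; \frac{t^2}{r^2}\,\|x\|_2^2,
\qquad K_t := \{x : \|x\|_{K_t} \le 1\}.
\]
The right-hand side is a nonnegative, convex, positively $2$-homogeneous function, so a Cauchy--Schwarz manipulation shows its square root is sublinear, hence a genuine gauge; $K_t$ is then a convex body with the origin in its interior, and in fact contains $B(r)$. At $t=0$ this recovers $K$; at $t=1$ it recovers $B(r)$.

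The approximation bound is immediate from the containments $B(r) \subseteq K \subseteq B(R)$, which translate pointwise to $\|x\|_2/R \le \|x\|_K \le \|x\|_2/r$. Plugging the upper bound into the definition yields $\|x\|_{K_t}^2 \ge \|x\|_K^2$, giving $K_t \subseteq K$; plugging in the lower bound yields $\|x\|_{K_t}^2 \le \bigl(1 + ((R/r)^2 - 1)t^2\bigr) \|x\|_K^2$, giving $K \subseteq \sqrt{1 + ((R/r)^2 - 1)t^2}\, K_t$.

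For curvature, I would show that the squared gauge $\|\cdot\|_{K_t}^2$ is a strongly convex function and then invoke the ``gauge set $\Rightarrow$ strongly convex set'' direction (Lemma 3 of \citep{garberHazan}). The function $\frac{t^2}{r^2}\|x\|_2^2$ has Hessian $\frac{2t^2}{r^2}\,I$ and is thus $2t^2$-strongly convex with respect to the norm $\|\cdot\|_2/r$; adding the convex term $(1-t^2)\|x\|_K^2$ preserves this. Since $\|\cdot\|_{K_t} \le \|\cdot\|_2/r$ pointwise by construction, strong-convexity moduli transfer unchanged to the smaller gauge (the defining inequality only strengthens), so $\|\cdot\|_{K_t}^2$ is $2t^2$-strongly convex with respect to $\|\cdot\|_{K_t}$. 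The Garber--Hazan implication then gives $t^2/4$-strong convexity of $K_t$, stronger than the claimed $t^2/8$.

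For efficiency, I would invoke standard Gr\"otschel--Lov\'asz--Schrijver oracle reductions. The weak optimization oracle for $K$ yields a weak membership oracle in polynomial time; membership plus binary search along rays through the origin then approximates $\|x\|_K$ to precision $\delta$ in $O(\log(R/r) + \log(1/\delta))$ oracle calls. Combined with the exact evaluation of $\|x\|_2$, this produces an approximate evaluator for $\|\cdot\|_{K_t}$, hence a weak membership oracle for $K_t$, which a final GLS step converts back to weak optimization. The required inscribed/circumscribed radii $B(r) \subseteq K_t \subseteq B(R)$ for this reduction are inherited from those of $K$ via the approximation bound. The only delicate point is that when $K$ is not symmetric, $\|\cdot\|_K$ is merely a gauge rather than a norm, so every manipulation must avoid Hilbertian structure and rely purely on convexity, positive homogeneity, and monotonicity of strong-convexity moduli under pointwise inequalities of norms; this is what recommends the 2-sum definition above over a naive Minkowski combination such as $(1-t)K + tB(r)$.
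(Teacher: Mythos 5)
Your construction is identical to the paper's: the paper defines $K_t$ by $\|\cdot\|_{K_t}^2 = (1-t^2)\|\cdot\|_K^2 + t^2\|\cdot\|_{B(r)}^2$ (which is your formula, since $\|\cdot\|_{B(r)} = \|\cdot\|_2/r$), and the approximation argument is verbatim the same. Where you diverge is in the curvature and efficiency parts, and both of your routes are valid.

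For curvature, the paper shows directly that $\|\cdot\|_{K_t}$ is $(2,\tfrac{t^2}{8})$-convex via the parallelogram identity and then applies Lemma~\ref{lemma:SCUC}; you instead show $\|\cdot\|_{K_t}^2$ is strongly convex as a function and invoke the Garber--Hazan ``gauge set $\Rightarrow$ strongly convex'' direction (Lemma 3 of \citep{garberHazan}). Your chain of implications is sound: adding a convex term preserves strong convexity, and a strong-convexity modulus taken w.r.t.\ a pointwise-larger gauge transfers unchanged to the smaller one. However, you have a factor-of-two slip. In the paper's convention (Definition~\ref{def:SCF}), $\|\cdot\|_2^2$ is $1$-strongly convex w.r.t.\ $\|\cdot\|_2$ (the gap $\alpha\|x\|^2 + (1-\alpha)\|y\|^2 - \|\alpha x + (1-\alpha)y\|^2$ equals exactly $\alpha(1-\alpha)\|x-y\|^2$), so $\tfrac{t^2}{r^2}\|\cdot\|_2^2$ is $t^2$-strongly convex w.r.t.\ $\|\cdot\|_2/r$, not $2t^2$. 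Correcting this, Garber--Hazan yields $\tfrac{t^2}{8}$-strong convexity of $K_t$, matching the paper exactly rather than beating it.

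For efficiency, the paper goes through the polar: it reduces weak optimization over $K_t$ to weak optimization over $K_t^\circ$ via Theorem~\ref{thm:equivOpt}, then uses the Lutwak et al.\ parametric characterization of the $L_2$ sum $K_t^\circ = (\sqrt{1-t^2}\,K^\circ) \oplus (tB(r)^\circ)$ to decompose that optimization into three tractable subproblems. Your route---binary search along rays to approximately evaluate $\|\cdot\|_K$, hence $\|\cdot\|_{K_t}$, hence weak membership for $K_t$, then GLS to recover weak optimization (with the needed $B(r) \subseteq K_t \subseteq B(R)$ sandwich inherited from $K$)---is a genuine alternative that avoids the Lutwak characterization entirely and is arguably simpler. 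Both are standard GLS reductions and both work.
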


	We note that this construction smoothly interpolates between the original set $K$ when $t = 0$ and the inscribed ball $B(r)$ when $t = 1$, and the guarantees interpolate with no loss at the endpoints.
	
	 The starting element for this construction is again the equivalence between strong convexity of sets and 2-convexity of their gauge functions. Based on this, the construction of $K_t$ uses the ``Asplund averaging'' technique for combining (2-convex) norms into a 2-convex one~\citep{asplund}: $K_t$ is defined by setting its gauge to be $\|\cdot\|_{K_t} := ((1-t^2) \|\cdot\|_K^2 + t^2 \|\cdot\|_{B(r)}^2)^{1/2}$. Equivalently, $K_t$ can be defined based on the so-called \emph{$L_2$ addition} of the (scaled) polars of $K$ and $B(r)$, an operation introduced by Firey~\cite{firey}. In fact, in to order show that one can optimize over $K_t$ in polynomial time, we resort to an equivalent characterization of this operation given by Lutwak et al.~\cite{lutwak2}.	

	As a concrete example of application, we consider the problem of \emph{Online Linear Optimization with hints} and show how Theorem \ref{thm:curving} allows us to port the low regret algorithm of Dekel et al.~\citep{nika}, designed for strongly convex playing sets, to general playing sets, at the expense of a small multiplicative regret.

%	Moreover, we hope our approximation by strongly convex sets ones will find other applications, given the usefulness of these sets:\red{Carefull: isn't smoothness the DUAL?} the notions of martingale-type/(2,D)-smooth norms, equivalent to 2-convexity, were used by~\cite{srebroSriTew} on the optimality of Online Mirror Descent, by \cite{liuLugosiNeuTao} on algorithmic
%stability, by \cite{regretMartingale} on the equivalence of regret inequalities and martingale bounds, and by \cite{fosterParamFree} on parameter-free online learning. \mnote{Can we get application from Lugosis paper?}	

%
%	\subsection{Structure of the paper}
%		
%	In order to reduce context-switching first we prove the more structural results Theorems \ref{thm:gauge} and \ref{thm:curving}, and leave the principle from Theorem \ref{thm:principle} to be described and developed in detail in the last section. 	
	
%############################################################
%############################################################
%############################################################
%############################################################

	\section{Preliminaries} \label{sec:prelim}
	
We need some basic notions from convex analysis, for which we refer to the book~\citep{HUL}. 
	
	\begin{definition} \label{def:SCF}
		A convex function $f : \R^d \rightarrow \R$ is \emph{$G$-strongly convex} with respect to a norm $\|\cdot\|$ if for all $x,y$ and all $\alpha \in [0,1]$	
		\begin{align*}
			f(\alpha x + (1-\alpha) y) \,\le\, \alpha f(x) +  (1-\alpha) f(y) \,-\, G \cdot \alpha (1-\alpha) \|x-y\|^2,
		\end{align*}
 and is \emph{$G$-strongly smooth} with respect to $\|\cdot\|$ if for all $x,y$ and all $\alpha \in [0,1]$	
		\begin{align*}
			f(\alpha x + (1-\alpha) y) \,\ge\, \alpha f(x) +  (1-\alpha) f(y) \,-\, G \cdot \alpha (1-\alpha) \|x-y\|^2,
		\end{align*}		
		
	\end{definition}
		
	\paragraph{Set operations, gauge and support functions.}
		Recall that $B(r)$ denotes the Euclidean ball of radius $r$ in the appropriate dimension depending on the context. Given a set $A$ and a scalar $\lambda \in \R$ we define $\lambda A := \{\lambda\, a : a \in A\}$, and given two sets $A,B$ we define their Minkowski sum $A + B := \{a + b : a \in A,\, b \in B\}$ and their difference $A - B := \{x \in A : x + B \subseteq A\}$ (so $A - B(r)$ has the interpretation of the points ``deep inside'' $A$). A set $A$ is \emph{(centrally) symmetric} if $A = -A$. By a convex body we mean a compact convex set with non-empty interior. We use $\Ko^d$ to denote the set of all convex bodies in $\R^d$ with 0 in their interior; we work almost exclusively with convex bodies in such position. 
		
		Given such a convex body $K \in \Ko^d$, its \emph{support function} is $$\sigma_K(\ell) := \max_{x \in K} \ip{x}{\ell},$$ and recall that its \emph{gauge} is $\|x\|_K := \inf \{\lambda > 0 : x \in \lambda K\}$.
		
			Gauge functions are generalization of norms: every norm $\|\cdot\|$ is the gauge of its unit norm ball $\{x : \|x\| \le 1\}$, and gauge functions satisfy all properties of norms (as listed below) other than symmetry, which holds iff the convex body is centrally symmetric. We need the following standard facts about these operators that can be readily verified.% (see also Section C.3 of~\cite{HUL}).
	
	\begin{lemma} \label{lemma:gaugeSupp}
	 For convex bodies $K, K'$ with the origin in their interior, we have the following:
	 \vspace{-4pt}
	 \begin{enumerate}
	 	\itemsep0em
	 	\item (level set) $K$ is precisely the set of points $x$ satisfying $\|x\|_K \le 1$
	 	\item (positive homogeneity) For every scalar $\lambda \in \R_+$, $\|\lambda x\|_K = \lambda \|x\|_K$
		\item (subadditivity) $\|x + y\|_K \le \|x\|_K + \|y\|_K$
		\item (inclusion) $K' \subseteq K$ iff $\|\cdot\|_{K'} \ge \|\cdot\|_K$ pointwise, and iff $\sigma_{K'}(\cdot) \le \sigma_K(\cdot)$ pointwise 
%\red{Only about $\sigma$: Theorem C.3.3.1 of \cite{HUL}}
	 	\item (scaling of body) For all $\lambda \in \R_+$, $\|\cdot\|_{\lambda K} = \frac{1}{\lambda} \|\cdot\|_{K}$, and $\sigma_{\lambda K}(\cdot) = \lambda \sigma_K(\cdot)$ pointwise. 
	\end{enumerate}
	\end{lemma}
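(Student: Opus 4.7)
The lemma collects five standard properties of gauge and support functions, each of which follows directly from the definitions with only a small amount of care needed about closedness and compactness. The plan is to handle each item separately, starting from the definitions $\|x\|_K = \inf\{\lambda > 0 : x \in \lambda K\}$ and $\sigma_K(\ell) = \max_{x \in K} \ip{x}{\ell}$, and exploiting the assumption $0 \in \mathrm{int}(K)$ (which guarantees that $\|\cdot\|_K$ is finite everywhere and that $\|0\|_K = 0$) together with the compactness of $K$ (so that the infimum in the gauge is actually attained whenever $\|x\|_K > 0$).

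For item (1), if $\|x\|_K \le 1$ then $x \in (1+\varepsilon)K$ for every $\varepsilon > 0$; taking $\varepsilon \to 0$ and using that $K$ is closed gives $x \in K$, while the reverse containment is immediate since $x \in K$ makes $\lambda = 1$ admissible. For (2), the reparametrization $\mu \mapsto \mu/\lambda$ inside the infimum gives $\|\lambda x\|_K = \lambda \|x\|_K$ for $\lambda > 0$, and $\lambda = 0$ is trivial. For (5), the same reparametrization handles the gauge identity, and the support-function identity follows from $\max_{x \in \lambda K} \ip{x}{\ell} = \max_{y \in K} \ip{\lambda y}{\ell} = \lambda \sigma_K(\ell)$. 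For subadditivity (3), set $a = \|x\|_K$ and $b = \|y\|_K$; by compactness $x \in aK$ and $y \in bK$ (or else approximate by $(a+\varepsilon)K$ and $(b+\varepsilon)K$ and send $\varepsilon \to 0$), so convexity of $K$ yields
\begin{align*}
x + y \,\in\, aK + bK \,=\, (a+b)\bigl(\tfrac{a}{a+b}K + \tfrac{b}{a+b}K\bigr) \,\subseteq\, (a+b)K,
\end{align*}
giving $\|x+y\|_K \le a+b$.

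The inclusion statement (4) is the only part with a genuine, though still standard, ingredient. The gauge direction is essentially bookkeeping: $K' \subseteq K$ implies that every $\lambda$ with $x \in \lambda K'$ also satisfies $x \in \lambda K$, so $\|\cdot\|_{K'} \ge \|\cdot\|_K$ pointwise; conversely, this pointwise inequality together with (1) gives $K' = \{x : \|x\|_{K'} \le 1\} \subseteq \{x : \|x\|_K \le 1\} = K$. The support-function direction is immediate in the forward implication (a maximum over a smaller set is at most a maximum over a larger set), and the reverse implication is where the main (mild) obstacle lies: if $K' \not\subseteq K$, pick $x_0 \in K' \setminus K$ and apply the supporting hyperplane theorem, using convexity and closedness of $K$, to produce $\ell$ with $\ip{x_0}{\ell} > \sigma_K(\ell)$; then $\sigma_{K'}(\ell) \ge \ip{x_0}{\ell} > \sigma_K(\ell)$, contradicting the assumption $\sigma_{K'} \le \sigma_K$. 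This closes the loop on all five items.
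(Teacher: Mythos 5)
The paper states this lemma without proof, describing the items as ``standard facts about these operators that can be readily verified,'' so there is no internal argument to compare against. Your verification is correct and complete: items (1), (2), (3), and (5) follow from the definitions by the reparametrizations and the convexity/compactness observations you give, and item (4) is handled correctly in both the gauge and support-function forms. One small terminological note on the last step: to produce $\ell$ with $\ip{x_0}{\ell} > \sigma_K(\ell)$ from a point $x_0 \notin K$, what you are invoking is the strict (Hahn--Banach) separation of the compact convex set $K$ from the point $x_0$, rather than the supporting hyperplane theorem, which concerns boundary points of $K$; the argument itself is sound, only the name is slightly off.
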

	
	\paragraph{Polarity.}	The \emph{polar} of a convex body $K \in \Ko^d$ is the convex body $$K^\circ := \{y : \ip{x}{y} \le 1,\,\forall x \in K\}.$$ We will also need the following properties of polars.

	\begin{lemma}	\label{lemma:polar2}
		For convex bodies $K, K'$ with the origin in their interior, we have the following:
		\vspace{-4pt}
		\begin{enumerate}
		\itemsep0em
	 	\item (polar involution) $(K^\circ)^\circ = K$ %\red{Equation C.3.2.7}
	 	\item (polar order reversal) $K \subseteq K'$ iff $K^\circ \supseteq K'$ 
	 	\item (duality of functionals) $\|x\|_K = \sigma_{K^\circ}(x)$	
	 	\item (Euclidean balls) For all $r > 0$ we have $B(r)^\circ = B(\frac{1}{r})$.
	 \end{enumerate}
	\end{lemma}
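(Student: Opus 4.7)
All four properties are classical facts about polarity, and my plan is to derive them in an order where later parts use earlier ones, rather than reproving each from scratch.

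I would start with the order-reversal property (part 2), since it follows directly from the definition with no auxiliary tools: if $K \subseteq K'$, then any $y$ satisfying $\langle x, y \rangle \le 1$ for every $x \in K'$ automatically satisfies the same bound on the smaller set $K$, hence $K'^\circ \subseteq K^\circ$. The ``iff'' then comes once we have involution. Next I would prove the Euclidean ball case (part 4) by direct computation: $y \in B(r)^\circ$ means $\langle x, y \rangle \le 1$ for every $x$ with $\|x\|_2 \le r$, and Cauchy--Schwarz (together with the extremal choice $x = r y/\|y\|_2$) shows this is equivalent to $\|y\|_2 \le 1/r$.

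For the duality of functionals (part 3), I would unfold the definition of the gauge and use that, for a convex body containing the origin in its interior, $x \in \lambda K$ is equivalent to $\langle x/\lambda, y \rangle \le 1$ for every $y \in K^\circ$ (this is the content of involution applied to $x/\lambda$). Thus $\|x\|_K = \inf\{\lambda > 0 : \langle x, y \rangle \le \lambda \text{ for all } y \in K^\circ\} = \sup_{y \in K^\circ} \langle x, y \rangle = \sigma_{K^\circ}(x)$. Strictly speaking this uses involution, so logically I would prove involution first.

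The main (and essentially only) obstacle is involution (part 1). The inclusion $K \subseteq (K^\circ)^\circ$ is immediate from symmetry of the defining inequality $\langle x, y\rangle \le 1$. The reverse inclusion is where work happens: take $z \notin K$, and use the hyperplane separation theorem (valid because $K$ is closed and convex) to produce a vector $y$ and a scalar $c$ with $\langle y, x \rangle \le c < \langle y, z \rangle$ for every $x \in K$. Because $0$ lies in the interior of $K$, we have $c > 0$, so rescaling $y$ by $1/c$ we may assume $c = 1$; this yields $y \in K^\circ$ with $\langle z, y \rangle > 1$, so $z \notin (K^\circ)^\circ$. Once involution is in hand, the remaining ``iff'' in part 2 follows by applying the easy direction of order reversal twice, and parts 3 and 4 proceed as above.
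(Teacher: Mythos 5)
The paper states this lemma as a collection of classical facts about polarity and does not supply a proof, so there is nothing to compare against; your job here is just to supply a correct derivation, and you do. Your chain of dependencies (easy half of order reversal from the definition; Euclidean balls by Cauchy--Schwarz; involution via the strict separating-hyperplane theorem plus $0\in\mathrm{int}\,K$; then gauge--support duality and the remaining ``iff'' from involution) is the standard textbook argument and is logically sound. One very small point worth tightening: to get $c>0$ in the separation step, it is cleanest to take $c:=\sup_{x\in K}\langle y,x\rangle$, which is strictly positive because $y\neq 0$ and $K$ contains a small ball around the origin, and which is still strictly below $\langle y,z\rangle$ by the separation theorem; then $y/c\in K^\circ$ while $\langle y/c,z\rangle>1$, so $z\notin(K^\circ)^\circ$. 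Also note that item~2 as printed in the paper has a typo ($K^\circ\supseteq K'$ should read $K^\circ\supseteq (K')^\circ$); what you prove is the intended statement.
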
 
	
	For a gauge $\|\cdot\|$, we use $\|y\|_{\star} := \sup_{\|x\| \le 1} \ip{x}{y}$ to denote its dual gauge. By definition, we have the generalized Cauchy-Schwarz inequality: 
	\begin{align}
		\ip{x}{y} \le \|x\| \|y\|_{\star}. \label{eq:genCS}
	\end{align}
	Note that since $\|\cdot\|_{K^\circ} = \sigma_K$, we see that $\|\cdot\|_{K^\circ}$ is the dual gauge of $\|\cdot\|_K$. In particular, by involution of polarity we have $\|\cdot\|_{\star \star} = \|\cdot\|$. 

	Also, when $\|\cdot\|_K$ is differentiable at $x$ we have that
	\begin{align}
		\|\nabla \|x\|_K\|_{K^\circ} \le 1,   \label{eq:gradNorm}
	\end{align}
	since $$\nabla \|x\|_K = \nabla \sigma_{K^\circ}(x) = \argmax_{y \in K^\circ} \ip{y}{x}$$ and hence $\nabla \|x\|_K \in K^{\circ}$.

%#########################################################

	\paragraph{Fenchel conjugate.} The \emph{Fenchel conjugate} of a function $f : \R^d \rightarrow \R$ is the function $f^\star$ given by 
	\begin{align*}
		f^\star(z) := \sup_{x \in \R^d} \Big(\ip{x}{z} - f(x)\Big).
	\end{align*}
	
	We have the following relationship between the gauge functions of polar sets and Fenchel conjugacy (see for example equation (1.49) on page 55 of \cite{schneider}).

	\begin{lemma} \label{lemma:commSquare}
		For any gauge $\|\cdot\|$ over $\R^d$ we have $(\frac{1}{2} \|\cdot\|^2)^\star = \frac{1}{2}\|\cdot\|_{\star}^2$.
	\end{lemma}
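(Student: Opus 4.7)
Let $f(x) := \tfrac12\|x\|^2$; by definition $f^\star(z) = \sup_{x \in \R^d}\bigl(\ip{x}{z} - \tfrac12\|x\|^2\bigr)$. The plan is a direct computation of this supremum in two stages, being careful because $\|\cdot\|$ is only a gauge (positively homogeneous) and not necessarily symmetric.

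First, I parameterize the supremum using the positive homogeneity $\|t u\| = t\|u\|$ for $t \ge 0$. Since the unit gauge ball $K := \{u : \|u\|\le 1\}$ has the origin in its interior, every nonzero $x$ can be written uniquely as $x = t u$ with $t = \|x\| > 0$ and $\|u\| = 1$, and $x = 0$ corresponds to $t = 0$. So
\begin{align*}
    f^\star(z) \;=\; \sup_{\|u\|=1}\;\sup_{t \ge 0}\Bigl(t\,\ip{u}{z} - \tfrac12 t^2\Bigr).
\end{align*}
The inner one-dimensional problem is a downward parabola in $t$ restricted to $t \ge 0$: if $\ip{u}{z} \le 0$ the optimum is $t=0$ with value $0$, and if $\ip{u}{z} > 0$ the optimum is $t = \ip{u}{z}$ with value $\tfrac12\ip{u}{z}^2$. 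In either case the inner sup equals $\tfrac12\bigl(\max\{0,\ip{u}{z}\}\bigr)^2$.

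Next, I take the outer sup over $\{u : \|u\|=1\}$ and identify it with the dual gauge $\|\cdot\|_\star$. The key observation is that $\sup_{\|u\|=1}\ip{u}{z} = \sup_{\|u\|\le 1}\ip{u}{z} = \|z\|_\star$: the inequality $\le$ is trivial; for $\ge$, given any $u$ with $0 < \|u\| \le 1$ and $\ip{u}{z}>0$, the rescaling $u/\|u\|$ lies on $\{\|\cdot\|=1\}$ and has inner product $\ip{u}{z}/\|u\| \ge \ip{u}{z}$, while $u=0$ and points with $\ip{u}{z}\le 0$ are dominated by the point $z/\|z\|$ (for $z \ne 0$) which lies on the gauge sphere and has strictly positive inner product with $z$ (the case $z=0$ being trivial since both sides vanish). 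In particular $\sup_{\|u\|=1}\ip{u}{z} \ge 0$, so removing the $\max\{0,\cdot\}$ and squaring commute with the sup, giving
\begin{align*}
    f^\star(z) \;=\; \tfrac12\Bigl(\sup_{\|u\|=1}\ip{u}{z}\Bigr)^2 \;=\; \tfrac12\|z\|_\star^2,
\end{align*}
which is the claim.

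The whole proof is essentially a one-variable optimization followed by unpacking the definition of $\|\cdot\|_\star$. The only place one has to pay any attention is the asymmetry of the gauge, which forces the $t \ge 0$ restriction in the inner supremum; the small lemma that $\sup$ on the gauge sphere equals $\sup$ on the gauge ball (which is what defines $\|\cdot\|_\star$) is what absorbs this asymmetry cleanly. I do not anticipate any real obstacle beyond being tidy with this point.
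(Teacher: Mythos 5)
Your proof is correct. The paper does not prove this lemma itself --- it cites it as equation (1.49) of Schneider's book --- so there is no in-paper argument to compare against; your computation, which decouples the supremum into a one-dimensional radial optimization over $t = \|x\| \ge 0$ and a directional optimization over the gauge sphere, and then observes that $\sup_{\|u\|=1}\ip{u}{z} = \|z\|_\star \ge 0$ so the $\max\{0,\cdot\}$ drops out, is the standard self-contained argument and handles the asymmetry of the gauge cleanly via the $t\ge 0$ restriction.
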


%#########################################################

	\paragraph{Subgradients.} Given a convex function $f : \R^d \rightarrow \R$, its \emph{subdifferential} at $x$, denoted by $\partial f(x)$, is the set of all vectors $g \in \R^d$ that give an underestimation of the function, namely $$f(y) \ge f(x) + \ip{g}{y-x} \qquad\qquad \textrm{for all } y \in \R^d.$$ A vector $g \in \partial f(x)$ is called a \emph{subgradient}. Furthremore, if $f$ is differentiable at $x$ then $\partial f(x)$ is the singleton set consisting of the gradient $\nabla f(x)$. 

%#########################################################
%#########################################################
%#########################################################
%#########################################################

	\section{Equivalence of Strongly Convex and Gauge Bodies} \label{sec:equiv}

	In this section we prove that strongly convex sets are gauge sets (Theorem \ref{thm:gauge}). The argument follows the high-level chain depicted in the figure below:% Figure \ref{fig:proof}.

	\vspace{4pt}
  \begin{figure}[!h]
	\centering
	\scalebox{0.95}{
	\begin{tikzpicture}[node distance=2.7cm]
		\node (SCSet) [process] {$K$ is strongly convex};
		\node (UCGauge) [process, right of=SCSet, xshift=2.8cm] {$\|\cdot\|_K$ is $(2,D)$-convex};
		\node (USGauge) [process, right of=UCGauge, xshift=3cm] {$\|\cdot\|_{K^\circ}$ is $(2,\frac{1}{16D})$-smooth};
		\node (SSSquared) [process, below of=USGauge] {$\|\cdot\|^2_{K^\circ}$ is strongly smooth};
		\node (SCSquared) [process, below of=UCGauge] {$\|\cdot\|^2_{K}$ is strongly convex};
		\draw [dbarrow] (SCSet) -- node[anchor=north] {} node [anchor=south] {{\footnotesize Lemma \ref{lemma:SCUC}}} (UCGauge);
		\draw [dbarrow] (UCGauge) -- node[anchor=north] {} node [anchor=south] {{\footnotesize Lemma \ref{lemma:CSduality}}}  (USGauge);
		\draw [arrow] (USGauge) -- node[anchor=east] {{\footnotesize Lemma \ref{lemma:smoothSmooth}}} node [anchor=west] {} (SSSquared);
		\draw [arrow] (SSSquared) -- node[anchor=north] {} node [anchor=south] {\footnotesize Lemma \ref{lemma:fenchelSSSC}} (SCSquared);
	\end{tikzpicture}
	}
%	\caption{Structure of the proof of Theorem \ref{thm:gauge}. The proofs in the dashed arrow assume symmetry of the set~$K$.}
%	\label{fig:proof}
\end{figure}
	\vspace{4pt}

	 The main stepping stone is another classic notion of curvature in Banach spaces~\citep{clarkson}.
	
 	\begin{definition}[Uniform convexity] \label{def:twoConv}
		A gauge function $\|.\|_K$ is \textbf{\emph{$(2,D)$-convex}} if for all $x,y$ satisfying $\|x\|_K \le 1$ and $\|y\|_K \le 1$ we have 
		\begin{align}
			\bigg\|\frac{x + y}{2}\bigg\|_K \le 1 - D \|x-y\|^2_K. \label{eq:twoConv}
		\end{align}
	\end{definition}
	
	Notice that for $x,y$ as above, the subadditivity of gauges gives that $\|\frac{x+y}{2}\|_K \le \frac{\|x\|_K}{2} + \frac{\|y\|_K}{2} \le 1$; thus, 2-convexity gives an improvement depending on how far $x$ and $y$ are from each other. As an example, the Euclidean norm is $(2,\frac{1}{8})$-convex, and this modulus $\frac{1}{8}$ is best possible (see \cite{lindTza}, page 63). 

	The first step for proving Theorem \ref{thm:gauge} is  noticing that a gauge $\|\cdot\|_K$ is 2-convex iff the set $K$ is strong convex with respect to itself. Despite the extensive literature on strongly convex sets (see the survey \citep{surveyStronglyConvex}), we could not find a reference for this result. We present its simple proof for completeness. 
	
		\begin{lemma} \label{lemma:SCUC}
		A convex body $K \in \Ko^d$ is $\lambda$-strongly convex with respect to itself iff its gauge $\|\cdot\|_K$ is $(2,\lambda)$-convex. 
	\end{lemma}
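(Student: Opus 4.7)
The plan is to prove this by directly translating between the two conditions using only the positive homogeneity and subadditivity of the gauge $\|\cdot\|_K$ (Lemma~\ref{lemma:gaugeSupp}), plus the fact that $K = \{x : \|x\|_K \le 1\}$. The key observation is that the inclusion appearing in Definition~\ref{def:SC} when $C = K$ says exactly that a gauge-ball around the midpoint stays inside $K$, which naturally pairs with the statement about the gauge-norm of the midpoint appearing in $(2,\lambda)$-convexity.

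For the forward direction, I would take $x,y$ with $\|x\|_K,\|y\|_K \le 1$, so $x,y \in K$, and let $m = \tfrac{x+y}{2}$, $r = \lambda \|x-y\|_K^2$. Strong convexity gives $m + rK \subseteq K$. To turn this containment into an upper bound on $\|m\|_K$, I would evaluate it at the boundary point $z = m / \|m\|_K$ (which satisfies $\|z\|_K = 1$, hence $z \in K$), obtaining
\[
\bigl(1 + r/\|m\|_K\bigr) m \,=\, m + rz \,\in\, K,
\]
so $(1 + r/\|m\|_K)\|m\|_K = \|m\|_K + r \le 1$, which is precisely the desired $(2,\lambda)$-convexity inequality. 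The degenerate case $m = 0$ is handled separately: since $0 \in \mathrm{int}(K)$, the containment $rK \subseteq K$ forces $r \le 1$, hence $\|m\|_K = 0 \le 1 - r$.

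For the reverse direction, I would again set $m,r$ as above with $x,y \in K$, and show directly that $m + rK \subseteq K$. For any $z \in K$, subadditivity and positive homogeneity of $\|\cdot\|_K$ yield
\[
\|m + rz\|_K \,\le\, \|m\|_K + r\|z\|_K \,\le\, (1 - r) + r \cdot 1 \,=\, 1,
\]
where the second inequality uses $(2,\lambda)$-convexity for the first term and $z \in K$ for the second. Hence $m + rz \in K$, establishing strong convexity with modulus $\lambda$.

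I do not foresee any real obstacle: both directions are essentially one-line manipulations once one notices that $m + rK$ is precisely the gauge-ball of radius $r$ around $m$. The only care needed is for the edge cases ($m = 0$ on the forward direction, and $r = 0$ making the strong-convexity statement vacuous), and to observe that $(2,\lambda)$-convexity is only a meaningful condition when $\lambda \|x-y\|_K^2 \le 1$, in which case the bound $\|m\|_K \le 1 - r \ge 0$ is consistent.
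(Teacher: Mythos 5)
Your proposal is correct and follows essentially the same route as the paper: in the forward direction you specialize the containment $m + rK \subseteq K$ to the direction $m/\|m\|_K$ to extract the gauge inequality, and in the reverse direction you apply subadditivity of the gauge to $\|m + rz\|_K$ for $z \in K$. The only (minor) difference is that you explicitly treat the degenerate case $m = 0$, where $m/\|m\|_K$ is undefined; the paper omits this, and your handling of it is a small but genuine improvement in rigor.
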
 	
	
	\begin{proof}
	$(\Rightarrow)$	Take $x,y$ such that $\|x\|_K, \|y\|_K \le 1$, so $x,y \in K$. Let $m = \frac{x + y}{2}$ and $t = \|x - y\|_K$. Using the $\lambda$-strong convexity of $K$ at $m$, we have that the point $m + \lambda t^2 \frac{m}{\|m\|_K}$ belongs to $K$, and hence $$1 \ge \Big\|m + \lambda t^2 \tfrac{m}{\|m\|_K}\Big\|_K = \left(1 + \frac{\lambda t^2}{\|m\|_K}\right) \|m\|_K = \|m\|_K + \lambda t^2.$$ Thus, $\|m\|_K \le 1 - \lambda t^2$, proving the $2$-convexity of $\|\cdot\|_K$.
	
	\medskip
	\noindent $(\Leftarrow)$ Take $x,y \in K$ with $\|x - y\|_K \ge t$, so by assumption $\|\frac{x+y}{2}\|_K \le 1 - D t^2$. Then for any $w \in D t^2 \cdot K$ we have by triangle inequality $\|\frac{x+y}{2} + w\| \le 1$, i.e., this point belongs to $K$. This means that $\frac{x+y}{2} + D t^2 \cdot K$ is contained in $K$. Thus, $K$ is $\lambda$-strongly convex with respect to itself with $\lambda = D$.  	
	\end{proof}
	
	For the next step, we need the following dual version of uniform convexity that measures how ``flat'' the set $K$ on the most ``curved'' point of its boundary.	

 	\begin{definition}[Uniform smoothness] \label{def:twoSmooth}
		A gauge function $\|.\|_K$ is \textbf{\emph{$(2,D)$-smooth}} if for all $x$ satisfying $\|x\|_K = 1$ and for all $y \in \R^d$ we have 
		\begin{align}
			\frac{\|x + y\|_K + \|x - y\|_K}{2} \le 1 + D \|y\|_K^2. \label{eq:twoSmooth}
		\end{align}
	\end{definition}
	
	As an example, the Euclidean norm is $(2,\frac{1}{2})$-smooth, and this modulus $\frac{1}{2}$ is best possible~(see \cite{lindTza}, page~63). 
	
	The connection with the previous step is that indeed 2-convexity and 2-smoothness are duals of each other. The standard proof of this fact for norms (for example, that of Proposition 1.7 of Chapter IV of \cite{smoothnessRenorming}) goes through unchanged for gauges.
	
	\begin{lemma} \label{lemma:CSduality}
		A gauge $\|\cdot\|$ is $(2,D)$-convex iff its dual gauge $\|\cdot\|_{\star}$ is $(2,\frac{1}{16 D})$-smooth.
	\end{lemma}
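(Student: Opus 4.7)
The plan is to prove both implications in parallel via the duality $\|y\|_\star=\sup_{\|x\|\le 1}\ip{x}{y}$, the generalized Cauchy--Schwarz inequality \eqref{eq:genCS}, and involution $\|\cdot\|_{\star\star}=\|\cdot\|$. These are exactly the properties of norms that the standard proof uses, and they all hold for arbitrary gauges, so no modification is needed to handle asymmetry.

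For the forward direction ($(2,D)$-convex of $\|\cdot\|$ $\Rightarrow$ $(2,\tfrac{1}{16D})$-smooth of $\|\cdot\|_\star$), fix $u$ with $\|u\|_\star=1$ and arbitrary $v$. By compactness of $K=\{x:\|x\|\le 1\}$, choose $x_\pm$ with $\|x_\pm\|\le 1$ attaining $\|u\pm v\|_\star=\ip{x_\pm}{u\pm v}$. Then
\[
\|u+v\|_\star+\|u-v\|_\star \;=\; \ip{x_++x_-}{u}+\ip{x_+-x_-}{v}.
\]
Bound the first inner product by $2\|\tfrac{x_++x_-}{2}\|\cdot\|u\|_\star\le 2(1-D\|x_+-x_-\|^2)$, using generalized Cauchy--Schwarz followed by $(2,D)$-convexity of $\|\cdot\|$ at the pair $x_\pm\in K$; bound the second by $\|x_+-x_-\|\cdot\|v\|_\star$. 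Setting $t:=\|x_+-x_-\|$, the sum is at most $2-2Dt^2+t\|v\|_\star$, and maximizing this quadratic in $t\ge 0$ gives the bound $2+\|v\|_\star^2/(8D)$. Dividing by $2$ yields the desired $(2,\tfrac{1}{16D})$-smoothness.

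For the reverse direction, take $x,y$ with $\|x\|,\|y\|\le 1$, set $m:=\tfrac{x+y}{2}$ and $t:=\|x-y\|$. Using duality (and compactness of the dual unit ball), pick $u$ with $\|u\|_\star=1$ attaining $\|m\|=\ip{m}{u}$ and $v$ with $\|v\|_\star=1$ attaining $\|x-y\|=\ip{x-y}{v}$. For any $\lambda>0$, $(2,\tfrac{1}{16D})$-smoothness of $\|\cdot\|_\star$ applied at $u$ with perturbation $\lambda v$ gives $\|u+\lambda v\|_\star+\|u-\lambda v\|_\star\le 2+\lambda^2/(8D)$. On the other hand, since $(u\pm\lambda v)/\|u\pm\lambda v\|_\star$ has dual gauge $1$, the definition of $\|\cdot\|$ as $\sup_{\|w\|_\star\le 1}\ip{\cdot}{w}$ yields
\[
\|u+\lambda v\|_\star+\|u-\lambda v\|_\star \;\ge\; \ip{x}{u+\lambda v}+\ip{y}{u-\lambda v}\;=\;2\|m\|+\lambda t.
\]
Combining and optimizing over $\lambda$ (the optimum is $\lambda^*=4Dt$) gives $\|m\|\le 1-Dt^2$, establishing $(2,D)$-convexity of $\|\cdot\|$.

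The only subtlety, and the one place asymmetry could in principle bite, is in the reverse direction: we need to apply smoothness at a vector $u$ of dual gauge exactly $1$ while the perturbation $\lambda v$ has dual gauge $\lambda$ (not $|\lambda|$) because gauges are only positively homogeneous. This is why I restrict to $\lambda>0$ and use only the one-sided optimum; fortunately the optimal $\lambda^*=4Dt$ is nonnegative, so the argument is unaffected. Attainment of the duality suprema by compactness, and the generalized Cauchy--Schwarz inequality \eqref{eq:genCS}, are the only analytic facts used, and both hold for gauges just as for norms.
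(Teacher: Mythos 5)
Your proof is correct and follows the standard Lindenstrauss-style duality argument (pair the dual gauges with extremizing points of the primal unit ball, apply the primal modulus to those extremizers, then optimize the resulting quadratic), which is exactly the argument in the paper's cited reference (Proposition 1.7 of Chapter IV of the Deville--Godefroy--Zizler book) that the paper asserts ``goes through unchanged for gauges.'' Your explicit verification of where positive homogeneity must replace full homogeneity---restricting to $\lambda>0$ and checking the optimizer $\lambda^*=4Dt\ge 0$ lands in the allowed range---is precisely the sanity check that justifies the paper's assertion.
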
 	
	
	The next crucial link on the chain is that 2-smoothness of a gauge implies strong smoothness of the squared gauge. This follows from known facts for symmetric gauges (i.e., norms), but unfortunately here the proofs \emph{do not} directly generalize to asymmetric gauges. This is our main technical contribution in order to prove Theorem \ref{thm:gauge}, and in order to keep the high-level chain of the argument its proof is deferred to Section~\ref{sec:smoothSmooth}. 

	\begin{lemma} \label{lemma:smoothSmooth}
		If $\|\cdot\|$ is a $(2,D)$-smooth gauge, then the square gauge $\|\cdot\|^2$ is a $(2D + 1)$-strongly smooth function with respect to $\|\cdot\|$.
	\end{lemma}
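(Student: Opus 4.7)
My plan is the second-order differential argument hinted at in the introduction: extract a pointwise Hessian bound on $\|\cdot\|$ from $(2,D)$-smoothness, propagate it through the product/chain rule and the positive homogeneity of $\|\cdot\|$ to a Hessian bound on $f := \|\cdot\|^2$, and then integrate along line segments to recover strong smoothness.

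First I would read off the Hessian bound on $\|\cdot\|$. At any point $x_0$ with $\|x_0\|=1$ where $\|\cdot\|$ is $C^2$, the two second-order Taylor expansions of $\|x_0+h\|$ and $\|x_0-h\|$ add up (the first-order terms $\pm\ip{\nabla\|x_0\|}{h}$ cancel) to
\[
\tfrac{\|x_0+h\|+\|x_0-h\|}{2} \,=\, 1 + \tfrac{1}{2}\, h^\top \nabla^2\|x_0\|\, h + o(\|h\|^2).
\]
Comparing with $\|x_0+h\|+\|x_0-h\|\le 2+2D\|h\|^2$ and letting $h\to 0$ yields $h^\top \nabla^2\|x_0\|\, h \le 2D\|h\|^2$. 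Since $\|\cdot\|$ is positively $1$-homogeneous, $\nabla^2\|\cdot\|$ is $(-1)$-homogeneous, so for general $x\ne 0$ the bound becomes $h^\top\nabla^2\|x\|\,h \le 2D\|h\|^2/\|x\|$.

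Next I would propagate this to $f=\|\cdot\|^2$. The product rule gives $\nabla f(x)=2\|x\|\nabla\|x\|$ and
\[
\nabla^2 f(x) \,=\, 2\,\nabla\|x\|(\nabla\|x\|)^\top + 2\|x\|\,\nabla^2\|x\|,
\]
hence
\[
h^\top\nabla^2 f(x)\, h \,=\, 2\,\ip{\nabla\|x\|}{h}^2 + 2\|x\|\,h^\top\nabla^2\|x\|\, h.
\]
The first term is at most $2\|h\|^2$ by the generalized Cauchy--Schwarz inequality \eqref{eq:genCS} combined with $\|\nabla\|x\|\|_\star\le 1$ from \eqref{eq:gradNorm}; the second term is at most $2\|x\|\cdot 2D\|h\|^2/\|x\| = 4D\|h\|^2$ by the previous paragraph. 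Thus $h^\top \nabla^2 f(x)\,h \le (4D+2)\|h\|^2$ at every $x\ne 0$ where $\|\cdot\|$ is twice differentiable. Integrating this bound along the segment $[x,y]$ via the integral form of the Taylor remainder applied to $\phi(t):=f(x+t(y-x))$ produces the quadratic upper bound
\[
f(y) \,\le\, f(x)+\ip{\nabla f(x)}{y-x}+(2D+1)\|y-x\|^2,
\]
from which the chord form of $(2D+1)$-strong smoothness (Definition \ref{def:SCF}) follows by applying the bound at the midpoint $z=\alpha x+(1-\alpha)y$ against both endpoints, the inner-product terms cancelling since $\alpha(x-z)+(1-\alpha)(y-z)=0$.

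The main obstacle I expect is regularity. A $(2,D)$-smooth gauge is automatically $C^1$ but is only guaranteed to be twice differentiable almost everywhere (Alexandrov's theorem), so the Hessian computation is not a priori available at every $x$. The clean remedy is mollification: convolve $\|\cdot\|$ with a smooth bump to obtain a $C^\infty$ gauge which remains $(2,D+o(1))$-smooth, run the Hessian calculation on the smoothed gauge, and pass to the limit in the (closed) quadratic upper bound on $\|\cdot\|^2$, which is stable under pointwise convergence. A second, minor subtlety is that $\|\cdot\|$ is asymmetric, so $\|y-x\|\ne\|x-y\|$ in general and the midpoint-combining step above actually produces $(1-\alpha)\|x-y\|^2+\alpha\|y-x\|^2$ on the right-hand side rather than $\|x-y\|^2$ outright; this quantity is bounded by $\max(\|x-y\|^2,\|y-x\|^2)$, the natural symmetrization of Definition \ref{def:SCF} in the gauge setting, and agrees with $\|x-y\|^2$ whenever $\|\cdot\|$ is a norm.
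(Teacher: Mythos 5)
Your argument for the twice-differentiable case matches the paper's almost line for line: the Taylor/evenness trick to read off $h^\top\nabla^2\|x\|\,h\le 2D\|h\|^2$ on the unit sphere, $(-1)$-homogeneity of $\nabla^2\|\cdot\|$ to drop the normalization, the chain rule on $\|\cdot\|^2$, Cauchy--Schwarz with $\|\nabla\|x\|\|_\star\le 1$ for the rank-one term, and then integrating the Hessian bound to obtain strong smoothness (the paper's Lemma~\ref{lemma:hessImplSmooth}). No disagreement there.

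Two issues with the rest. First, the asymmetry worry in your closing paragraph is spurious. When you midpoint-combine, the remainder terms are quadratic forms $z^\top\nabla^2 f\,z$, which are \emph{even} in $z$; so for the expansion toward $y$ you are free to apply your Hessian bound with $z=z_{\mathrm{mid}}-y=\alpha(x-y)$ rather than $y-z_{\mathrm{mid}}=\alpha(y-x)$, and both remainders come out proportional to $\|x-y\|^2$, exactly matching Definition~\ref{def:SCF}. This is precisely how the paper's proof of Lemma~\ref{lemma:hessImplSmooth} lands on $\|u-v\|_K^2$ with no mixing. You do not need the symmetrized $\max(\|x-y\|,\|y-x\|)$ version.

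Second, and more seriously, the regularity step you call a ``clean remedy'' is where the paper locates its main technical work, and your description of it does not go through as stated. Convolving $\|\cdot\|$ directly with a smooth bump does not produce a gauge: convolution destroys positive $1$-homogeneity, and after that the Euler/homogeneity identities you rely on no longer apply. The construction the paper uses comes from Schneider's Theorem~3.4.1 applied to the support function of $K^\circ$, yielding a genuine gauge of the form $\|x\|_\e = \E_Z\|x + Z\|x\|_2\|_K$. And once you have a smooth approximating gauge, showing it is still $(2,D(1+O(\e)))$-smooth is not automatic: the paper devotes a separate lemma to it with a nontrivial two-case analysis (one case when $\|u+v\|$ is small relative to $\|u-v\|$, and another bounding $\|U\pm V\|$ using the $(2,\tfrac12)$-smoothness of the Euclidean norm). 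This is exactly what the paper flags as the obstruction in passing from norms to asymmetric gauges. Your proposal asserts preservation of the modulus without proof, which is the gap.
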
	
	
	The final link in the chain is that the duality between strongly smooth and strongly convex functions also holds with respect to gauges; standard proofs such as that of Proposition 2.6 of \cite{azePenot} works with minor changes, but we provide a proof for convenience.
	
	\begin{lemma} \label{lemma:fenchelSSSC}
		If a function $f$ is $D$-strongly smooth with respect to a gauge $\|\cdot\|$, then its Fenchel conjugate $f^\star$ is $\frac{1}{4D}$-strongly convex with respect to the dual gauge $\|\cdot\|_{\star}$. 
	\end{lemma}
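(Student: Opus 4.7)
The plan is to follow the classical argument for norms (such as in Az\'e--Penot~\cite{azePenot}), adapted to gauges. It relies on two ingredients: (i) a pointwise quadratic upper bound on $f$ derived from strong smoothness, and (ii) the Fenchel identity $(D\|\cdot\|^2)^\star = \frac{1}{4D}\|\cdot\|_\star^2$, which follows from Lemma~\ref{lemma:commSquare} by scaling (using $(cg)^\star(p) = c \cdot g^\star(p/c)$ with $c = 2D$ and $g = \tfrac{1}{2}\|\cdot\|^2$).

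The first step is to convert the convex-combination definition of strong smoothness into a pointwise quadratic upper bound: for every $z$, every subgradient $w \in \partial f(z)$, and every direction $h$,
\begin{align*}
f(z+h) \le f(z) + \langle w, h\rangle + D\|h\|^2.
\end{align*}
This is obtained by setting $x = z+h$, $y = z$ in the strong smoothness inequality (so that $\|x-y\|^2 = \|h\|^2$), rearranging to isolate $f(z+h)$ on the left, and letting the parameter $\alpha$ (the weight on $x$) tend to $0^+$, so that the one-sided difference quotient $[f(z + \alpha h) - f(z)]/\alpha$ converges to the directional derivative. A short midpoint argument (comparing strong smoothness at $z \pm t h$ with the subgradient lower bounds and letting $t \to 0^+$) shows that strong smoothness forces G\^ateaux, hence Fr\'echet, differentiability everywhere, so the directional derivative coincides with $\langle w, h\rangle$ for the unique $w = \nabla f(z)$.

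Now fix $u, v$ and $\alpha \in [0,1]$, set $w := \alpha u + (1-\alpha) v$, and let $z^*$ attain $f^\star(w) = \langle z^*, w\rangle - f(z^*)$, so that $w \in \partial f(z^*)$. For every $h$, the defining supremum of $f^\star(u)$ together with the quadratic upper bound at $z^*$ yields
\begin{align*}
f^\star(u) \ge \langle z^* + h, u\rangle - f(z^* + h) \ge f^\star(w) + \langle z^*, u - w\rangle + \langle h, u - w\rangle - D\|h\|^2.
\end{align*}
Taking the supremum over $h$ and invoking the Fenchel identity above gives $f^\star(u) \ge f^\star(w) + \langle z^*, u - w\rangle + \tfrac{1}{4D}\|u - w\|_\star^2$, and the analogous inequality with $v$ in place of $u$. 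Multiplying these by $\alpha$ and $1-\alpha$ respectively and summing, the linear terms cancel because $\alpha(u - w) + (1-\alpha)(v - w) = 0$, while positive homogeneity applied to $u - w = (1-\alpha)(u - v)$ and $v - w = \alpha(v - u)$ produces the desired slack of order $\tfrac{\alpha(1-\alpha)}{4D}\|u - v\|_\star^2$, giving $\tfrac{1}{4D}$-strong convexity.

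The main obstacle I anticipate is the asymmetry of gauges: since $\|u - v\|_\star$ and $\|v - u\|_\star$ need not coincide, the combination above actually produces the slack $\tfrac{\alpha(1-\alpha)}{4D}\bigl[(1-\alpha)\|u - v\|_\star^2 + \alpha\|v - u\|_\star^2\bigr]$, a convex combination rather than a single square. I would resolve this by appealing to the ``for all $x, y$'' nature of both hypothesis and conclusion: running the same argument with the triple $(v, u, 1-\alpha)$ (which leaves $w$ unchanged) yields the symmetric inequality featuring $\|v-u\|_\star^2$, so the strong convexity inequality of Definition~\ref{def:SCF} can be verified with either orientation of the dual gauge, matching the asymmetric conventions already implicit elsewhere in the paper.
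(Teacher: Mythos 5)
Your strategy is a genuine alternative to the paper's proof: you localize strong smoothness into a pointwise quadratic upper bound at a point $z^*$ attaining $f^\star(w)$, then maximize over the perturbation $h$. The paper never invokes an attaining point; it parameterizes the two Fenchel test points as $u$ and $v = u - w$, applies strong smoothness to the pair $(u,v)$, and takes \emph{independent} suprema over $u$ (yielding $f^\star(\alpha x + (1-\alpha)y)$) and over $w$ (yielding $\tfrac12\|\tfrac{x-y}{2D}\|_\star^2$ via Lemma~\ref{lemma:commSquare}). Your derivation of the quadratic upper bound and of the differentiability of a strongly smooth $f$ via the midpoint argument is fine.

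There are two gaps. First, the maximizer $z^*$ of $\langle z, w\rangle - f(z)$ need not exist: strong smoothness does not force superlinear growth of $f$, so $f^\star(w)$ can be finite at a boundary point of its domain without the supremum being attained (e.g.\ $f(z)=\sqrt{1+z^2}$ is $\tfrac12$-strongly smooth, yet $f^\star(1)=0$ is not attained). This is repairable (regularize by $\delta\|\cdot\|_2^2$ and let $\delta\to 0$, or argue on the relative interior of $\operatorname{dom} f^\star$ and extend), but the step needs justification; the paper's parameterization avoids it entirely. Second, your proposed fix for the asymmetric slack does not work: running the argument with the triple $(v,u,1-\alpha)$ leaves the weights and the vectors $u-w=(1-\alpha)(u-v)$, $v-w=\alpha(v-u)$ unchanged, so it reproduces \emph{exactly} the same convex combination $(1-\alpha)\|u-v\|_\star^2 + \alpha\|v-u\|_\star^2$, not $\|v-u\|_\star^2$. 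That quantity only lower-bounds $\min(\|u-v\|_\star^2,\|v-u\|_\star^2)$, so whenever $\|u-v\|_\star > \|v-u\|_\star$ you fall short of the stated modulus with the $\|u-v\|_\star$ orientation. The paper's route sidesteps this because a single supremum over $w$ of $\langle w, x-y\rangle - D\|w\|^2$ directly produces $\tfrac{1}{4D}\|x-y\|_\star^2$ with one fixed orientation, never averaging two differently oriented squares.
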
	
	
	\begin{proof}
		Fix $x,y \in \R^d$ and $\alpha \in [0,1]$. From the definition of Fenchel conjugate, for any $u, w \in \R^d$ (let $v = u-w$)
		\begin{align*}
			\alpha f^\star(x) + (1-\alpha) f^\star(y) &\ge \alpha \bigg[ \ip{u}{x} - f(u) \bigg] + (1-\alpha) \bigg[ \ip{v}{y} - f(v) \bigg]\\
%			&= \alpha \ip{u}{x} + (1-\alpha) \ip{v}{y} - \bigg[\alpha f(u) + (1-\alpha) f(v) \bigg]\\
			&\ge \alpha \ip{u}{x} + (1-\alpha) \ip{v}{y} - \bigg[f(\alpha u + (1-\alpha) v) + D \alpha (1-\alpha) \|u-v\|^2 \bigg]\\
			&= \ip{\alpha u + (1-\alpha)v}{\alpha x + (1-\alpha) y} - f(\alpha u + (1-\alpha) v) \\
			&~~~~~~~~~~+ \alpha (1-\alpha) \bigg[ \ip{u-v}{x-y} - D \|u-v\|^2 \bigg]\\
			&= \bigg[\ip{u - (1-\alpha)w}{\alpha x + (1-\alpha) y} - f(u - (1-\alpha) w)\bigg] \\
			&~~~~~~~~~~+ 2 D \alpha (1-\alpha) \bigg[ \ip{w}{\tfrac{x-y}{2D}} - \tfrac{1}{2} \|w\|^2 \bigg],
		\end{align*}
		where the second inequality follows from the assumption that $f$ is $D$-strongly smooth w.r.t. $\|\cdot\|$. Taking a supremum over $u$ the first bracket in the right-hand side it becomes $f^\star(\alpha x + (1-\alpha) y)$ (from the definition of $f^\star$ and the fact $u \mapsto u + (1-\alpha)w$ ranges over all $\R^d$), and then taking supremum over $w$ the second bracket becomes $(\frac{1}{2}\|\frac{x-y}{2D}\|^2)^\star$, which equals $\frac{1}{2}\|\frac{x-y}{2D}\|_{\star}^2$ from Lemma \ref{lemma:commSquare}. This gives that
		\begin{align*}
		\alpha f^*(x) + (1-\alpha) f^*(y) \ge f^*(\alpha x + (1-\alpha) y) + \frac{\alpha (1-\alpha)}{4D} \|x-y\|^2_{\star},
		\end{align*}
		and hence $f^\star$ is $\frac{1}{4D}$-strongly convex w.r.t. $\|\cdot\|_{\star}$. 
	\end{proof}

	\begin{proof}[Proof of Theorem \ref{thm:gauge}]
	All the above lemmas put together directly imply Theorem \ref{thm:gauge}. More precisely, consider a set $K$ that is $\lambda$-strongly convex with respect to itself. Due to the remark right after Definition~\ref{def:twoConv} and Lemma~\ref{lemma:SCUC}, we know $\lambda \le \frac{1}{8}$, a fact that will be useful later. 
	
	To simplify the notation let $\|\cdot\| = \|\cdot\|_{K}$. Chaining Lemmas \ref{lemma:SCUC}, \ref{lemma:CSduality}, and \ref{lemma:smoothSmooth}, we get that the squared dual gauge $(\|\cdot\|_{\star})^2$ is a $(\frac{1}{8 \lambda} + 1)$-strongly smooth function with respect to $\|\cdot\|_{\star}$. Using the previous lemma, this implies that $((\|\cdot\|_\star)^2)^\star$  is a $G$-strongly convex function w.r.t. $\|\cdot\|_{\star \star} = \|\cdot\|$ with $G = (\frac{1}{8 \lambda} + 1)^{-1} \ge \lambda$, where the last inequality follows from the observation $\lambda \le \frac{1}{8}$. However, from Lemma \ref{lemma:commSquare}  $(\frac{1}{2}(\|\cdot\|_\star)^2)^* = \frac{1}{2}\|\cdot\|^2_{\star \star} = \frac{1}{2}\|\cdot\|^2$. This means that $\|\cdot\|^2$ is a $2\lambda$-strongly convex function w.r.t. $\|\cdot\|$. This concludes the proof of the theorem. 
	\end{proof}

%############################################################
%############################################################
%############################################################

	\subsection{Proof of Lemma \ref{lemma:smoothSmooth}} \label{sec:smoothSmooth}

	We now turn to the proof of Lemma \ref{lemma:smoothSmooth}.  We first consider the case when $\|\cdot\|_K$ is twice differentiable on $\R^d \setminus \{0\}$, and then handle the general case with a (non-trivial) approximation argument. We note that $(2,D)$-smoothness of a gauge does not guarantee twice differentiability, e.g. $\|x\| = (x_1^2 + (x_2^4 + x_3^4)^{1/2})^{1/2}$ is $(2,D)$-smooth but not twice differentiable at $(1,0,0)$.

%############################################################
%############################################################

	\subsubsection{Twice differentiable gauges}

Assume throughout this section that $\|\cdot\|_K$ is $(2,D)$-smooth and twice differentiable on $\R^d \setminus \{0\}$. We use $\nabla^2 f$ to denote the Hessian of a function $f$.

We start with the fact that to show strong smoothness of a function it suffices to upper bound its Hessian; this fact is classical for strong smoothness with respect to the Euclidean norm, and we just sketch that a standard proof also holds for the gauge case (and with the origin excluded).
	
	\begin{lemma} \label{lemma:hessImplSmooth}
		If $f : \R^d \rightarrow \R$ is twice differentiable on $\R^d \setminus \{0\}$ and $$y^\top (\nabla^2 f(x)) y \le D \|y\|_K^2$$ for all $x \neq 0$ and all $y$, then $f$ is $\frac{D}{2}$-strongly smooth with respect to $\|\cdot\|_K$. 
	\end{lemma}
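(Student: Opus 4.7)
The plan is to reduce the multivariate strong smoothness inequality to a one-dimensional calculation along the segment from $x$ to $y$, which is the standard way of turning a Hessian upper bound into a chord inequality. Concretely, I would fix $x, y \in \R^d$ and $\alpha \in [0,1]$ and, first under the assumption that the closed segment from $x$ to $y$ avoids the origin, introduce the one-variable restriction $g(s) := f((1-s)y + s x)$ for $s \in [0,1]$. The chain rule gives $g''(s) = (x-y)^\top \nabla^2 f((1-s)y + s x)(x-y)$, and the hypothesis immediately yields $g''(s) \le D\|x-y\|_K^2$ uniformly in $s$. It is crucial here that the Hessian bound is already phrased in the gauge $\|\cdot\|_K$, since otherwise one would pick up extra conversion factors between norms.

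From there I would invoke the one-dimensional fact that any $g : [0,1] \to \R$ with $g'' \le M$ satisfies the chord lower bound $g(\alpha) \ge \alpha g(1) + (1-\alpha) g(0) - \tfrac{M}{2}\alpha(1-\alpha)$; the quick way to see this is to check that $s \mapsto g(s) - \tfrac{M}{2}s^2$ is concave on $[0,1]$ and to apply concavity at $s = \alpha$, using $\alpha^2 - \alpha = -\alpha(1-\alpha)$. Substituting $M = D\|x-y\|_K^2$ and $g(0) = f(y)$, $g(1) = f(x)$ gives exactly the $\tfrac{D}{2}$-strong smoothness inequality of Definition \ref{def:SCF}:
\begin{align*}
f((1-\alpha)y + \alpha x) \,\ge\, \alpha f(x) + (1-\alpha) f(y) \,-\, \tfrac{D}{2}\, \alpha(1-\alpha)\, \|x-y\|_K^2.
\end{align*}

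The only step that requires genuine care, and the reason for the assumption of twice differentiability on $\R^d\setminus\{0\}$ rather than everywhere, is the degenerate case in which the segment from $x$ to $y$ meets the origin, so the Hessian hypothesis fails at one point of the segment. My plan here is a perturbation argument: pick a direction $v \in \R^d$ such that for every sufficiently small $\eps>0$ the perturbed segment from $x+\eps v$ to $y+\eps v$ misses the origin, apply the inequality already established to these perturbed endpoints, and take $\eps \downarrow 0$ using continuity of $f$ (which holds at $0$ for the applications we care about, such as $f = \|\cdot\|^2$). Since both sides of the desired inequality depend continuously on $x,y$, it passes to the limit. I expect this degenerate case to be the only place where any real care is needed; the rest is the standard textbook proof, transplanted to the gauge setting by simply using the gauge-valued Hessian bound verbatim.
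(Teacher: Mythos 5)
Your proposal is correct and follows essentially the same route as the paper's proof: both restrict $f$ to the line segment joining the two points, use the Hessian bound to control the second derivative of the restriction, and handle the degenerate case where the segment meets the origin by a small perturbation followed by a continuity argument. The only cosmetic difference is that you derive the one-dimensional chord inequality from concavity of $s \mapsto g(s) - \tfrac{M}{2}s^2$, whereas the paper invokes the mean-value form of Taylor's theorem at the two endpoints and takes a $\lambda,(1-\lambda)$ combination; these are two standard phrasings of the same calculation.
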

	
	\begin{proof}
	Fix $u,v \in \R^d$ and $\lambda \in [0,1]$. Assume for now that $0$ does not belong to the convex hull $\conv(\{u,v\})$ of $u$ and $v$. Then by the mean-value version of Taylor's theorem, there are non-zero vectors $x,x'$ such that 
	\begin{align*}	
		f(u) &= f(\lambda u + (1-\lambda) v) + \ip{\nabla f(\lambda u + (1-\lambda) v)}{(1-\lambda) (u-v)} + \frac{1}{2} (1-\lambda)^2 (u-v)^\top (\nabla^2 f(x)) (u-v)\\
		f(v) &= f(\lambda u + (1-\lambda) v) + \ip{\nabla f(\lambda u + (1-\lambda) v)}{\lambda (v-u)} + \frac{1}{2} \lambda^2 (u-v)^\top (\nabla^2 f(x')) (u-v),
	\end{align*}
	noting that we have $(u-v)$ in the last terms of both equations. Upper bounding $\nabla^2 f(x)$ and $\nabla^2 f(x')$ using the assumption of the lemma, and adding $\lambda$ times the first equation plus $(1-\lambda)$ times the second equation, yields
	\begin{align*}	
		\lambda f(u) + (1-\lambda) f(v) \le f(\lambda u + (1-\lambda) v) + \frac{D}{2} \lambda (1-\lambda) \|u-v\|_K^2.
	\end{align*}	
	By continuity of $f$, this inequality holds even if $0 \in \conv(\{u,v\})$ (e.g. by applying the inequality to $u'= u + t \ones$ and $v' = v + t \ones$ and taking $t\rightarrow 0$). This proves that $f$ is $\frac{D}{2}$-strongly smooth with respect to $\|\cdot\|_K$.
	\end{proof}
	
	Thus, to prove the desired property that $\|\cdot\|_K^2$ is strongly smooth w.r.t. $\|\cdot\|_K$, it suffices to upper bound its hessian. For $x,y \in \R^d$, define the function $f_{x,y}(t) := \|x + t y\|_K$. By the twice differentiability of $\|\cdot\|_K$, we have $f''_{x,y}(0) = y^\top (\nabla^2 \|x\|_K) y$ as long as $x \neq 0$. Moreover, if $\|x\|_K = 1$ we have
	\begin{align*}
		f''_{x,y}(0) = \lim_{t\rightarrow 0} \frac{f_{x,y}(t) + f_{x,y}(-t) - 2 f_{x,y}(0)}{t^2} = \lim_{t\rightarrow 0} \frac{\|x + t y\|_K + \|x- t y\|_K - 2}{t^2} \le 2D \|y\|_K, 
	\end{align*} 
%the formula can be checked via l'opital, or see https://math.stackexchange.com/questions/690060/definition-of-second-derivative-as-a-limit/690068
	where the last inequality follows from the $(2,D)$-smoothness of $\|\cdot\|_K$. Together these give the following bound for the hessian of $\|\cdot\|_K$ (but not yet of $\|\cdot\|_K^2$): for all $x$ such that $\|x\|_K = 1$ and all $y$
	\begin{align}
		\,y^\top (\nabla^2 \|x\|_K) y \le 2 D \|y\|_K^2. ~~~~~\label{eq:hess}
	\end{align}

%	
%	
%	{\color{gray} To start, take $x$ such that $\|x\|_K = 1$ and $y \in \R^n$. Using the second order expansion of $\|\cdot\|_K$ at $x$ in the directions $y$ and $-y$ and noticing that the first-order terms cancel out, we get 
%	%
%	\begin{align*}
%		\frac{\|x+y\|_K + \|x - y\|_K}{2} = \|x\|_K + \frac{1}{2}\,y^\top (\nabla^2 \|x\|_K) y + o(\|y\|^2_2).
%	\end{align*}
%	Since $\|\cdot\|_K$ is $(2,D)$-smooth, the left-hand side is at most $\|x\|_K + D \|y\|_{K}^2$ and hence
%	%
%	\begin{align*}
%		\,y^\top (\nabla^2 \|x\|_K) y \le 2 D \|y\|_K^2 + o(\|y\|^2_2).
%	\end{align*}	
%	Since this holds for all $y$, replacing it by $\e y$, multiplying throughout by $\frac{1}{\e^2}$, and taking $\e \rightarrow 0$ we see that 
%		%
%	\begin{align}
%		\,y^\top (\nabla^2 \|x\|_K) y \le 2 D \|y\|_K^2. \label{eq:hess}
%	\end{align}	
%	}

	Next, we apply the chain rule and see that the hessian of the squared gauge $\|\cdot\|_K^2$ has the following form
	\begin{align}
		\nabla^2 \|x\|_K^2 = 2 (\nabla \|x\|_K)\,(\nabla \|x\|_K)^\top + 2 \|x\|_K (\nabla^2 \|x\|_K) \label{eq:hessSquared}
	\end{align}
	for all $x \in \R^d \setminus \{0\}$. Since $\|\cdot\|_K$ is positively homogeneous of degree 1, Euler's theorem on homogeneous functions gives that $\nabla \|\cdot\|_K$ is positively homogeneous of degree 0 and $\nabla^2 \|\cdot\|_K$ is positively homogeneous of degree -1, namely
	\begin{align*}
	\nabla^2 \|\alpha x\|_K = \frac{1}{\alpha} \nabla^2 \|x\|_K ~~~~~~~\forall \alpha > 0.
	\end{align*}
	Letting $\tilde{x} := \frac{x}{\|x\|_K}$ and applying this observation to \eqref{eq:hessSquared} we obtain 
	\begin{align*}
		\nabla^2 \|x\|_K^2 = 2 (\nabla \|x\|_K)\,(\nabla \|x\|_K)^\top + 2 (\nabla^2 \|\tilde{x}\|_K) ~~~~~~~\forall x \in \R^d \setminus \{0\}.
	\end{align*}
	Then for every $x \in \R^d \setminus \{0\}$ and $y \in \R^d$ we have
	\begin{align*}
		y^\top (\nabla^2 \|x\|_K^2) y &= 2 \ip{\nabla \|x\|_K}{y}^2 + 2 y^\top (\nabla^2 \|\tilde{x}\|_K) y \le 2 \|y\|_{K}^2 + 4D \|y\|_K^2,
	\end{align*}
	where the inequality follows from $\ip{\nabla \|x\|_K}{y} \le \|\nabla\|x\|_K\|_{K^{\circ}}\cdot  \|y\|_K \le \|y\|_K$ (using \eqref{eq:genCS} and \eqref{eq:gradNorm}), and from \eqref{eq:hess}.
	
	Applying Lemma \ref{lemma:hessImplSmooth} then gives that $\|\cdot\|_K^2$ is $(2D+1)$-strongly smooth with respect to $\|\cdot\|_K$. This proves Lemma \ref{lemma:smoothSmooth} when $\|\cdot\|_K$ is twice differentiable.

%############################################################
%############################################################

	\subsubsection{General gauges}

	The idea is to approximate $\|\cdot\|_K$ with a gauge $\|\cdot\|_\e$ that is twice differentiable on $\R^d \setminus \{0\}$. While several such approximations are available, the difference is that $\|\cdot\|_\e$ also needs to preserve the $(2,D)$-smoothness of $\|\cdot\|_K$. For that we will use the following approximation result, which is a direct consequence of Theorem 3.4.1 of \cite{schneider} (applied to the support function of the polar $K^\circ$), which we prove next preserves $(2,D)$-smoothness. Recall that a function is $C^\infty$ if it has continuous derivatives of all orders. 
	
	\begin{lemma}
		Consider a gauge function $\|\cdot\|_K$. Then for every $\e \in (0,1)$ there is a gauge function $\|\cdot\|_\e$ with the following properties:
		
		\begin{itemize}
			\item $\|\cdot\|_{\e} : \R^d \rightarrow \R$ is a $C^\infty$ function over $\R^d \setminus \{0\}$
			\item $\|x\|_K (1- \e) \le \|x\|_\e \le (1+\e)\|x\|_K$ for all $x \in \R^n$
			\item There is a random variable $Z \in \R^n$ with atomless distribution such that $\|x\|_{\e} = \E_Z \big\|x + Z\,\|x\|_2\big\|_K$ and $\|Z\|_K \le \e$ with probability 1.  
		\end{itemize}
	\end{lemma}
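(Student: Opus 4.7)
The plan is to apply Schneider's Theorem 3.4.1 on $C^\infty$ approximation of convex bodies to the polar $K^\circ$, translating back via the duality $\|\cdot\|_K = \sigma_{K^\circ}$ from Lemma \ref{lemma:polar2}. That theorem supplies a mollification of the support function, rescaled so as to preserve positive $1$-homogeneity; $C^\infty$ regularity is inherited from that of the convolution kernel, and sublinearity is preserved by Schneider's specific construction.

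Concretely, I would fix a non-negative $C^\infty$ bump function $\phi : \R^d \to \R_+$ with $\int \phi = 1$ supported on $\{v : \|v\|_K \le \e\}$ (possibly after rescaling $\e$ by a constant depending only on the inradius and outradius of $K$), and take $Z$ to be a random vector with density $\phi$. Define
\begin{align*}
\|x\|_\e := \E_Z \big\|x + Z\,\|x\|_2\big\|_K.
\end{align*}
The third bullet is then immediate: $Z$ is atomless because $\phi$ is a continuous density, and $\|Z\|_K \le \e$ almost surely by the support condition. For the $C^\infty$ property, the change of variables $w = v\,\|x\|_2$ rewrites $\|x\|_\e = \|x\|_2^{-d}\int \|x + w\|_K\,\phi(w/\|x\|_2)\,dw$, moving the $x$-dependence entirely into the smooth kernel $\phi(w/\|x\|_2)/\|x\|_2^d$; differentiating under the integral then gives smoothness on $\R^d \setminus \{0\}$.

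For the approximation bound, the triangle inequality for the gauge gives
\begin{align*}
\bigl|\big\|x + Z\,\|x\|_2\big\|_K - \|x\|_K\bigr| \,\le\, \|Z\,\|x\|_2\|_K \,=\, \|x\|_2 \|Z\|_K \,\le\, \e\,\|x\|_2.
\end{align*}
Converting $\|x\|_2 \le R\,\|x\|_K$ via $K \subseteq B(R)$ yields $|\|x\|_\e - \|x\|_K| \le \e R\,\|x\|_K$; absorbing $R$ into the initial choice of $\e$ (i.e., starting from $\|Z\|_K \le \e/R$) gives the stated $(1-\e)\|x\|_K \le \|x\|_\e \le (1+\e)\|x\|_K$.

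The hard part will be verifying that $\|\cdot\|_\e$ is actually a gauge, i.e., sublinear. Positive $1$-homogeneity is immediate, since $\alpha x + Z\,\|\alpha x\|_2 = \alpha(x + Z\,\|x\|_2)$ for $\alpha > 0$. However, subadditivity is not manifest from the averaged expression, because the kernel's effective width scales with $\|x\|_2$ and hence with the direction of $x$. This is precisely the technical content borrowed from Schneider's Theorem 3.4.1: by performing the mollification at the level of the unit sphere (equivalently, via a rotation-averaged kernel, which preserves sublinearity because each function $u \mapsto \sigma_{K^\circ}(Ru)$ is sublinear for the linear operator $R$) and extending by homogeneity, one obtains a bona fide smoothed support function that, after a change of variables, admits the integral representation above.
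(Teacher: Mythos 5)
Your plan matches the paper's: the paper does not prove this lemma at all, but simply records it as a direct consequence of Schneider's Theorem~3.4.1 applied to the support function $\sigma_{K^\circ} = \|\cdot\|_K$; the formula $\E_Z\|x + Z\,\|x\|_2\|_K$ in the third bullet is precisely Schneider's regularization $\int h(u + \|u\|_2 z)\,\theta(z)\,dz$ with a smooth mollifier $\theta$. You correctly identify sublinearity as the non-trivial part and correctly attribute it to Schneider, so the high-level approach is the same.

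Two of your fleshed-out details need repair, and one side remark is off. First, the two-sided bound $\bigl|\,\|x + Z\|x\|_2\|_K - \|x\|_K\,\bigr| \le \|x\|_2\|Z\|_K$ uses symmetry: the lower bound $\|x\|_K - \|x + Z\|x\|_2\|_K \le \|-Z\|x\|_2\|_K = \|x\|_2\|-Z\|_K$ involves $\|-Z\|_K$, which for an asymmetric gauge can differ from $\|Z\|_K$. You need either to choose the support of $\phi$ so that $\max(\|Z\|_K,\|-Z\|_K) \le \e$ almost surely, or to introduce a constant $M$ with $\|-v\|_K \le M\|v\|_K$ as the paper does in the lemma that follows. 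Second, your substitution $w = v\|x\|_2$ yields $\|x\|_\e = \|x\|_2^{-d}\int \|x+w\|_K\,\phi(w/\|x\|_2)\,dw$, which still has $x$ inside the potentially non-smooth $\|\cdot\|_K$; the substitution that actually isolates the $x$-dependence is $y = x + v\|x\|_2$, giving $\|x\|_\e = \int \|y\|_K\,\phi\bigl((y-x)/\|x\|_2\bigr)\,\|x\|_2^{-d}\,dy$, where the kernel is $C^\infty$ in $x$ on $\R^d\setminus\{0\}$ and differentiation under the integral applies. Finally, your closing suggestion that Schneider's formula arises from a rotation-averaged mollification after a change of variables is not accurate: $\int_{SO(d)} h(Ru)\,\theta(R)\,dR$ and $\int h(u+\|u\|_2 z)\,\phi(z)\,dz$ are different constructions (the first is manifestly sublinear, the second is not), and Schneider establishes sublinearity of the latter by a separate argument rather than by reducing to rotations.
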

%M: More detail:
%Apply the theorem to sigma_{K^circ}. Then tilde{sigma_{K^circ}}(x) = E[sigma_{K^circ}(x + Z |x|)]. 
% Then T is the map that maps K^circ to A := T K^circ where sigma_A = tilde{sigma_{K^circ}}. 
%Point (c) of the theorem guarantees that delta(K^circ, T K^circ) = SUP_{x : |x|_2 = 1} |sigma_{K^circ}(x) - tilde{sigma_{K^circ}}(x)| <= e R, which is equivalent to SUP_{x : |x|_2 = 1} | |x|_K - |x|_e| <= e R. So for all x such that |x|_2 = 1 we have |x|_e = |x|_K \pm e*R*|x|_2. 
%By homogeneity this holds for all x. By renanimg e we have (1 \pm e) |x|_K	

	\begin{lemma}[Approximation preserves 2-smoothness]
		If $\|\cdot\|_K$ is $(2,D)$-smooth with modulus $D > 0$, then the gauge $\|\cdot\|_\e$ from the previous lemma is $(2,D (1 + O(\e)))$-smooth (where the asymptotics $O(\e)$ is with $\e \rightarrow 0$).
	\end{lemma}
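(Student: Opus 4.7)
Since $\|\cdot\|_\e$ is positively homogeneous of degree one (immediate from $\|\lambda x\|_2 = \lambda\|x\|_2$ and linearity of expectation), it suffices to prove the scale-invariant form
\[
\frac{\|x+y\|_\e + \|x-y\|_\e}{2} \le \|x\|_\e + D(1+O(\e))\frac{\|y\|_\e^2}{\|x\|_\e}
\]
for all $x\ne 0$ and $y$. The plan is to invoke $(2,D)$-smoothness of $\|\cdot\|_K$ twice: once to sharpen the approximation between $\|\cdot\|_K$ and $\|\cdot\|_\e$ from the baseline $(1\pm\e)$-multiplicative bound down to a $(1+O(\e^2))$-multiplicative one, and once to handle the pair $\|x+y\|_K + \|x-y\|_K$.

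First, exploiting the symmetry of $Z$ (which can be arranged in the Schneider construction, or else enforced by averaging against $-Z$), rewrite $\|z\|_\e = \tfrac{1}{2}\E_Z\bigl[\|z+\|z\|_2 Z\|_K + \|z-\|z\|_2 Z\|_K\bigr]$ and apply scale-invariant $(2,D)$-smoothness of $\|\cdot\|_K$ at $z$ with perturbation $\|z\|_2 Z$ pointwise in $Z$. Taking expectation, using $\|Z\|_K\le \e$ together with the norm-equivalence $\|z\|_2\le R\|z\|_K$ from $K\subseteq B(R)$, and combining with the Jensen lower bound, produces the sharpened sandwich
\[
\|z\|_K \le \|z\|_\e \le \|z\|_K\bigl(1+O(\e^2)\bigr).
\]
Plugging this into $\|x\pm y\|_\e$ and applying $(2,D)$-smoothness of $\|\cdot\|_K$ once more to bound $\|x+y\|_K + \|x-y\|_K \le 2\|x\|_K + 2D\|y\|_K^2/\|x\|_K$, then converting back to $\|\cdot\|_\e$ via the sandwich, yields
\[
\frac{\|x+y\|_\e + \|x-y\|_\e}{2} \le \|x\|_\e + O(\e^2)\,\|x\|_\e + D(1+O(\e^2))\frac{\|y\|_\e^2}{\|x\|_\e}.
\]

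The main obstacle is absorbing the additive residual $O(\e^2)\|x\|_\e$ into the multiplicative slack $D(1+O(\e))$ on the quadratic term, which fails when $\|y\|_\e$ is very small. I would handle this by a case split on $\|y\|_\e$ relative to $\|x\|_\e$: in the regime $\|y\|_\e^2 \ge C\e\|x\|_\e^2$ for a suitable ($D$- and body-dependent) constant $C$, the gap $\Theta(D\e)\,\|y\|_\e^2/\|x\|_\e$ dominates the $O(\e^2)\|x\|_\e$ residual and the claimed inequality follows directly. For the complementary very-small-$y$ regime, I would switch to the Hessian characterization: since $\|\cdot\|_\e$ is $C^\infty$ away from the origin by the previous lemma, the Taylor expansion $\frac{\|x+y\|_\e + \|x-y\|_\e}{2} = \|x\|_\e + \frac{1}{2}y^\top \nabla^2\|x\|_\e\,y + o(\|y\|_\e^2)$ reduces the task to bounding $\nabla^2\|x\|_\e$, which in turn is controlled by propagating the Hessian estimate \eqref{eq:hess} for $\|\cdot\|_K$ through the chain rule applied to the mollification $x\mapsto \E_Z\|x+\|x\|_2 Z\|_K$, giving $\tfrac{1}{2}y^\top \nabla^2\|x\|_\e\,y \le D(1+O(\e))\|y\|_\e^2/\|x\|_\e$. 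Combining the two regimes, all body-dependent constants $r,R$ collapse into the asymptotic $O(\e)$ as $\e \to 0$.
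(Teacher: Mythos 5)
Your proposal starts with a genuinely nice observation that differs from the paper: by symmetrizing $Z$ (which is legitimate, by averaging the mollifier against its reflection) and applying $(2,D)$-smoothness of $\|\cdot\|_K$ to the paired perturbations $z \pm Z\|z\|_2$, you sharpen the sandwich to $\|z\|_K \le \|z\|_\e \le (1 + O(\e^2))\|z\|_K$. That much is correct, and so is the observation that it directly settles the inequality in the regime $\|y\|_\e^2 \gtrsim \e\,\|x\|_\e^2$. But the complementary small-$y$ regime is where the difficulty actually lives, and the argument you sketch there has a genuine gap, for two reasons. First, it leans on ``propagating the Hessian estimate \eqref{eq:hess} for $\|\cdot\|_K$ through the chain rule,'' but \eqref{eq:hess} was derived under the standing assumption that $\|\cdot\|_K$ is twice differentiable --- precisely the hypothesis this lemma is meant to dispense with. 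Since $\|\cdot\|_K$ may not even possess a Hessian, there is nothing to propagate. Second, even granting a bound on $\nabla^2\|x\|_\e$, the Taylor identity $\frac{\|x+y\|_\e + \|x-y\|_\e}{2} = \|x\|_\e + \frac{1}{2}y^\top\nabla^2\|x\|_\e\,y + o(\|y\|^2)$ has a remainder that is not uniform in $y$; it is controlled by third derivatives of $\|\cdot\|_\e$, which degenerate as $\e\to 0$ (a mollification at scale $\e$ of a non-$C^2$ function has third derivatives that blow up). Near the case boundary $\|y\| \sim \sqrt{\e}\,\|x\|$ the remainder does not sit below the quadratic term, so the inequality is not established there. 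Note also that if you did have a uniform Hessian bound for $\|\cdot\|_\e$, you would not need the case split or the Taylor argument at all: you could apply it for every $y$ via Lemma \ref{lemma:hessImplSmooth} --- which suggests the proposal's two cases are not cleanly fitting together.

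For comparison, the paper also case-splits but on the $\e$-independent (and much coarser) threshold $\|u+v\| \lessgtr \sqrt{D}\,\|u-v\|$, i.e.\ $\|y\| \gtrless \|x\|/\sqrt{D}$; the crude $(1\pm\e)$ sandwich already suffices in the large-$y$ case, so the $O(\e^2)$ sharpening is not actually needed. The hard small-$y$ case is then handled not by differentiating but by applying the scale-invariant $(2,D)$-smoothness of $\|\cdot\|_K$ pointwise to the random perturbations $U = u + Z\|u\|_2$, $V = v + Z\|v\|_2$, and then bounding each of $\E\|\tfrac{U+V}{2}\|_K$, $\|\tfrac{U+V}{2}\|_K$ from below, and $\|\tfrac{U-V}{2}\|_K$ from above; the case assumption is exactly what guarantees the perturbed midpoint $\|\tfrac{U+V}{2}\|_K$ stays bounded away from zero almost surely, so the expected quadratic term remains of the right order. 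That pointwise-then-average route is what your proposal is missing.
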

	
	\begin{proof}
		Before starting, we note that since $K$ is compact and contains the origin in its interior there are positive scalars $M, N < \infty$ such that:
		\begin{gather*}
			\|-x\|_K \le M \|x\|_K  ~~~~\textrm{for all $x \in \R^n$}\\
			\frac{1}{N} \|x\|_K \le \|x\|_2 \le N \|x\|_K~~~~\textrm{for all $x \in \R^n$}.
		\end{gather*}
		We assume WLOG that $\e$ is sufficiently small compared to $N$ and $M$. To simplify the notation we use $\|\cdot\|$ to denote $\|\cdot\|_K$, keeping in mind that $\|\cdot\|$ is not symmetric. 
		
		Using homogeneity of $\|\cdot\|$, $(2,D)$-smoothness of $\|\cdot\|$ is equivalent to its homogenized version (which does not require $\|x\|=1$)
		\begin{align*}
			\frac{\|x+y\| + \|x-y\|}{2} &\le \|x\| + D \cdot \frac{\|y\|^2}{\|x\|} &\textrm{for all $x \neq 0$, all $y$},
		\end{align*} 
		which using $u=x+y$ and $v=x-y$ is equivalent to 
		\begin{align}
			\frac{\|u\| + \|v\|}{2} &\le \bigg\|\frac{u+v}{2}\bigg\| + D \cdot \frac{\|\frac{u-v}{2}\|^2}{\|\frac{u+v}{2}\|} &\textrm{for all $u,v$ such that $u+v \neq 0$}. \label{eq:equivSmooth}
		\end{align}
		Thus, $\|\cdot\|$ satisfies this inequality, and our goal is to prove that $\|\cdot\|_{\e}$ satisfies this inequality with $D(1+O(\e))$ replacing $D$. 
		
		So consider $u,v$ such that $u+v \neq 0$. 
		
		\paragraph{Case 1: $\|u + v\| \le \sqrt{D}\, \|u-v\|$.} This implies that $\|\frac{u+v}{2}\| \le D \|\frac{u-v}{2}\|^2/ \|\frac{u+v}{2}\|$. Using this (on the starred inequality below) and the approximation properties of $\|\cdot\|_\e$ we get
		\begin{align*}
			\frac{\|u\|_\e + \|v\|_{\e}}{2} \le (1 + \e) \frac{\|u\| + \|v\|}{2} &\stackrel{\eqref{eq:equivSmooth}}{\le} (1+\e) \bigg\|\frac{u+v}{2}\bigg\| + (1+\e) D \cdot \frac{\|\frac{u-v}{2}\|^2}{\|\frac{u+v}{2}\|} \\
			&\stackrel{(\star)}{\le}  (1-\e) \bigg\|\frac{u+v}{2}\bigg\| + (1+3\e) D \cdot \frac{\|\frac{u-v}{2}\|^2}{\|\frac{u+v}{2}\|} \\
			&\le \bigg\|\frac{u+v}{2}\bigg\|_\e + \frac{(1+3\e) (1+\e)}{(1-\e)^2} D \cdot \frac{\|\frac{u-v}{2}\|^2_\e}{\|\frac{u+v}{2}\|_\e}.
		\end{align*}
		The multiplier on the last term is $D(1-O(\e))$, and so $\|\cdot\|_\e$ satisfies the modified version of \eqref{eq:equivSmooth} as desired.
		
%############################################################

	\paragraph{Case 2: $\|u + v\| > \sqrt{D}\, \|u-v\|$.} Let $Z$ be the random variable such that $\|x\|_\e = \E_Z \|x + Z \|x\|_2\|$, and define $U = u + Z \|u\|_2$ and $V = v + Z \|v\|_2$. Applying \eqref{eq:equivSmooth} to $U$ and $V$ in each scenario where $U + V \neq 0$ (which happens with probability 1 since $Z$ is atomless) 
		\begin{align}
			\frac{\|u\|_\e + \|v\|_{\e}}{2} &\,=\, \E \bigg[\frac{\|U\| + \|V\|}{2}\bigg] \le \E\, \bigg\|\frac{U+V}{2}\bigg\| + D \cdot \E\bigg[ \frac{\|\frac{U-V}{2}\|^2}{\|\frac{U+V}{2}\|}\bigg]. \label{eq:case2}
		\end{align}
	We bound the terms in the right-hand side. To upper bound the first term, by subadditivity
	\begin{align}
		\E\, \bigg\|\frac{U+V}{2}\bigg\| &\,\le\, \E\, \bigg\|\frac{u+v}{2} + Z\,\|\tfrac{u + v}{2}\|_2\bigg\| \,+\, \E\,\bigg\|Z \Big(\tfrac{\|u\|_2 + \|v\|_2}{2} - \|\tfrac{u+v}{2}\|_2\Big)\bigg\| \notag\\
		&\,\le\, \bigg\|\frac{u+v}{2}\bigg\|_{\e} \,+\, \E\|Z\| \cdot  \Big(\tfrac{\|u\|_2 + \|v\|_2}{2} - \|\tfrac{u+v}{2}\|_2\Big)\notag\\
		&\,\le\, \bigg\|\frac{u+v}{2}\bigg\|_{\e} \,+\, \frac{\e}{2}  \cdot \frac{\|\frac{u-v}{2}\|^2_2}{\|\frac{u+v}{2}\|_2} \notag\\
		&\,\le\, \bigg\|\frac{u+v}{2}\bigg\|_{\e} \,+\, \frac{\e N^3}{2} \cdot \frac{\|\frac{u-v}{2}\|^2}{\|\frac{u+v}{2}\|}    ,  \label{eq:UBUplusV}
	\end{align}
	where the third inequality follows from $\|Z\| \le \e$ and the fact the Euclidean norm is $(2,\frac{1}{2})$-smooth. We also have the following lower bound that holds with probability 1:
	\begin{align}
		\bigg\|\frac{U+V}{2}\bigg\| \,\ge\, \bigg\|\frac{u+v}{2}\bigg\| - \bigg\|-Z\cdot\tfrac{\|u\|_2 + \|v\|_2}{2}\bigg\| &\,\ge\, \bigg\|\frac{u+v}{2}\bigg\| - \e M \bigg( \bigg\|\frac{u+v}{2}\bigg\|_2 + \frac{1}{2} \cdot \frac{\|\frac{u-v}{2}\|^2_2}{\|\frac{u+v}{2}\|_2}\bigg)\notag \\
		&\,\ge\, (1-\e M N) \bigg\|\frac{u+v}{2}\bigg\| -   \frac{\e M N^3}{2}  \cdot \frac{\|\frac{u-v}{2}\|^2}{\|\frac{u+v}{2}\|} \notag\\
		&\,\ge\, (1-\e M N - \tfrac{\e M N^3}{2D}) \bigg\|\frac{u+v}{2}\bigg\|, \label{eq:LBUplusV}
	\end{align}
	where in the second inequality we use that $\|-Z\| \le M \|Z\| \le \e M$ and again the fact that the Euclidean norm is $(2,\frac{1}{2})$-smooth,  and in the last inequality we use the assumption of Case 2. Finally, we upper bound
	\begin{align}
		\bigg\|\frac{U - V}{2}\bigg\| \le \bigg\|\frac{u-v}{2}\bigg\| + \e\bigg|\frac{\|u\|_2-\|v\|_2}{2}  \bigg| \le \bigg\|\frac{u-v}{2}\bigg\| + \e \,\bigg\|\frac{u - v}{2}  \bigg\|_2 \le (1 + \e N) \bigg\|\frac{u-v}{2}\bigg\|. \label{eq:UBUminusV}
	\end{align}
	
	Plugging the bounds from inequalities \eqref{eq:UBUplusV}-\eqref{eq:UBUminusV} on \eqref{eq:case2} gives 
		\begin{align*}
			\frac{\|u\|_\e + \|v\|_{\e}}{2} ~\le~ \bigg\|\frac{u+v}{2}\bigg\|_\e + D (1 + O(\e)) \cdot \frac{\|\frac{u-v}{2}\|^2}{\|\frac{u+v}{2}\|} ~\le~ \bigg\|\frac{u+v}{2}\bigg\|_\e + D (1 + O(\e)) \cdot \frac{\|\frac{u-v}{2}\|^2_\e}{\|\frac{u+v}{2}\|_\e}, \\
		\end{align*}
		showing $\|\cdot\|_\e$ satisfies the modified version of \eqref{eq:equivSmooth} as desired. This concludes the proof of the lemma.		
	\end{proof}
	
	We finally prove Lemma \ref{lemma:smoothSmooth} without any differentiability assumption. Consider $\|\cdot\|_K$ and the smooth approximation $\|\cdot\|_\e$ given above. Since $\|\cdot\|_\e$ is twice differentiable and $(2, D (1+O(\e)))$-smooth, we know from the previous section that $\|\cdot\|_\e^2$ is a $(D (1+O(\e)) + 1)$-strongly smooth function w.r.t. $\|\cdot\|_\e$, i.e. for each fixed $u,v \in \R^d$ and $\alpha \in [0,1]$
	\begin{align*}
		\|\alpha u + (1-\alpha) v\|_\e^2 \ge \alpha \|u\|_\e^2 + (1-\alpha) \|v\|_\e^2 - \Big(D (1+O(\e)) + 1\Big)\,\alpha (1-\alpha) \|u-v\|_\e^2.
	\end{align*} 
	Taking $\e \rightarrow 0$ shows that $\|\cdot\|^2$ is $(2D+1)$-strongly smooth with respect to $\|\cdot\|$, finishing the proof of the lemma.

%############################################################
%############################################################
%############################################################
%############################################################

		\section{Online Linear Optimization on Curved Sets} \label{sec:FTL}	
		
		The goal of this section is to develop the informal principle stated in Theorem \ref{thm:principle}. We briefly recall the Online Linear Optimization (OLO) problem described in the introduction: a convex body $K$ (playing set) is given upfront; in each time step the algorithm first produces a point $x_t \in K$ using the information obtained thus far, sees a gain vector vector $g_t$, and obtains gain $\ip{g_t}{x_t}$. The goal is to minimize the regret against the best fixed action:
		\begin{align*}
			\textrm{Regret} := \max_{x \in K} \sum_{t =1}^T \ip{g_t}{x} - \sum_{t=1}^T \ip{g_t}{x_t}.
		\end{align*}
		We are interest in the case where $K$ is strongly convex. 
		
		Follow the Leader (FTL) is arguably the simplest algorithm for this problem, being simply greedy in the previous gain vectors: letting $s_t := g_1 + \ldots + g_t$, the algorithm at time $t$ chooses an action 
		\begin{align}
		x_t \in \argmax_{x \in K} \ip{s_{t-1}}{x} \tag{FTL}
		\end{align}
		($x_1$ is chosen as an arbitrary point in $K$). It is well-known that whenever FTL is stable, namely actions $x_t$ and $x_{t-1}$ on consecutive times are ``similar'', it obtains good regret guarantees; in fact, this is the basis for the analysis of most OLO algorithms. More precisely, Lemma 2.1 of~\citep{shalevSurvey} gives the following.
		
		\begin{lemma} \label{lemma:FTLbasic}
			The regret of FTL is at most $\sum_{t =1}^T \ip{g_t}{x_{t+1} - x_t}$.
		\end{lemma}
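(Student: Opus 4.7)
The plan is to proceed via the classical ``Be-The-Leader'' (BTL) argument: first show that the hypothetical algorithm that plays the leader of the \emph{current} round (i.e., plays $x_{t+1}$ at time $t$, having cheated by peeking at $g_t$) dominates any fixed action, and then convert this into a regret bound for FTL by adding and subtracting its actual plays.

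Concretely, the first step is to prove by induction on $T$ that
\begin{align*}
\sum_{t=1}^T \ip{g_t}{x_{t+1}} \,\ge\, \max_{x \in K} \sum_{t=1}^T \ip{g_t}{x}.
\end{align*}
The base case $T=1$ is immediate, since $x_2 \in \argmax_{x\in K}\ip{s_1}{x} = \argmax_{x \in K}\ip{g_1}{x}$. For the inductive step, assume the bound holds for $T-1$, apply it with the particular competitor $x = x_{T+1}$ to obtain $\sum_{t=1}^{T-1}\ip{g_t}{x_{t+1}} \ge \sum_{t=1}^{T-1}\ip{g_t}{x_{T+1}}$, add $\ip{g_T}{x_{T+1}}$ to both sides, and then use that $x_{T+1} \in \argmax_{x\in K}\ip{s_T}{x}$ by the FTL definition to replace the right-hand side by $\max_{x\in K}\sum_{t=1}^T \ip{g_t}{x}$.

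The second step is routine bookkeeping: subtracting $\sum_t \ip{g_t}{x_t}$ from both sides of the BTL inequality gives
\begin{align*}
\max_{x\in K}\sum_{t=1}^T \ip{g_t}{x} \,-\, \sum_{t=1}^T \ip{g_t}{x_t} \,\le\, \sum_{t=1}^T \ip{g_t}{x_{t+1} - x_t},
\end{align*}
which is exactly the claimed bound on the regret of FTL.

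There is really no hard step here; the only slightly subtle point is the induction in BTL, where one must remember to instantiate the inductive hypothesis at the \emph{future} iterate $x_{T+1}$ (rather than at the eventual competitor) so that the recursion closes. Everything else is a direct consequence of the defining optimality of $x_{t+1}$ against the cumulative gain $s_t$.
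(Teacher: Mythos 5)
Your proof is correct, and it is precisely the standard Be-The-Leader argument that the paper is invoking by citing Lemma 2.1 of the Shalev-Shwartz survey; the paper does not spell out its own proof, but your induction and bookkeeping match the referenced one exactly.
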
  
	
	Unfortunately in general FTL can be quite unstable: For example, consider the instance $K = [-1,1]^2$, with gain sequence $g_1 = (1, 0.01)$ and for $t \ge 2$ the gains $g_t$ alternate between $(1,-0.1)$ and $(1,0.1)$. Even though the gain vectors are very similar across time steps, the actions of FTL alternate between $(1,1)$ and $(1,-1)$, being extremely unstable. In addition, its regret is $\Omega(T)$, which up to constants is worst possible.
	
	 However, the intuition is when $K$ is ``curved'', we should have $x_{t+1} \approx x_t$, as long as the directions of $s_{t+1}$ and $s_t$ are similar, see Figure \ref{fig:FTL}.a. 		More formally, notice that $x_t$ is the optimizer of the support function $\sigma_K(s_{t-1}) \stackrel{\textrm{def.}}{=} \max_{x \in K} \ip{s_{t-1}}{x}$, and because of that it is a subgradient of it: $x_t \in \partial \sigma_K(s_{t-1})$.
%	 ; this is intuitive, since $\sigma_K(s)$ is the max of all linear functions $\{s \mapsto \ip{s}{x}\}_{x \in K}$, the ``right linearization'' of $\sigma_K$ at $s$ should be given by a linear function $s \mapsto \ip{s}{x^*}$ satisfying $\sigma_K(s) = \ip{s}{x^*}$. 	 
	 In addition, if $K$ is strongly convex, then $\sigma_K$ is differentiable everywhere except the origin, and hence $x_t = \nabla \sigma_K(s_{t-1})$ as long as $s_{t-1} \neq 0$~\citep{diffSupport}, see Figure \ref{fig:FTL}.b.

\begin{figure}[h]
	\centering		
\includegraphics[width=0.85\textwidth]{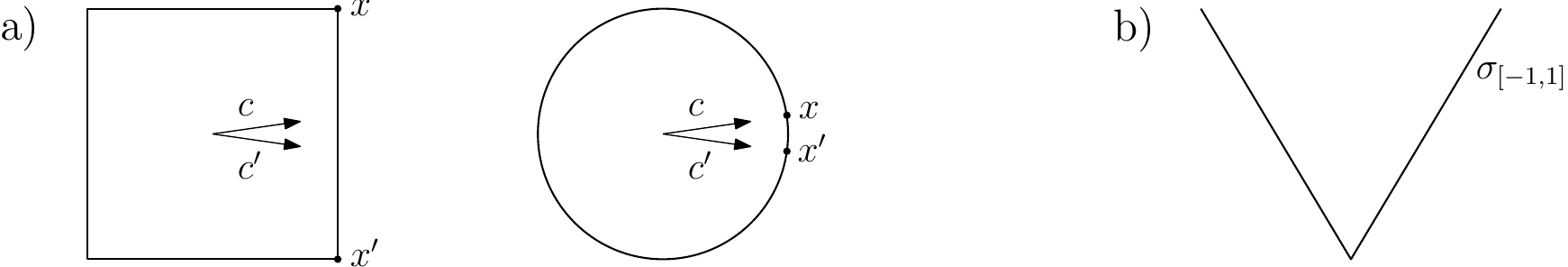}
	\caption{\small a) Points $x$ and $x'$ represent the maximizers of the linear functions $c$ and $c'$ over $K$, respectively. When $K$ is not strongly convex, like the cube, $x$ and $x'$ can be far from each other even if the directions $c,c'$ are similar. When $K$ is strongly convex, like the ball, the maximizers get closer as the directions get closer. b) The support function of the set $[-1,1]$ is $\sigma_{[-1,1]}(s) = \max\{-s,s\}$. The gradients exists and are Lipschitz everywhere except the origin. Notice $\nabla \sigma_{[-1,1]}(s) \in \{-1,1\}$ is the vertex $x^*$ of the set that achieves $\sigma_{[-1,1]}(s) = s \cdot x^*$. Notice that the gradient $\nabla \sigma_{[-1,1]}$ is not stable around the origin.}
	\label{fig:FTL}
\end{figure}
		
	Thus, the FTL stability requirement $x_{t+1} \approx x_t$ is equivalent to $\nabla \sigma_K(s_t) \approx \nabla \sigma_K(s_{t-1})$, namely stability of the gradient of the support function. A big problem is that since $\sigma_K$ is never differentiable at the origin, gradients are not stable around there. %For example, if $K = [-1,1]$, $g_1 = -0.5$ and the other $g_t$'s alternate $\pm 1$, we have $\nabla \sigma_K(s_t)$ alternating $\pm 1$; notice $s_t = \pm 0.5 \approx 0$ ($[-1,1]$ is even strongly convex). 

	But when $K$ is strongly convex, this is the only problem: $\nabla \sigma_K$ is Lipschitz \textbf{over the unit sphere}. This fact has been rediscovered several times~\citep{vial,polovinkin,balashov,abernethyEtal}; for example, this is Lemma 2.2 of~\cite{balashov}.
	
	\begin{lemma}[Lipschitz gradients over the sphere] \label{lemma:lipSphere}
		If $K \subseteq \R^d$ is $\lambda$-strongly convex with respect to a norm $\|\cdot\|$, then for all $u,v$ with $\|u\| = \|v\| = 1$ we have
		\begin{align*}
			\|\nabla \sigma_K(u) - \nabla \sigma_K(v)\|_{\star} \le \frac{1}{4\lambda} \|u- v\|.
		\end{align*}
%		that is, the map $\nabla \sigma_K : (\R^d, \|\cdot\|) \rightarrow (\R^d, \|\cdot\|_{\star})$ is $\frac{1}{4\lambda}$-Lipschitz over the sphere. 
	\end{lemma}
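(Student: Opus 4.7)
The argument combines the ``bulge'' containment from strong convexity with the first-order optimality of the gradients $x_u := \nabla \sigma_K(u)$ and $x_v := \nabla \sigma_K(v)$, which are well-defined on the unit sphere because strong convexity of $K$ implies strict convexity: the maximizers of $\langle u, \cdot \rangle$ and $\langle v, \cdot \rangle$ over $K$ are unique, and $\sigma_K$ is differentiable off the origin (as noted right before the lemma, citing \cite{diffSupport}).

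First I would apply $\lambda$-strong convexity to the pair $x_u, x_v \in K$: letting $B$ denote the unit ball of $\|\cdot\|$, Definition~\ref{def:SC} gives
$$\frac{x_u + x_v}{2} + \lambda\,\|x_u - x_v\|^2\, B \,\subseteq\, K.$$
Since $x_u$ maximizes $\langle u, \cdot \rangle$ over $K$, picking $b \in B$ that attains $\sup_{b \in B}\langle u, b\rangle = \|u\|_\star$ and plugging $y = \tfrac{x_u + x_v}{2} + \lambda \|x_u - x_v\|^2 b$ into the optimality inequality $\langle u, y\rangle \le \langle u, x_u\rangle$ yields
$$\tfrac{1}{2}\langle u, x_u - x_v\rangle \,\ge\, \lambda\,\|x_u - x_v\|^2\,\|u\|_\star.$$
A completely symmetric inequality for $v$ (using $-b$ in place of $b$, which still lies in $B$) and summing the two gives
$$\tfrac{1}{2}\langle u - v, x_u - x_v\rangle \,\ge\, \lambda\,\|x_u - x_v\|^2\,(\|u\|_\star + \|v\|_\star).$$

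To close the argument I would apply the generalized Cauchy--Schwarz inequality \eqref{eq:genCS} to the left-hand side, pairing $u-v$ with $x_u - x_v$ to split it as a product of a norm of $u-v$ with the dual norm of $x_u - x_v$; then dividing by $\|x_u - x_v\|_\star$ and using the normalization of $u,v$ on the unit sphere to convert $\|u\|_\star + \|v\|_\star$ into the constant $2$ yields the factor $\tfrac{1}{4\lambda}$. The genuine substance of the proof is just the ``bulge plus optimality'' step together with one application of Cauchy--Schwarz; the only delicate point is aligning the primal/dual norm conventions at each step so that the final inequality appears exactly in the form stated, which is essentially a bookkeeping matter once one has settled on the correct pairing in Cauchy--Schwarz.
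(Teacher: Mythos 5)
Your strategy is the standard one for this lemma (it is essentially the argument in Balashov and the other references the paper cites): apply the strong-convexity ``bulge'' inclusion at the two maximizers, read off two first-order optimality inequalities, sum them, and close with generalized Cauchy--Schwarz. The paper itself does not reproduce a proof, so I am comparing against what this argument actually needs.

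The step you dismiss as bookkeeping is precisely where the argument does not close as you have set it up. Taking $B$ to be the unit ball of $\|\cdot\|$, your two optimality inequalities produce $\|u\|_\star$ and $\|v\|_\star$ on the right (not $\|u\|$ and $\|v\|$), and these are \emph{not} equal to $1$ under the hypothesis $\|u\|=\|v\|=1$ for a general norm. Moreover, after summing you have $\lambda\,\|x_u-x_v\|^2\,(\|u\|_\star+\|v\|_\star)$ on one side, and Cauchy--Schwarz (in the pairing you choose) produces $\|u-v\|\,\|x_u-x_v\|_\star$ on the other; the quantities $\|x_u-x_v\|^2$ and $\|x_u-x_v\|_\star$ live in different norms and do not cancel. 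So with your choice of ball the computation yields, after the other pairing and division by $\|x_u-x_v\|$, the conclusion $\|x_u-x_v\|\le\frac{1}{4\lambda}\|u-v\|_\star$ under $\|u\|_\star=\|v\|_\star=1$, which is the statement with primal and dual roles swapped, not the statement as written.

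The fix is to take the ball in the strong-convexity inclusion to be the unit ball of the \emph{dual} norm $\|\cdot\|_\star$, i.e.\ read ``$\lambda$-strongly convex w.r.t.\ $\|\cdot\|$'' as $\frac{x+y}{2}+\lambda\|x-y\|_\star^2\,B_\star\subseteq K$ (with $B_\star$ the $\|\cdot\|_\star$-ball), which is the convention forced by the surrounding usage in the paper, where $\|\cdot\|$ is applied to gain vectors and $\|\cdot\|_\star$ to points of $K$. Then $\sup_{b\in B_\star}\langle u,b\rangle=\|u\|_{\star\star}=\|u\|=1$, the optimality inequalities produce a clean factor $2\lambda\|x_u-x_v\|_\star^2$ after summation, and Cauchy--Schwarz with the pairing $\langle u-v,x_u-x_v\rangle\le\|u-v\|\,\|x_u-x_v\|_\star$ lets you divide once by $\|x_u-x_v\|_\star$ and obtain exactly $\|x_u-x_v\|_\star\le\frac{1}{4\lambda}\|u-v\|$. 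In the Euclidean case the two readings coincide and your computation is correct; for a general norm, choosing the right ball is a substantive (one-line, but non-optional) step, not bookkeeping, and should be stated explicitly.
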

		
	Just using this limited ``sphere-Lipschitz'' property (and Lemma \ref{lemma:FTLbasic}) we get a generic upper bound on the regret of FTL on strongly convex sets.\footnote{This is similar to the conclusion of Proposition 2 plus inequality (6) of~\citep{curvedFTL}, but arguably with a simpler and more transparent proof.}

	\begin{lemma}[FTL regret from sphere-Lipschitz] \label{lemma:regretLip}
		If $K \subseteq \R^d$ is such that the gradient of its support function $\sigma_K$ satisfies the Lipschitz gradient condition of Lemma \ref{lemma:lipSphere}, then the regret of FTL is at most 
		\begin{align*}
		 \frac{1}{2\lambda} \sum_t \frac{\|g_t\|^2}{\|s_t\|},
		\end{align*}	
		as long as $s_t \neq 0$ for all $t$. 
	\end{lemma}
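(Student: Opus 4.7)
The plan is to chain Lemma \ref{lemma:FTLbasic} with the sphere-Lipschitz hypothesis via a sequence of elementary inequalities. Starting from $\textrm{Regret} \le \sum_t \ip{g_t}{x_{t+1} - x_t}$, I would apply the generalized Cauchy-Schwarz inequality \eqref{eq:genCS} termwise to get $\ip{g_t}{x_{t+1} - x_t} \le \|g_t\|\,\|x_{t+1} - x_t\|_{\star}$, thereby reducing the task to upper bounding $\|x_{t+1} - x_t\|_{\star}$ by roughly $\|g_t\|/(2\lambda\|s_t\|)$.

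The key step is the identification $x_t = \nabla \sigma_K(s_{t-1})$ and $x_{t+1} = \nabla \sigma_K(s_t)$, which is legitimate since $s_{t-1},s_t \neq 0$ and the hypothesis gives differentiability of $\sigma_K$ away from the origin (as noted right before the statement of Lemma \ref{lemma:lipSphere}). Because $\sigma_K$ is positively homogeneous of degree $1$, $\nabla \sigma_K$ is positively homogeneous of degree $0$, so $\nabla \sigma_K(s) = \nabla \sigma_K(s/\|s\|)$ for any $s \neq 0$. Applying the sphere-Lipschitz hypothesis to the unit vectors $s_t/\|s_t\|$ and $s_{t-1}/\|s_{t-1}\|$ then gives
\[
\|x_{t+1} - x_t\|_{\star} \,\le\, \frac{1}{4\lambda}\,\left\|\frac{s_t}{\|s_t\|} - \frac{s_{t-1}}{\|s_{t-1}\|}\right\|.
\]

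The final ingredient is the standard geometric inequality $\|s_t/\|s_t\| - s_{t-1}/\|s_{t-1}\|\| \le 2\|g_t\|/\|s_t\|$, which I would obtain by writing
\[
\frac{s_t}{\|s_t\|} - \frac{s_{t-1}}{\|s_{t-1}\|} \,=\, \frac{s_t - s_{t-1}}{\|s_t\|} \,+\, s_{t-1}\,\frac{\|s_{t-1}\| - \|s_t\|}{\|s_t\|\,\|s_{t-1}\|},
\]
then applying the triangle inequality together with $|\|s_t\|-\|s_{t-1}\|| \le \|s_t - s_{t-1}\| = \|g_t\|$. Chaining these bounds yields $\ip{g_t}{x_{t+1} - x_t} \le \|g_t\|^2/(2\lambda\|s_t\|)$, and summing over $t$ produces the claimed regret bound.

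The only nontrivial subtlety, and in some sense the main obstacle, is the boundary case $t = 1$: since $s_0 = 0$, $x_1$ is arbitrary and cannot be written as $\nabla \sigma_K(s_0)$, so the derivation above does not directly handle the term $\ip{g_1}{x_2 - x_1}$. I would circumvent this by choosing the initial action as $x_1 := \nabla \sigma_K(g_1)$, so that $x_1 = x_2$ and the $t = 1$ term of Lemma \ref{lemma:FTLbasic} vanishes; all remaining terms then fit the template above exactly.
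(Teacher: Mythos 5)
Your main chain (Lemma~\ref{lemma:FTLbasic} $\to$ generalized Cauchy--Schwarz $\to$ the identification $x_{t+1} = \nabla\sigma_K(s_t)$ via degree-$0$ homogeneity of $\nabla\sigma_K$ $\to$ the sphere-Lipschitz bound $\to$ the decomposition $s_t/\|s_t\| - s_{t-1}/\|s_{t-1}\|$) is exactly the paper's argument, including the same telescoping trick to get $2\|g_t\|/\|s_t\|$.

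You are also right to flag the $t=1$ term, which the paper glosses over by writing $x_t = \nabla\sigma(s_{t-1})$ uniformly even though $s_0 = 0$. However, your proposed fix does not work as stated: $x_1$ must be chosen \emph{before} the adversary reveals $g_1$, so the algorithm cannot set $x_1 := \nabla\sigma_K(g_1)$ (that would be a Be-the-Leader move, not a legal first action). Fortunately the term is already covered by the very Lipschitz hypothesis you are using: pick $x_1 := \nabla\sigma_K(v)$ for any fixed unit vector $v$ (chosen without looking at $g_1$). Then
\begin{align*}
\|x_2 - x_1\|_{\star} = \big\|\nabla\sigma_K(\tfrac{s_1}{\|s_1\|}) - \nabla\sigma_K(v)\big\|_{\star} \le \frac{1}{4\lambda}\,\Big\|\tfrac{s_1}{\|s_1\|} - v\Big\| \le \frac{1}{2\lambda},
\end{align*}
so $\ip{g_1}{x_2 - x_1} \le \|g_1\|/(2\lambda) = \|g_1\|^2/(2\lambda\|s_1\|)$, which matches the $t=1$ summand of the claimed bound. (Equivalently, the sphere-Lipschitz condition forces $\diam_\star(K) \le \tfrac{1}{2\lambda}$, so the bound holds for any $x_1 \in K$.) With this replacement your proof is complete and follows essentially the same route as the paper.
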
	
	
	\begin{proof}
	To simplify the notation we drop the subscript from $\sigma_K$. From Lemma \ref{lemma:FTLbasic} and the generalized Cauchy-Schwarz inequality \eqref{eq:genCS}, the regret of FTL is at most
	\begin{align}
	\textrm{regret} \le \sum_{t =1}^T \|g_t\|\|x_{t+1} - x_t\|_{\star} = \sum_{t =1}^T \|g_t\|\|\nabla \sigma(s_t) - \nabla \sigma(s_{t-1})\|_{\star}. \label{eq:FTL1}
	\end{align}
	We upper bound this dual norm. By positive homogeneity of $\sigma$ we have $\nabla \sigma(s_t) = \nabla \sigma(\frac{s_t}{\|s_t\|})$, so Lemma~\ref{lemma:lipSphere} implies
		\begin{align}
			\|\nabla \sigma(s_t) - \nabla \sigma(s_{t-1})\|_{\star} &= \bigg\|\nabla \sigma\left(\tfrac{s_t}{\|s_t\|}\right) - \nabla \sigma\left(\tfrac{s_{t-1}}{\|s_{t-1}\|}\right)\bigg\|_{\star} \le \frac{1}{4\lambda} \left\|\frac{s_t}{\|s_t\|} - \frac{s_{t-1}}{\|s_{t-1}\|} \right\| \label{eq:FTL2}.
		\end{align}
	We claim that the norm on the right-hand side is at most $2\frac{\|g_t\|}{\|s_t\|}$. To see this, since $s_t = s_{t-1} + g_t$ we can use triangle inequality to upper bound it by 
	\begin{align}
			\frac{\|g\|}{\|s_t\|} + \bigg\|\frac{s_{t-1}}{\|s_t\|} - \frac{s_{t-1}}{\|s_{t-1}\|} \bigg\|  = \frac{\|g_t\|}{\|s_t\|} + \bigg|\frac{\|s_{t-1}\|}{\|s_t\|} - 1 \bigg| \le \frac{\|g_t\|}{\|s_t\|} + \frac{\|g_t\|}{\|s_t\|} = 2\frac{\|g_t\|}{\|s_t\|},
		\end{align}
		where in the first equation we used the manipulation $\|\alpha u\| = |\alpha| \|u\| = | \alpha \|u\||$ valid for any scalar $\alpha$ and vector $u$, and in the inequality we again used triangle inequality to get $\|s_{t-1}\| \in \|s_t\| \pm \|g_t\|$, which implies $\frac{\|s_{t-1}\|}{\|s_t\|} \in 1 \pm \frac{\|g_t\|}{\|s_t\|}$. Combining the displayed equations gives the result.  
	\end{proof}
	
	Now we just need to control the denominator of this expression, namely to bound $s_t$ away from the origin. This is what we refer to as the \textbf{``no-cancellation''} property. We consider two incarnations of this property.

%############################################################
%############################################################
	
	\subsection{Applications: logarithmic regret}
	
	\paragraph{No-cancellation via growth condition on $s_t$.} 	 We can guarantee the desired no-cancellation by assuming that there is $G$ such that $\|s_t\| \ge t G$ for all $t$, recovering the $\log T$ regret from Theorem \ref{thm:curvedFTLHuang} of~\cite{curvedFTL}.
	
	\begin{thm}[\cite{curvedFTL}] \label{thm:curvedFTLHuang}
		Consider the Online Linear Optimization problem with playing set $K$ and gain set $\cG$. If $K$ is 
%a $C^2$ body\footnote{A convex body is $C^2$ if its boundary is a twice continuously differentiable manifold, see Sec. 2.5 of~\cite{schneider}.} and is
 $\lambda$-strongly convex
 %\footnote{The original statement in the paper uses the notion of principal curvatures instead, which is equivalent, see Proposition 4 of~\cite{curvedFTL}.}
  w.r.t. the Euclidean ball and  the gain vectors satisfy the growth condition $\|g_1 + \ldots + g_t\|_2 \ge t G$ for some $G$ and all $t$, then 
%  . Finally, let $M := \max_{\ell \in \cL} \|\ell\|_2$ and assume the sequence of loss functions satisfies $\|\ell_1 + \ldots + \ell_t\|_2 \ge t L$ for some $L$ and all $t$. Then 
the algorithm Follow the Leader has regret at most $$\frac{M^2}{2\lambda G} (1 + \log T),$$ where $M := \max_{g \in \cG} \|g\|_2$.
	\end{thm}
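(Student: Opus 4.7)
The plan is to apply the machinery already assembled: the growth condition is precisely the tool needed to turn the regret bound of Lemma \ref{lemma:regretLip} into a harmonic sum, and then into a logarithm.

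First I would verify that Lemma \ref{lemma:regretLip} is applicable. Since $K$ is $\lambda$-strongly convex with respect to the Euclidean norm $\|\cdot\|_2$, Lemma \ref{lemma:lipSphere} says $\nabla \sigma_K$ is $\frac{1}{4\lambda}$-Lipschitz on the Euclidean unit sphere, with the dual norm being $\|\cdot\|_2$ itself (the Euclidean norm is self-dual). The growth condition $\|s_t\|_2 \geq tG$ implies $s_t \neq 0$ for every $t \geq 1$, so the hypotheses of Lemma \ref{lemma:regretLip} are met and we obtain
\begin{align*}
\textrm{Regret} \;\leq\; \frac{1}{2\lambda} \sum_{t=1}^T \frac{\|g_t\|_2^2}{\|s_t\|_2}.
\end{align*}

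Second, I would bound each summand termwise. The numerator is at most $M^2$ by definition of $M$, and the denominator is at least $tG$ by the growth condition, giving
\begin{align*}
\frac{\|g_t\|_2^2}{\|s_t\|_2} \;\leq\; \frac{M^2}{tG}.
\end{align*}
Summing this yields a harmonic sum, and using the standard estimate $\sum_{t=1}^T \frac{1}{t} \leq 1 + \log T$ immediately produces
\begin{align*}
\textrm{Regret} \;\leq\; \frac{M^2}{2\lambda G}(1 + \log T),
\end{align*}
which is exactly the claimed bound.

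This proof is essentially a one-line reduction to the developed principle, so I do not expect any real obstacle. The only points requiring a moment's attention are (i) checking that the Euclidean norm's self-duality makes the dual-norm appearance in Lemma~\ref{lemma:regretLip} coincide with $\|\cdot\|_2$, and (ii) confirming that the growth condition covers all indices $t$ so that no summand is ill-defined; in particular the base action $x_1$ can be taken arbitrary and only contributes through $\ip{g_1}{x_2 - x_1}$, which is already subsumed in the $t=1$ term of Lemma~\ref{lemma:regretLip}'s bound.
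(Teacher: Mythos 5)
Your proof is correct and is precisely the intended derivation: the paper presents Lemma~\ref{lemma:regretLip} and then states Theorem~\ref{thm:curvedFTLHuang} as an immediate consequence via the growth condition, leaving exactly the calculation you wrote out (termwise bound $\|g_t\|_2^2/\|s_t\|_2 \le M^2/(tG)$ and the harmonic-sum estimate $\sum_{t=1}^T 1/t \le 1 + \log T$). Your side remarks on self-duality of $\|\cdot\|_2$ and on $s_t \neq 0$ for all $t$ are also exactly the hypotheses that need checking, so there is nothing missing.
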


% precisely the assumption in. With the development above, we directly recover this result (and extend it to arbitrary norms): under this assumption, the regret of FTL is at most $\frac{1}{2\lambda}\sum_{t = 1}^T \frac{\|g_t\|^2}{tG} \le \frac{\max_{g_t} \|g_t\|^2}{2 \lambda G} \log T$. 	
	
%############################################################	

	\paragraph{No-cancellation via non-negative gain vectors.}	Another way of guaranteeing the no-cancellation property is by considering only non-negative gain vectors. The development above again shows that we get logarithmic regret in this case. We remark that the assumption of non-negative gains does not preclude $\|s_t\|$ from growing sublinearly, so this is orthogonal to the assumption in the previous theorem. This was stated in the introduction, we restated it here for convenience. 	
	
	\begin{manualtheorem}{\ref{thm:FTLnew}}
		\thmFTLnew
	\end{manualtheorem}

%	\begin{thm} \label{thm:FTLnew}
%		Consider the OLO problem with playing set $K$ and gain set $\cG$. If $K$ is $\lambda$-strongly convex with respect to a norm $\|\cdot\|$ and all vectors $\cG$ are non-negative,\footnote{That is, $\cG \subseteq \R^d_+$. We note that the proof directly generalizes to the case when $\R^d_+$ is replaced by an arbitrary pointed cone.} then FTL has regret at most 
%		%
%		\begin{align*}
%		\frac{C \cdot M}{\lambda} \cdot \log T,
%		\end{align*}
%		where $M := \max_{g \in \cG} \|g\|$ and $C$ only depends on $\|.\|$.
%	\end{thm}
	
	\begin{proof}
	Since the gain vectors $g_t$ are non-negative, we can assume $\|s_t\| > 0$ for all $t$, otherwise we can just ignore the initial time steps with $g_t = 0$. The idea now is to reduce the analysis to the 1-dimensional case in order to capture more easily the property of no cancellations; for that, we will approximate $\|.\|$ over $\R^d_+$ by a linear function. 
	
	Let $e^i \in \R^d$ denote the $i$th canonical vector, and define the vector $u \in \R^d$ with coordinates $u_i := \|e^i\|$. Define then the linear function $f(x) := \ip{u}{x}$. Notice that $\|x\| \le f(x)$ for all non-negative $x$: by triangle inequality $\|x\| \le \sum_i \|x_i e^i\| = \sum_i x_i u_i = f(x)$. In addition, defining $C := \max\{f(x) : x \in \R^d_+,\, \|x\| = 1\}$, we have $f(x) = \|x\|\cdot f\big(\frac{x}{\|x\|}\big) \le C \cdot \|x\|$ for all $x \in \R^d_+$. Thus, we have the two-sided bound
		\begin{align*}
			\forall x \in \R^d_+,~~\|x\| \le f(x) \le C\cdot \|x\|. 
		\end{align*}

	Employing Lemma \ref{lemma:regretLip} with these bounds, and using the linearity of $f$, the regret of FTL over the gain vectors $g_t$'s is at most
		\begin{align}
		\frac{C}{2\lambda} \sum_{t=1}^T \frac{f(g_t)^2}{f(s_t)} = \frac{C}{2\lambda} \sum_{t=1}^T \frac{f(g_t)^2}{f(g_1) + \ldots + f(g_t)}. \label{eq:FTLpos}
		\end{align}
	To upper bound the right-hand side, we employ the following estimate, which is proved in the appendix. 
	
	\begin{lemma} \label{lemma:logEstimate}
		Let $a_1,\ldots,a_T$ be numbers in $[0,A]$, and let $b_t := a_1 + \ldots + a_t$. Then $\sum_{t = 1}^T \frac{a_t^2}{b_t} \le 5 A \log T.$
	\end{lemma}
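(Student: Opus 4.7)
The plan rests on a single pointwise estimate combined with a telescoping argument. Since $a_t \le A$, one has $\frac{a_t^2}{b_t} \le A \cdot \frac{a_t}{b_t} = A \cdot \frac{b_t - b_{t-1}}{b_t}$, and the integral comparison $\frac{b_t - b_{t-1}}{b_t} = \int_{b_{t-1}}^{b_t} \frac{ds}{b_t} \le \int_{b_{t-1}}^{b_t} \frac{ds}{s} = \log \frac{b_t}{b_{t-1}}$, valid whenever $b_{t-1} > 0$, yields
\[
\frac{a_t^2}{b_t} \,\le\, A \log \frac{b_t}{b_{t-1}}.
\]
Before using this, I would discard all indices $t$ with $a_t = 0$: they contribute nothing to the sum and do not change $b_s$ at any surviving index $s$, so one may assume $a_t > 0$ for every $t$.

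A naive telescoping would give $\sum_t \frac{a_t^2}{b_t} \le a_1 + A \log(b_T / b_1)$, which is useless because $b_1 = a_1$ can be arbitrarily small while $b_T$ can be as large as $T A$, so the log ratio is not controlled by $T$ alone. The fix is to treat the initial segment, where $b_t$ is too small for the pointwise inequality to be tight, separately via the trivial bound $\frac{a_t^2}{b_t} \le a_t$ (which uses only $a_t/b_t \le 1$).

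Concretely, let $t^*$ be the first index with $b_{t^*} > A$; if no such index exists, then $\sum_t \frac{a_t^2}{b_t} \le \sum_t a_t = b_T \le A$ and there is nothing to prove. For $t \le t^*$, the trivial bound gives $\sum_{t \le t^*} \frac{a_t^2}{b_t} \le b_{t^*} \le b_{t^*-1} + a_{t^*} \le 2A$. For $t > t^*$, we have $b_{t-1} \ge b_{t^*} > A$, so the pointwise bound applies, and summing telescopes to $A \log(b_T / b_{t^*}) \le A \log(TA/A) = A \log T$. Adding the two pieces gives $\sum_t \frac{a_t^2}{b_t} \le 2A + A \log T$, which is at most $5A \log T$ for $T \ge 2$, as required.

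The only subtle point, and the reason the split is needed, is that the pointwise inequality degenerates when $b_{t-1}$ is much smaller than $A$; the trivial bound cleanly handles the warm-up interval until $b_t$ first exceeds $A$, after which the logarithmic telescoping is immediate and yields exactly the desired $\log T$ factor.
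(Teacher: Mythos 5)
Your proof is correct and follows the same two-phase structure as the paper (a trivial bound on a warm-up segment, then a telescoping log-difference bound on the tail), but your two technical ingredients are chosen differently, and slightly more cleanly. The paper splits at the first index $t_0$ with $a_{t_0} > A/T$, which only gives $b_{t_0} > A/T$ and hence $\log(b_T/b_{t_0}) \le 2\log T$; it then converts $a_t/b_t$ into a log-difference via the concavity bound $x \le 1.5\log(1+x)$ on $[0,1]$, ending at $2A + 3A\log T$. You split at the first index $t^*$ with $b_{t^*} > A$, which directly controls the telescoping denominator and gives $\log(b_T/b_{t^*}) \le \log T$, and you use the sharper $\frac{b_t - b_{t-1}}{b_t} \le \log\frac{b_t}{b_{t-1}}$ (i.e.\ $1 - 1/y \le \log y$) with constant $1$, ending at $2A + A\log T$. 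Thresholding on the running sum $b_t$ rather than on the increment $a_t$ is the more natural choice here, since $b_{t^*}$ is exactly the quantity you need to bound below, and it saves a factor of $2$ inside the log. Your handling of zero increments (to avoid $b_{t-1}=0$) and the final absorption $2A + A\log T \le 5A\log T$ for $T \ge 2$ are both fine; as with the paper's own proof, the statement is vacuously off at $T=1$, but that is a shared quirk rather than a gap in your argument.
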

	
	Because $f(g_t) \ge 0$ and $f(g_t) \le C \|g_t\| \le C M$ (since by assumption $\|g_t\| \le M$), the previous lemma shows that the right-hand side of \eqref{eq:FTLpos} is at most $\frac{5 C^2 M}{2 \lambda} \log T$. By redefining $C$ we obtain the desired regret bound for FTL, thus concluding the proof.
	\end{proof}

%#########################################################
%#########################################################
%#########################################################
%#########################################################

	\section{Making a Convex Body Curved} \label{sec:curving}

	Consider an arbitrary convex body $K \in \Ko^d$. Our goal in this section is to obtain a set $K_t$ that is strongly convex with respect to itself, that approximates $K$ in the sense of $K_t \subseteq K \subsetsim K_t$, and that can be efficiently optimized over, proving Theorem~\ref{thm:curving}.

	\subsection{A First Attempt}
	
	Let $B(r) \subseteq K$ and $B(R) \supseteq K$ be respectively inscribed and circumscribed balls for $K$. Recall that intuitively a set is strong convex if its boundary does not have flat parts.\footnote{See \cite{curvedFTL} for a formal connection between strong convexity of a set and the curvature of its boundary seen as a Riemannian manifold.}
	
	On one hand, $K$ the is perfect approximation to itself but may not be strongly convex at all; on the other, as we just saw $B(r)$ is $\frac{1}{8}$-strongly convex with respect to itself but (typically) gives a poor approximation to $K$. The idea is to tradeoff these extremes by taking a ``convex combination'' between $K$ and the inscribed ball $B(r)$. 
	
	The natural attempt would be to consider the convex combination $K'_t := (1-t) K + t B(r)$ for $t \in [0,1]$. This operation is just placing a copy of the ball $B(tr)$ at each point of $(1-t) K$, which intuitively should give a more strongly convex set as $t$ increases. Unfortunately this is not true: if $K = [-1,1]^d$, the set $K_t$ is not strongly convex at all for any value $t \in [0,1)$, see Figure \ref{fig:curving}.a. This is because this operation softens the \textbf{corners} of $K$ instead of curving its flat faces. 
	
\begin{figure}[htp]
	\centering		
\includegraphics[width=0.8\textwidth]{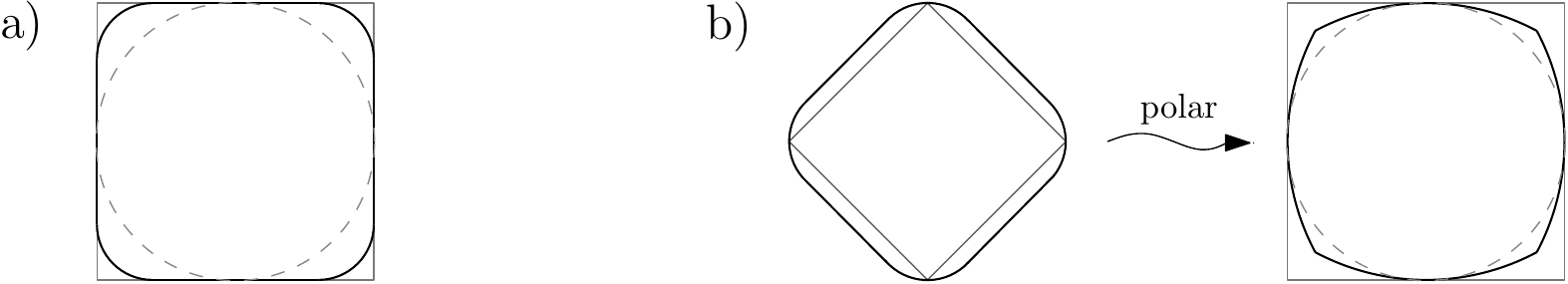}
	\vspace{-5pt}
	\caption{\small a) $K$ being the square, with the black object depicting $(1-t) K + t B$. Notice this object is not strongly convex (or 2-convex). b) The rotated squared is the polar $K^\circ$, and the object in black depicts $(1-t) K^\circ + t B^\circ$. The polar $((1-t) K^\circ + t B^\circ)^\circ$ of this resulting object is shown in black to the right, which is now strongly convex.}
	\label{fig:curving}
\end{figure}
	
	But it is known that polarity maps ``faces'' of the set to ``corners'' of its polar, and vice-versa (Corollary 2.14 of~\citep{ziegler} makes this precise for polytopes). Thus, we should soften the vertices of the \textbf{polar} to obtain the desired effect in the original set. More precisely, we can pass to the polar, take a convex combination with the polar of $B(r)$, and take the polar of the resulting object to get back to the original space:
	\begin{align}
	K''_t := ((1-t) K^\circ + t B(r)^\circ)^{\circ}, \label{eq:polarSum}
	\end{align} for $t \in [0,1]$; see Figure \ref{fig:curving}.b. Indeed, with a careful analysis one can show that $K''_t$ is strongly convex and $K''_t \subseteq K \subsetsim K''_t$ (with the approximation improving as $t \rightarrow 0$). However, we get a greatly simplified analysis by working with a different construction.

%##########################################################
%############################################################

%##########################################################
%############################################################
	
	\subsection{Construction via $L_2$ Addition}	
	
	The idea is to replace the construction given in \eqref{eq:polarSum} by one with a more ``functional'' flavor that gives a clean expression for the its gauge function $\|\cdot\|_{K_t}$. Since Lemma \ref{lemma:SCUC} gives the equivalence between 2-convexity of $\|\cdot\|_{K_t}$ and strong convexity of $K_t$, we will be in good shape for controlling the latter. 

	For this, we replace the Minkowski sum in our previous attempt by the so-called \emph{$L_2$ addition}~\citep{firey,lutwak2}. Given two convex bodies $A,B \in \Ko^d$, their $L_2$ addition $A \oplus B$ is the convex body whose support function satisfies
	\begin{align*}
		\sigma_{A \oplus B}(\cdot)^2 = \sigma_A(\cdot)^2 + \sigma_B(\cdot)^2. 
	\end{align*}
	We then define our desired approximation of $K$ as the set 
	\begin{align*}
		K_t := \left(\big(\sqrt{1-t^2}\,K^\circ\big) \oplus \big(t B(r)^\circ\big)\right)^\circ .
	\end{align*}
	
	To have a more transparent version of this definition, by involution of polarity (Lemma \ref{lemma:polar2} Item 1), the polar of $K_t$ satisfies $K_t^\circ = (\sqrt{1-t^2}\, K^\circ) \oplus (t B(r)^\circ)$ and hence 	
	\begin{align*}
	\sigma_{K_t^\circ}(\cdot)^2 = \sigma_{\sqrt{1-t^2} \,K^\circ}(\cdot)^2 + \sigma_{t B(r)^\circ}(\cdot)^2 &= \left(\sqrt{1-t^2} \cdot \sigma_{K^\circ}(\cdot)\right)^2 + \left(t\cdot \sigma_{B(r)^\circ}(\cdot)\right)^2 \\
	&= (1-t^2)\,\sigma_{K^\circ}(\cdot)^2 + t^2\, \sigma_{B(r)^\circ}(\cdot)^2, 
	\end{align*}
	where in the second equation we used Lemma \ref{lemma:gaugeSupp} Item 5. Moreover, using that the support function of the polar is the gauge of the ``primal'' (Lemma \ref{lemma:polar2} Item 3), we see that $K_t$ is the convex body satisfying
	\begin{align}
	\|\cdot\|_{K_t}^2 = (1-t^2)\,\|\cdot\|^2_K + t^2\,\|\cdot\|^2_{B(r)}. \label{eq:betterDef}
	\end{align}
	
	With this functional perspective we are in good shape for analyzing the properties of $K_t$ and proving Theorem \ref{thm:curving}.
	
%##########################################################
%##########################################################

	\subsection{Proof of Theorem \ref{thm:curving}}
	
	\paragraph{Approximation.} We first argue that $K_t$ is still sandwiched $B(r) \subseteq K_t \subseteq K$. Since $K$ contains the ball $B(r)$, Lemma~\ref{lemma:gaugeSupp} Item~4 gives that $\|\cdot\|_{B(r)} \ge \|\cdot\|_K$. So using \eqref{eq:betterDef} we see that $\|\cdot\|_K \le   \|\cdot\|_{K_t} \le \|\cdot\|_{B(r)}$. The same lemma then implies that $B(r) \subseteq K_t \subseteq K$. 
	
	To see that $K \subseteq \sqrt{1 + ((\frac{R}{r})^2 - 1) t^2}\cdot K_t$, notice that the inclusion $B(r) = \frac{r}{R} B(R) \supseteq \frac{r}{R} K$, together with Lemma \ref{lemma:gaugeSupp} Items 4 and 5, implies that $\|\cdot\|_{B(r)} \le \frac{R}{r} \|\cdot\|_K$, and hence \eqref{eq:betterDef} gives $$\|\cdot\|_{K_t} \le \|\cdot\|_K \cdot \sqrt{1 + \bigg(\bigg(\frac{R}{r}\bigg)^2 - 1\bigg)t^2};$$ the same lemma then give the desired containment. This proves the ``approximation'' part of Theorem \ref{thm:curving}.

	\paragraph{Curvature.} Given the equivalence of strong convexity and 2-convexity of Lemma \ref{lemma:SCUC}, it suffices to show that $\|\cdot\|_{K_t}$ is 2-convex with modulus $\frac{t^2}{8}$. So consider $x,y$ with $\|x\|_{K_t}, \|y\|_{K_t} \le 1$; we want to show that 
	\begin{align}
	\left\|\frac{x+y}{2}\right\|_{K_t} \le 1 - \frac{t^2}{8} \|x-y\|_{K_t}^2. \label{eq:want2conv}
	\end{align}
	First, observe that the function $\|\cdot\|_{K_t}^2$ is convex:
	this follows because it is the composition of the convex function $\|\cdot\|_{K_t}$ (use Lemma \ref{lemma:gaugeSupp} Items 2 and 3 to observe this convexity) and the increasing convex function $z \mapsto z^2$ (see for example Section 3.2.4 of~\citep{Boyd}). Using again the fact $\|\cdot\|_{B(r)} = \frac{1}{r} \|\cdot\|_2$ (Lemma \ref{lemma:gaugeSupp} Item 5), we have 
	\begin{align*}
		\left\|\frac{x+y}{2}\right\|_{K_t}^2 & \stackrel{\textrm{\scriptsize \eqref{eq:betterDef}}}{=} (1-t^2) \left\|\frac{x+y}{2}\right\|_{K}^2 + \frac{t^2}{r^2} \left\|\frac{x+y}{2}\right\|_2^2\\
		& \stackrel{\tiny\textrm{\tiny conv.}}{\le} (1-t^2) \left(\frac{\|x\|_K^2}{2} + \frac{\|y\|_K^2}{2} \right) + \frac{t^2}{r^2}  \left\|\frac{x+y}{2}\right\|_2^2\\
		& \stackrel{\textrm{\tiny parallel.}}{=} (1-t^2) \left(\frac{\|x\|_K^2}{2} + \frac{\|y\|_K^2}{2} \right) + \frac{t^2}{r^2} \left(\frac{\|x\|^2_2}{2} + \frac{\|y\|^2_2}{2} - \frac{\|x - y\|^2_2}{4}\right)\\
		& = \frac{\|x\|_{K_t}^2}{2} + \frac{\|y\|_{K_t}^2}{2} - \frac{t^2}{4r^2} \frac{\|x - y\|^2_2}{4} \\
		& \le 1 -  \frac{t^2}{4r^2} \|x - y\|^2_2 ~=~ 1 - \frac{t^2}{4} \|x - y\|^2_{B(r)} ~\le~ 1 - \frac{t^2}{4} \|x - y\|^2_{K_t},
	\end{align*}
	where in the first inequality we used convexity of $\|\cdot\|_{K_t}^2$, the next equation uses the parallelogram identity, the second inequality uses the assumption $\|x\|_{K_t}, \|y\|_{K_t} \le 1$, and the last inequality uses $\|\cdot\|_{K_t} \le \|\cdot\|_{B(r)}$, proved in the ``approximation'' part. Finally, since $\sqrt{1 - \alpha} \le 1 - \frac{\alpha}{2}$ for all $\alpha \le 1$, taking square roots on the last displayed inequality proves \eqref{eq:want2conv}.
	
%##########################################################	
	
	\paragraph{Efficiency.} It is not immediately clear that we can optimize a linear function over $K_t$ given access to an optimization (or membership) oracle for $K$. First, let us recall the standard definition of weak optimization~\citep{GLS}.
	 
	\begin{definition}[Weak optimization problem] \label{def:weakOpt}
		Given $c \in \R^d$, a convex set $A \subseteq \R^n$, and a precision parameter $\delta > 0$, either:
		\vspace{-4pt}
		\begin{enumerate}
			\itemsep0em
			\item Output that $A - B(\delta)$ is empty
			\item Return a point $\bar{x} \in A + B(\delta)$ such that $$\ip{c}{\bar{x}} \ge \max_{x \in A - B(\delta)} \ip{c}{x} - \delta.$$
		\end{enumerate} 
	\end{definition}
	
	We also recall the following result on the equivalence of weak optimization of a body and its polar (for example, chain together Theorem 4.4.7, Theorem 4.2.2, Lemma 4.4.2, and Corollary 4.2.7 of \cite{GLS}). 
	
	\begin{thm} \label{thm:equivOpt}
		Let $A$ be a convex body satisfying $B(r) \subseteq A \subseteq B(R)$. Then, there is an algorithm that, given access to weak optimization oracles over $A$, solves the weak optimization problem over $A^\circ$ in time polynomial in $r,R,$ and $\delta$.	
	\end{thm}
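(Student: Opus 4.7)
The plan is to prove Theorem \ref{thm:equivOpt} by chaining together the standard polynomial-time reductions between weak oracle problems in the framework of Gr\"otschel, Lov\'asz, and Schrijver \cite{GLS}. The crucial geometric observation is that polarity exchanges optimization-type and membership-type problems: a separating hyperplane $\{x : \langle a, x\rangle = 1\}$ that witnesses $y \notin A^\circ$ is literally the same data as a point $a \in A$ with $\langle a, y\rangle > 1$, i.e., a point $a$ achieving a large inner product with the direction $y$. Thus ``optimize linearly over $A$'' and ``separate from $A^\circ$'' are, morally, one problem.

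Concretely I would proceed in three steps. First, convert the weak optimization oracle for $A$ into a weak separation oracle for $A$ using the central-cut ellipsoid method (Theorem 4.2.2 of \cite{GLS}); this reduction is polynomial because the sandwich $B(r) \subseteq A \subseteq B(R)$ supplies the ellipsoid with initial and stopping bounds that are polynomial in $r, R, \delta$. Second, turn the weak separation oracle for $A$ into a weak membership oracle for $A^\circ$ via the polarity equivalence described above (Lemma 4.4.2 of \cite{GLS}): to test $y \in A^\circ$, run the separation oracle to look for an $a \in A$ with $\langle a, y\rangle > 1 - \delta'$, where $\delta'$ is an appropriately rescaled precision.

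Third, apply the ellipsoid method once more to convert this weak membership oracle for $A^\circ$ into a weak optimization oracle for $A^\circ$ (Theorem 4.4.7 of \cite{GLS}). The ellipsoid method again needs a ball sandwich for the body it is optimizing over, and this is provided for free by polarity: from $B(r) \subseteq A \subseteq B(R)$, Lemma \ref{lemma:polar2} (polar order reversal together with $B(s)^\circ = B(1/s)$) yields $B(1/R) \subseteq A^\circ \subseteq B(1/r)$. Chaining the three polynomial-time procedures completes the algorithm.

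The main technical obstacle is the careful bookkeeping of how the precision parameter degrades through the chain. Each reduction inflates the target precision by a polynomial factor depending on $r$ and $R$, and the ellipsoid method contributes additional polynomial dependence. The reason the theorem's running time is stated as polynomial in $r, R, \delta$ (rather than in logarithms) is precisely that these losses accumulate; verifying the exact quantitative estimates is the content of Chapter 4 of \cite{GLS}. At the level of a proof sketch, the essential point is that all three reductions are polynomial in the weak oracle model and compose cleanly given the ball sandwich inherited through polarity.
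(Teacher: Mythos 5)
Your proposal is correct and matches the paper's intended argument: the paper gives no independent proof, only the parenthetical suggestion to "chain together Theorem 4.4.7, Theorem 4.2.2, Lemma 4.4.2, and Corollary 4.2.7 of \cite{GLS}," which is exactly the optimization $\to$ separation $\to$ polar membership $\to$ polar optimization chain you describe, with the polar ball sandwich $B(1/R) \subseteq A^\circ \subseteq B(1/r)$ supplying the needed bounds. The only cosmetic difference is that you omit Corollary~4.2.7, but your chain is the substance of the reduction.
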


%	First, recall that, via the Ellipsoid method, optimizing a linear function over a convex set $A$ is polynomial-time equivalent to separating inequalities for it (i.e., given $\bar{y}$, finding a hyperplane separating $\bar{y}$ from $A$  if $\bar{y} \notin K_t$)~\citep{GLS}.\footnote{In order to simplify the exposition we ignore that fact that optimization/separation can only be performed within an additive error. But it is easy to verify that assuming an $\delta$-error separation oracle for $K$, the reduction in the sequel allows one to optimize a linear function over $K_t$ within error $O(\delta)$ in time $poly(d,1/\delta,R/r)$.} Moreover, since by involution of polarity we have $A = (A^{\circ})^\circ = \{y : \ip{y}{x} \le 1, ~\forall x \in A^\circ\}$, to solve the separation problem for $A$ it suffices to find $x \in A^\circ$ such that $\ip{\bar{y}}{x} > 1$, so it suffices to maximize a linear function $x \mapsto \ip{\bar{y}}{x}$ over the polar $A^\circ$.
	
%	$K_t$ is polynomial-time equivalent to separating inequalities for it (i.e., given $\bar{y}$, finding a hyperplane separating $\bar{y}$ from $K_t$ if $\bar{y} \notin K_t$)~\citep{GLS}. Moreover, since by involution of polarity we have $K_t = (K_t^{\circ})^\circ = \{y : \ip{y}{x} \le 1, ~\forall x \in K_t^\circ\}$, to solve this separation problem it suffices to find $x \in K^\circ_t$ such that $\ip{\bar{y}}{x} > 1$. Thus, it suffices to maximize a linear function $\ip{\bar{y}}{x}$ over $x \in K_t^\circ$. 
	
	Given this equivalence and the involution of polarity $K_t = (K_t^\circ)^\circ$, in order to weakly optimize over $K_t$ it suffices to be able to weakly optimize over its polar $K_t^\circ$. To do that, we will need a characterization of the the $L_2$ addition by~\cite{lutwak2}, which when applied to $K_t^\circ$ gives the following (to simplify the notation, let $U := \sqrt{1-t^2}\, K^\circ$ and $V := t B(r)^\circ$): 
	\begin{align*}
		K_t^\circ = \left\{(1-\alpha)^{1/2} u + \alpha^{1/2} v \,:\, u \in U,\ v \in V,\ \alpha \in [0,1] \right\}.
	\end{align*} 
	Thus, given $\bar{y}$, maximizing $\ip{\bar{y}}{x}$ over $x \in K^\circ_t$ is equivalent to the following optimization problem:
	\begin{align*}
		\max~~& (1 - \alpha)^{1/2} \ip{\bar{y}}{u} + \alpha^{1/2} \ip{\bar{y}}{v}\\
		\textrm{s.t.}~~& u \in U,\ v \in V,\, \alpha \in [0,1].
	\end{align*}
	Given the decomposability of this problem, we can do this in polynomial time as follows:
	\begin{enumerate}
		\item First weakly maximize $\ip{\bar{y}}{u}$ over $u \in U$, obtaining an almost optimal solution $u^*$. Again, by Theorem \ref{thm:equivOpt} this is equivalent to weakly optimizing over the polar $U^\circ = \frac{1}{\sqrt{1-t^2}} K$, which (since $t$ is fixed) is equivalent to weakly optimizing over $K$, which we assumed we have an oracle for. 
		
		\item Then maximize $\ip{\bar{y}}{v}$ over $v \in V$, obtaining the optimal solution $v^*$. Notice that $V = t B(r)^\circ = t B(\frac{1}{r})$ (Lemma \ref{lemma:polar2} Item 4), so it is just the Euclidean ball of radius $\frac{t}{r}$. Thus, we explicitly have the maximizer $v^* = \frac{t}{r} \frac{\bar{y}}{\|\bar{y}\|}$. 
		
		\item Finally, weakly maximize $\phi(\alpha):= (1-\alpha)^{1/2} \ip{\bar{y}}{u^*} + \alpha^{1/2} \ip{\bar{y}}{v^*}$ over $\alpha \in [0,1]$, obtaining an almost optimal solution $\alpha^*$. We claim that $\phi$ is concave in $[0,1]$. To see this, notice that since $U$ has the origin in its interior, the optimality of $u^*$ gives that $\ip{\bar{y}}{u^*} \ge 0$, and the same is true for $v^*$. Then one can easily check that the second derivative of $\phi$ is negative in $[0,1)$, thus giving its concavity over $[0,1]$ (also notice that $\phi$ is continuous at $1$). Thus, we can weakly optimize $\phi$ in polynomial time (see for example Theorem 4.3.13 of \citep{GLS}).
	\end{enumerate}
	
	Putting all these elements together, we can weakly optimize over $K_t$ in polynomial time using a weak optimization oracle for $K$. With this, we conclude the proof of Theorem \ref{thm:curving}. 

%############################################################
%############################################################
%############################################################

	\subsection{Application: OLO with Hints} \label{app:application}
	
 Dekel et al.	\cite{nika} considered Online Linear Optimization with the addition of \emph{hints} (see \citep{megiddoHints,rakhlinPredictable} for other notions of hints). Here, at time $t$ the algorithm receives a ``hint'' vector $h_t$ in the Euclidean sphere which is guaranteed to satisfy $\ip{h_t}{g_t} \ge \alpha \|g_t\|$, namely it is correlated with the unknown gain vector $g_t$. The hint $h_t$ can be thought of a prediction of $g_t$, which may be available when there is additional structure in the problem, e.g., the gain vectors are not adversarial. Dekel et al. showed that when the playing set $K$ is curved and hints are present, one can obtain improved $O(\log T)$ regret, instead of the standard $O(\sqrt{T})$. 
	 	
	\begin{thm}[\cite{nika}] \label{thm:nika}
		Consider the Online Linear Optimization problem with hints. Suppose $K \subseteq \R^d$ is centrally symmetric and $B(r) \subseteq K \subseteq B(R)$. Suppose further that its gauge $\|\cdot\|_K$ is $(2,D)$-convex. Finally, let $G := \max_{g \in \cG} \|g\|_2$ and suppose the unit-norm hints satisfy $\ip{h_t}{g_t} \ge \alpha \|g_t\|_2$ for all $t$. Then there is an algorithm with regret at most $$\frac{d\,G\,R^2}{\alpha\,D\,r}\, O(\log T).$$ 
	\end{thm}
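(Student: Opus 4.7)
The plan is to design a Follow-the-Leader variant that uses the hint at each round to prevent the cumulative gain vector from canceling near the origin, and then apply the sphere-Lipschitz regret machinery of Section~\ref{sec:FTL}.

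As a preliminary step, I would translate the curvature hypothesis into a form usable by Lemma~\ref{lemma:lipSphere}. Lemma~\ref{lemma:SCUC} gives that $(2,D)$-convexity of $\|\cdot\|_K$ is equivalent to $K$ being $D$-strongly convex with respect to itself, and chasing the scaling rules of Lemma~\ref{lemma:gaugeSupp} Item~5 together with the sandwich $B(r)\subseteq K\subseteq B(R)$ upgrades this to $\lambda$-strong convexity of $K$ with respect to the Euclidean norm, with $\lambda$ of order $Dr^2/R^2$. Consequently Lemma~\ref{lemma:lipSphere} applies with Lipschitz constant of the same order, so $\nabla \sigma_K$ is well-behaved away from the origin.

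Next, the algorithm. The key design choice is to replace the naive FTL direction $s_{t-1}$ (which may sit near the origin, making the oracle call unstable) by an ``augmented'' direction $z_t$ built from the hint stream in such a way that $\|z_t\|_2$ grows linearly in~$t$, and to play $x_t=\nabla\sigma_K(z_t)$. One natural route is to form a running hint-weighted projection of the past gains, exploiting $\langle h_\tau,g_\tau\rangle\ge\alpha\|g_\tau\|_2$ to force linear growth of $\|z_t\|_2$. A dimension-dependent alternative is to decompose $\R^d$ along an orthonormal basis, run a $1$-dimensional FTL instance per coordinate, and aggregate with a classical ``$d$ experts''-style scheme; this alternative matches the $d$ factor in the final bound.

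The regret analysis then closely parallels the proof of Lemma~\ref{lemma:regretLip}. Apply Lemma~\ref{lemma:FTLbasic} (or a minor variant that accounts for the use of $z_t$ in place of $s_{t-1}$, introducing only a small additive bias from the hint-based correction) to upper-bound the regret by $\sum_t\langle g_t,x_{t+1}-x_t\rangle$; use sphere-Lipschitzness of $\nabla\sigma_K$ together with the triangle-inequality manipulation at the end of the proof of Lemma~\ref{lemma:regretLip} to bound each summand by $O\!\left(\|g_t\|_2^2/(\lambda\|z_t\|_2)\right)$; and finally invoke the guaranteed linear growth $\|z_t\|_2\gtrsim \alpha t$ (up to gain-magnitude factors) to apply the logarithmic summation estimate of Lemma~\ref{lemma:logEstimate} and convert the sum into $O(\log T)$. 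Collecting the prefactors $G$, $R^2/r$, $1/(\alpha D)$ and $d$ yields the claimed bound $\tfrac{dGR^2}{\alpha Dr}\,O(\log T)$.

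The hard part is pinning down $z_t$. Because hints can rotate arbitrarily across rounds, there is no single direction in which gains systematically accumulate, so $z_t$ cannot be a naive running sum of the $g_\tau$'s or of the $h_\tau$'s. One must aggregate per-round hint information in a way that simultaneously (i) guarantees linear growth of $\|z_t\|_2$, (ii) does not perturb the argument passed to $\nabla\sigma_K$ by more than a controlled amount per round, and (iii) keeps any auxiliary aggregation regret bounded by $O(d\log T)$. Reconciling these three constraints while keeping the algorithm oblivious to future rounds is where the real work lies, and is what ultimately produces the $d$ factor in the stated regret.
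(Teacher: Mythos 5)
The paper does not prove Theorem~\ref{thm:nika}; it is imported verbatim from Dekel et al.~\cite{nika} and used as a black box in the proof of Theorem~\ref{thm:hints}. So there is no in-paper proof for your sketch to be compared against, and the right question is whether your sketch could plausibly serve as a self-contained proof. It cannot, in its current form, and you have in fact flagged the fatal gap yourself: the construction of $z_t$ is ``the hard part'' and is ``where the real work lies,'' but you never produce it. The hint hypothesis $\langle h_t,g_t\rangle \ge \alpha\|g_t\|_2$ is a purely per-round correlation constraint; it places no constraint whatever on how the $h_t$ rotate across rounds, so there is no running sum, hint-weighted projection, or per-coordinate decomposition for which linear growth $\|z_t\|_2\gtrsim \alpha t$ is guaranteed. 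The ``$d$-experts aggregation'' idea is also underspecified in a way that matters: aggregating $d$ one-dimensional FTL instances over a continuous playing set is not an experts problem, and it is not clear the aggregate would even remain in $K$.

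There is a second, more structural problem: Lemma~\ref{lemma:FTLbasic} establishes the regret decomposition $\textrm{Regret}\le\sum_t\langle g_t,x_{t+1}-x_t\rangle$ precisely because $x_{t+1}$ is the argmax for $s_t=g_1+\cdots+g_t$, i.e.\ by the ``be-the-leader'' argument. If you play $x_t=\nabla\sigma_K(z_t)$ for some $z_t\ne s_{t-1}$, that argument breaks; the ``minor variant... introducing only a small additive bias'' is not a step you can take for free, and making it rigorous is nontrivial precisely when $z_t$ and $s_{t-1}$ can point in wildly different directions. For what it is worth, the actual argument in Dekel et al.\ does not go through modified FTL at all. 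They reduce the hinted linear game to online \emph{convex} optimization with exp-concave (strongly convex) surrogate losses constructed from $\sigma_K$ restricted to the hint hyperplane---the uniform convexity of $\|\cdot\|_K$ makes these surrogates curved---and then run Online Newton Step, whose $O(d\log T)$ guarantee is where the factor $d$ comes from. So your sketch is a genuinely different route, and its central lemma (linear growth of an algorithm-defined $z_t$) is not only unproved but, as stated, false under adversarial hint rotation.
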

	
	The curvature of $K$ is crucial for this improved regret even in the presence of hints, since otherwise there is still a lower bound of order $\sqrt{T}$~\citep{nika}.
	
	\medskip 
	We can use our ``curving convex bodies'' Theorem \ref{thm:curving} to extend this result to \textbf{all} playing sets $K$, obtaining additive regret $O(\log T)$ at the expense of an additional multiplicative regret.
	
	\begin{thm} \label{thm:hints}
		Consider the online linear optimization problem with hints on a centrally symmetric body $K \in \R^d$ with $B(r) \subseteq K \subseteq B(R)$. Then for every $\e > 0$ there is an algorithm with total reward at least 
		\begin{align*}
			(1-\e) \OPT - \frac{d G R^4}{\alpha r^3} \cdot O\left(\frac{\log T}{\e}\right),
		\end{align*}
		where $\OPT = \max_{x \in K} \sum_{t = 1}^T \ip{x}{g_t}$ and $G = \max_{g \in \cG} \|g\|_2$.
	\end{thm}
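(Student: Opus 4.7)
}
The plan is to replace the original playing set $K$ by its curved surrogate $K_t$ from Theorem~\ref{thm:curving}, run the logarithmic-regret algorithm of Theorem~\ref{thm:nika} on $K_t$, and account for the approximation loss via a $(1-\e)$ multiplicative factor on $\OPT$. Concretely, apply Theorem~\ref{thm:curving} with a parameter $t \in [0,1]$ to be tuned later, obtaining a convex body $K_t$ with $K_t \subseteq K \subseteq c_t\cdot K_t$ where $c_t := \sqrt{1 + ((R/r)^2-1)\,t^2}$, and that is $\frac{t^2}{8}$-strongly convex w.r.t.\ itself; by Lemma~\ref{lemma:SCUC} this means $\|\cdot\|_{K_t}$ is $(2,t^2/8)$-convex, which is the hypothesis required by Theorem~\ref{thm:nika}.

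Next I would check that the remaining hypotheses of Theorem~\ref{thm:nika} transfer to $K_t$: we have $B(r) \subseteq K_t \subseteq K \subseteq B(R)$ directly from Theorem~\ref{thm:curving}, and central symmetry descends from $K$ to $K_t$ because $K$ symmetric implies $K^\circ$ symmetric, $B(r)^\circ = B(1/r)$ is symmetric, and the defining identity $\|\cdot\|_{K_t}^2 = (1-t^2)\|\cdot\|_K^2 + t^2\|\cdot\|_{B(r)}^2$ is invariant under $x \mapsto -x$. The hint condition $\langle h_t,g_t\rangle \ge \alpha\|g_t\|_2$ is a statement about the $g_t$'s alone and so is unaffected. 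Running the algorithm of Theorem~\ref{thm:nika} on $K_t$ therefore produces iterates $x_t \in K_t \subseteq K$ (hence feasible for the original problem) with total gain
\begin{align*}
\sum_t \langle g_t, x_t\rangle \,\ge\, \max_{y \in K_t}\sum_t \langle g_t, y\rangle \,-\, \frac{8\,d\,G\,R^2}{\alpha\,t^2\,r}\cdot O(\log T).
\end{align*}

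To relate the $K_t$-optimum to $\OPT$, let $x^\star \in \arg\max_{x \in K}\sum_t\langle g_t,x\rangle$; since $K \subseteq c_t\cdot K_t$ we have $x^\star/c_t \in K_t$, so
\begin{align*}
\max_{y \in K_t}\sum_t\langle g_t,y\rangle \,\ge\, \frac{1}{c_t}\OPT.
\end{align*}
Finally choose $t$ so that $1/c_t \ge 1-\e$. Using $c_t^2 - 1 \le (R/r)^2 \, t^2$ and $(1-\e)^{-2}-1 \ge \e$ for $\e \in (0,1/2]$, it suffices to take $t^2 = \Theta(\e\,r^2/R^2)$. Plugging this into the regret term gives
\begin{align*}
\frac{8\,d\,G\,R^2}{\alpha\,t^2\,r}\,O(\log T) \,=\, \frac{d\,G\,R^4}{\alpha\,r^3}\cdot O\!\left(\frac{\log T}{\e}\right),
\end{align*}
yielding the bound claimed in the theorem.

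The main thing to get right is the quantitative balancing of the curvature parameter $t^2/8$ against the approximation factor $c_t$, since these are tied together: stronger curvature (larger $t$) shrinks the regret term but worsens the multiplicative approximation, and one must verify the two sides scale as $1/t^2$ and $1 + (R/r)^2 t^2$ respectively, so that the choice $t^2 \asymp \e r^2/R^2$ produces exactly the stated $R^4/r^3$ dependence. Aside from this bookkeeping and the symmetry check above, the argument is a direct plug-in of Theorems~\ref{thm:curving} and~\ref{thm:nika}; efficiency of the reduction follows from the weak-optimization part of Theorem~\ref{thm:curving}, which turns an oracle for $K$ into one for $K_t$.
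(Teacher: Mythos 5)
Your proposal is correct and takes essentially the same approach as the paper: replace $K$ by the curved surrogate $K_t$, invoke Theorem~\ref{thm:nika} on $K_t$ using the $2$-convexity from Theorem~\ref{thm:curving} and Lemma~\ref{lemma:SCUC}, relate $\OPT_{K_t}$ to $\OPT$ via the containment $K \subseteq c_t K_t$, and balance by taking $t^2 \asymp \e\, r^2/R^2$. A small bonus in your version is the explicit check that $K_t$ inherits central symmetry from $K$ (needed as a hypothesis of Theorem~\ref{thm:nika}), which the paper's proof leaves implicit.
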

		
	\begin{proof}		
	For the algorithm, simply run the algorithm of Theorem \ref{thm:nika} over the set $K_t$ instead of the original one $K$, setting $t$ to satisfy $t^2 = \frac{2\e}{(\frac{R}{r})^2-1}$. Since $K_t \subseteq K$, this strategy only plays feasible actions. 
	
	For its guarantee, the second item of Theorem~\ref{thm:curving} plus Lemma \ref{lemma:SCUC} gives that the modulus of 2-convexity of $\|\cdot\|_{K_t}$ is at least $\frac{\e r^2}{4 R^2}$. Thus, applying Theorem \ref{thm:nika} to the game with playing set $K_t$, the total gain of the algorithm is at least $$\OPT_{K_t} - \frac{dGR^4}{\alpha r^3} \cdot O\left(\frac{\log T}{\e}\right),$$ where $\OPT_{K_t} := \max_{x \in K_t} \sum_{t =1}^T \ip{x}{g_t}$ is the optimal fixed action in $K_t$. Moreover, using the definition of $\e$ we also have the containment $K \subseteq \sqrt{1 + 2\e}\, K_t$, and using again $\sqrt{1 + x} \le 1+ \frac{x}{2}$, this implies $K \subseteq (1+\e)\, K_t$. Then the optimal solutions of the games on $K$ and $K_t$ satisfy $\OPT_{K_t} \ge \frac{1}{1+\e} \OPT \ge (1-\e) \OPT$, since scaling the optimal solution achieving $\OPT$ by $\frac{1}{1 + \e}$ yields a feasible solution for the game on $K_t$. Employing this on the displayed inequality shows that the algorithm has gains at least $(1-\e) \OPT - \frac{d G R^4}{\alpha r^3} \cdot O(\frac{\log T}{\e})$. This conclude the proof of the theorem. 
	\end{proof}

	Notice that for general playing sets $K$ the $\Omega(\sqrt{T})$ regret lower bound of~\cite{nika} implies that the multiplicative loss is required, and the standard $\Omega(\sqrt{T})$ lower bound for the game without hints  shows that such $O(\log T)$ regret is not possible in the absence of hints, even if multiplicative losses are allowed (e.g., $K = [-1,1]$ and the losses are i.i.d. $g_t \in_{R} \{-1,1\}$; the expected loss of any algorithm is 0, while the best action in hindsight has expected loss $-\Omega(\sqrt{T})$).

%##########################################################
%##########################################################
%##########################################################
%##########################################################

\section*{Acknowledgements}

We thank Jacob Abernethy for discussions on the topics of this paper, and Willie Wong for the example in Section \ref{sec:smoothSmooth} of a norm that is 2-smooth but not twice differentiable.

\bibliographystyle{siam}
\bibliography{../online-lp-short}

%##########################################################
%##########################################################
%##########################################################
%##########################################################

\appendix

\section{Non-midpoint Strong Convexity} \label{app:nonMidpoint}

The following definition of curvature was used in~\citep{garberHazan}. 

	\begin{definition}[Non-midpoint Strongly Convex Sets] \label{def:nmpSC}
		Consider a convex body $C$ with the origin in its interior. 
		The convex body $K$ is \textbf{\emph{$\lambda$-non-midpoint strongly convex}} with respect to $C$ if for every $x,y \in K$ and every $z = \mu x + (1-\mu)y$ with $\mu \in [0,1]$ we have the containment $$z + 4\lambda \, \mu \, (1-\mu) \, \|x-y\|_C^2 \cdot C\subseteq K.$$
	\end{definition}
	
	It is clear every non-midpoint strongly convex set is strongly convex. The next lemma shows the other direction.
	
	\begin{lemma}
		$\lambda$-Strong convexity implies $\frac{\lambda}{2}$-non-midpoint strong convexity. 
	\end{lemma}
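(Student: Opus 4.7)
The plan is to invoke midpoint strong convexity exactly once to thicken the midpoint $z_{1/2} := \tfrac{x+y}{2}$ by the set $\lambda \|x-y\|_C^2 \cdot C$, and then express the target point $z = \mu x + (1-\mu) y$ as a plain convex combination of this thickened midpoint and one of the endpoints $\{x,y\}$. By convexity of $K$, no further application of strong convexity is needed after that first step; the thickness at $z_{1/2}$ will propagate to $z$ scaled by the weight we place on $z_{1/2}$.

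Concretely, by the hypothesis, for every $c \in C$ the point $w_c := z_{1/2} + \lambda \|x-y\|_C^2\, c$ lies in $K$. First consider $\mu \le 1/2$. I would verify the identity $z = (1-2\mu)\, y + 2\mu\, z_{1/2}$, which is a valid convex combination since $2\mu \in [0,1]$. Since $y, w_c \in K$, convexity gives $(1-2\mu)\, y + 2\mu\, w_c \in K$, and a direct expansion simplifies this to $z + 2\mu\, \lambda \|x-y\|_C^2\, c$. Letting $c$ range over $C$ yields $z + 2\mu\, \lambda \|x-y\|_C^2 \cdot C \subseteq K$. For $\mu \ge 1/2$, the symmetric identity $z = (2\mu-1)\, x + 2(1-\mu)\, z_{1/2}$ does the same job with $y$ replaced by $x$, giving $z + 2(1-\mu)\, \lambda \|x-y\|_C^2 \cdot C \subseteq K$. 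In both cases the only strong-convexity step is applied to the ordered pair $(x,y)$, so the gauge factor $\|x-y\|_C$ (rather than $\|y-x\|_C$) appears, as demanded by Definition~\ref{def:nmpSC}.

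Combining the two cases gives $z + 2\min(\mu,1-\mu)\, \lambda \|x-y\|_C^2 \cdot C \subseteq K$. Since $\min(\mu,1-\mu) \ge \mu(1-\mu)$ on $[0,1]$, and since $C$ is convex with $0$ in its interior (so $\rho' C \subseteq \rho C$ whenever $0 \le \rho' \le \rho$), we deduce $z + 2\lambda\,\mu(1-\mu) \|x-y\|_C^2 \cdot C \subseteq K$. This is exactly the $\tfrac{\lambda}{2}$-non-midpoint strong convexity condition, since $4 \cdot \tfrac{\lambda}{2} \cdot \mu(1-\mu) = 2\lambda\,\mu(1-\mu)$.

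The only subtle aspect — and the place where a naive argument fails — is the asymmetry of $\|\cdot\|_C$ combined with the need to recover the correct rate near $\mu = 0$ or $\mu = 1$. The tempting reflection trick, which writes $z$ as the midpoint of $y$ and $z' := 2\mu x + (1-2\mu) y$ and applies midpoint SC to the pair $(y,z')$, yields only $z + 4\lambda\mu^2 \|x-y\|_C^2 \cdot C \subseteq K$, which is weaker than the target $2\lambda\mu(1-\mu) \|x-y\|_C^2$ precisely when $\mu < 1/3$. The fix is to inject the strong-convexity thickness at the midpoint $z_{1/2}$ and then use a \emph{plain} convex combination (no strong convexity) to drag this thickness to $z$, producing a linear-in-$\min(\mu,1-\mu)$ surplus instead of the quadratic-in-$\min(\mu,1-\mu)$ surplus produced by reflection.
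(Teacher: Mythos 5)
Your proof is correct and follows essentially the same route as the paper's: apply strong convexity once at the midpoint $\frac{x+y}{2}$ and then use ordinary convexity of $K$ to drag the resulting thickened set to $z$ as a convex combination with one of the endpoints, finally using $\min(\mu,1-\mu)\ge\mu(1-\mu)$ and $0\in C$ to shrink to the required radius. Your extra care about the asymmetric gauge (always applying strong convexity to the ordered pair $(x,y)$) and the remark about why the reflection trick is insufficient are both accurate, but the core argument is the one in the paper.
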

	
	\begin{proof}
		Consider a $\lambda$-strongly convex set $K$ with respect to $C$. Consider any pair of points $x,y \in K$ and $z = \mu x + (1-\mu) y$ with $\mu \in [0,1]$. Let $\e := \|x - y\|_C$. By symmetry, assume without loss of generality that $\mu \le \frac{1}{2}$. Let $m = \frac{x + y}{2}$ be the midpoint of $x$ and $y$.	By assumption, $m + \lambda \e^2 C \subseteq K$. 
		
		We claim that the set $z + 2 \lambda \mu \e^2 C$ is contained in the convex hull of $m + \lambda \e^2 C$ and $y$; convexity of $K$ implies that $K$ also contains this set, which would conclude the proof. To prove the claim, note we can write $z = y + 2\mu (m - y)$, which equals $(1 - 2 \mu) y + 2 \mu m$. The convex combination between $y$ and $m + \lambda \e^2 C$ with coefficient $(1- 2\mu)$ (recall that by assumption $2\mu \in [0,1]$) is precisely $(1-2\mu) y + 2 \mu (m + \lambda \e^2 C) = z + 2 \mu \lambda \e^2 C$, hence the right-hand side belongs to the convex hull of $y$ and $m + \lambda \e^2 C$, as desired. This concludes the proof.   
	\end{proof}

%##########################################################
%##########################################################
%##########################################################
%##########################################################	
	
	\section{Proof of Lemma \ref{lemma:logEstimate}}
	
		Let $t_0$ be the first index where $a_{t_0} > \frac{A}{T}$ (let $t_0 = T$ if no such exists). Since $\frac{a_t}{b_t} \le 1$ and $\frac{a_t^2}{b_t} \le A$, we have 
		\begin{align}
			\sum_{t \le t_0} \frac{a_t^2}{b_t} \le A + \sum_{t < t_0} \frac{a_t^2}{b_t} \le A + \frac{A}{T} \sum_{t < t_0} \frac{a_t}{b_t} \le 2A.  \label{eq:trick1}
		\end{align}
		To upper bound the contribution of the indices $t > t_0$, notice by concavity of $\log$ that $x \le \frac{\log (1 + x)}{\log 2} \le 1.5 \log(1+x)$ for $x \in [0,1]$. Applying this inequality for $t > t_0 \ge 1$ we get 
		\begin{align*}
			\frac{a_t}{b_t} \le 1.5 \log\left(1 + \frac{a_t}{b_t}\right) \le  1.5 \log\left(1 + \frac{a_t}{b_{t-1}}\right)   = 1.5 \log \left(\frac{b_t}{b_{t-1}}\right). 
		\end{align*}
		Using the upper bound $A$ and adding over $t > t_0$, the $\log$'s telescope and 
		\begin{align}
			\sum_{t = t_0 + 1}^T \frac{a_t^2}{b_t} \le A \sum_{t = t_0 + 1}^T \frac{a_t}{b_t} \le 1.5 A \log \left(\frac{b_T}{b_{t_0}}\right) \le 3 A \log T, \label{eq:trick2}
		\end{align}
		where the last inequality uses $b_T \le A \cdot T$ and $b_{t_0} \ge a_{t_0} > \frac{A}{T}$. Adding the bounds \eqref{eq:trick1} and \eqref{eq:trick2} concludes the proof.

\end{document}